\PassOptionsToPackage{x11names}{xcolor}
\documentclass[fontsize=12pt,leading=14pt]{fullpaper}

\usepackage{subcaption}

\usetikzlibrary{decorations.pathmorphing, patterns, decorations.pathreplacing, decorations.shapes, decorations.markings, arrows.meta}

\tikzstyle{walkage}=[-latex, thick, decorate, decoration = {coil, aspect = 1.2}]
\tikzstyle{linkage}=[-latex, thick, decorate, decoration = {snake, segment length = 7mm}]

\newcommand{\cref}{\zcref}
\newcommand{\labelcref}{\zcref[noname]}
\newcommand{\citelink}[2]{{#2}}

\newcommand{\N}{\mathbb{N}\xspace}

\newcommand{\Acal}{\mathcal{A}\xspace}
\newcommand{\Bcal}{\mathcal{B}\xspace}
\newcommand{\Ccal}{\mathcal{C}\xspace}

\newcommand{\Lcal}{\mathcal{L}\xspace}
\newcommand{\Ocal}{\mathcal{O}\xspace}

\newcommand{\Qcal}{\mathcal{Q}\xspace}
\newcommand{\Scal}{\mathcal{S}\xspace}

\newcommand{\Tcal}{\mathcal{T}\xspace}
\newcommand{\Wcal}{\mathcal{W}\xspace}
\newcommand{\Zcal}{\mathcal{Z}\xspace}

\newcommand{\wLP}{\fref[w]{prop:linked-path}}
\newcommand{\lISM}{\fref[\ell]{prop:independent-set-multipartite}}
\newcommand{\lISMold}{\fref[\ell]{prop:independent-set-multipartite-non-algo}}
\newcommand{\cCM}{\fref[c]{prop:clique-minor}}
\newcommand{\dCM}{\fref[d]{prop:clique-minor}}
\newcommand{\cCMold}{\fref[c]{prop:clique-minor-non-algo}}
\newcommand{\dCMold}{\fref[d]{prop:clique-minor-non-algo}}
\newcommand{\dP}{\fref[d]{prop:partition}}
\newcommand{\wCW}{\fref[w]{lem:closed-walkage}}
\newcommand{\wUW}{\fref[w]{lem:untangled-walkage}}
\newcommand{\wSW}{\fref[w]{cor:structured-walkage}}
\newcommand{\dCE}{\fref[d]{cor:cyclic-extension}}

\newcommand{\wCWS}{\fref[w]{lem:cyclic-walk-system}}
\newcommand{\lCWS}{\fref[\ell]{lem:cyclic-walk-system}}
\newcommand{\lFCWS}{\fref[\ell]{lem:fully-cyclic-walk-system}}
\newcommand{\lB}{\fref[\ell]{lem:final-bramble}}
\newcommand{\dB}{\fref[d]{thm:main-proof}}

\newcommand{\dBCCM}{\fref[d]{thm:main-proof-congestion-4-masarik}}
\newcommand{\dBCM}{\fref[d]{thm:main-proof-congestion-6-masarik}}
\newcommand{\wSWC}{\fref[w]{lem:structured-walkage-congestion}}
\newcommand{\dCEC}{\fref[d]{lem:untangled-cyclic-extension-congestion}}

\newcommand{\wSB}{\fref[w]{lem:sparse-bramble}}
\newcommand{\lFBCC}{\fref[\ell]{lem:final-bramble-clean-4}}
\newcommand{\lFBC}{\fref[\ell]{lem:final-bramble-clean-6}}
\newcommand{\dHIP}{\fref[d]{thm:half-integral-packing}}

\DeclareMathOperator{\tw}{\mathsf{tw}}
\DeclareMathOperator{\dtw}{\mathsf{dtw}}
\DeclareMathOperator{\bn}{\mathsf{bn}}

\DeclareMathOperator{\degen}{\mathsf{d}}
\DeclareMathOperator{\intersec}{\mathrm{I}}
\DeclareMathOperator{\stLink}{\mathsf{start}}
\DeclareMathOperator{\enLink}{\mathsf{end}}
\DeclareMathOperator{\walks}{Walks}
\DeclareMathOperator{\pairs}{Pairs}
\DeclareMathOperator{\inOp}{in}
\DeclareMathOperator{\outOp}{out}

\usetikzlibrary{mindmap,positioning}
\usetikzlibrary{graphs}
\usetikzlibrary[graphs]
\usetikzlibrary{patterns}
\usetikzlibrary{arrows,automata,positioning}
\usetikzlibrary{decorations.pathreplacing}
\usetikzlibrary{decorations}
\usetikzlibrary{decorations.pathmorphing}
\usetikzlibrary{shapes.geometric}
\usetikzlibrary{quotes}
\usetikzlibrary{shapes.callouts}
\usetikzlibrary{arrows.meta}
\usetikzlibrary{calc}
\usetikzlibrary{fit}
\usetikzlibrary{backgrounds}

\tikzset{faded/.style={gray,very thin}}
\tikzset{vertex/.style={draw,circle,minimum size=5pt,inner sep=0pt}}
\tikzset{novertex/.style={circle,minimum size=5pt,inner sep=0pt}}
\tikzset{blackvertex/.style={draw,circle,minimum size=5pt,inner sep=0pt, fill=black}}
\tikzset{redvertex/.style={draw,circle,minimum size=5pt,inner sep=0pt, fill=red}}
\tikzset{redvertexfaded/.style={draw,circle,faded,minimum size=5pt,inner sep=0pt, fill=red!50}}
\tikzset{greenvertex/.style={draw,circle,minimum size=5pt,inner sep=0pt, fill=green}}
\tikzset{greenvertexfaded/.style={draw,circle,faded,minimum size=5pt,inner sep=0pt, fill=green!50}}
\tikzset{bluevertex/.style={draw,circle,minimum size=5pt,inner sep=0pt, fill=blue}}
\tikzset{bluevertexfaded/.style={draw,circle,faded,minimum size=5pt,inner sep=0pt, fill=blue!50}}
\tikzset{yellowvertex/.style={draw,circle,minimum size=5pt,inner sep=0pt, fill=yellow}}
\tikzset{yellowvertexfaded/.style={draw,circle,faded,minimum size=5pt,inner sep=0pt, fill=yellow!50}}
\tikzset{arrow/.style={-{Latex[scale=1]},shorten >= 1pt}}
\tikzstyle{process} = [rectangle, minimum width=3cm, minimum height=1cm, text centered, draw=black, text width=2.8cm]
\tikzstyle{accept} = [rectangle, minimum width=3cm, minimum height=1cm, text centered, draw=black, fill=blue!20, text width=2.8cm]

\tikzset{edge/.style = {->,> = latex'}}

\tikzset{snake it/.style={decorate, decoration=snake}}
\tikzset{big snake/.style={decorate, decoration={snake,segment length=7mm, amplitude=3mm}}}

\title{Polynomial directed treewidth forces a large bramble with congestion two}
\author{
  \authorname{Victor A. Campos}
  \affiliation{ParGO group, Universidade Federal do Ceará, Fortaleza, Brazil.}
  \authorname{Raul Lopes}
  \affiliation{Hamburg University of Technology, Institute for Algorithms and Complexity, Hamburg, Germany.}
  \authorname{Rodrigo Nogueira}
  \affiliation{ParGO group, Universidade Federal do Ceará, Fortaleza, Brazil.}
  \authorname{Ignasi Sau}
  \affiliation{LIRMM, Université de Montpellier, CNRS, Montpellier, France.}
}
\shorttitle{Polynomial directed treewidth forces a large bramble with congestion two}
\shortauthors{V.\,A. Campos, R. Lopes, R. Nogueira, and I. Sau}

\date{}

\begin{document}

\maketitle

\begin{abstract}
  Brambles of constant congestion in directed graphs have interesting structural and algorithmic applications, and it is desirable to find the smallest function $f_c(k)$ such that if a digraph has directed treewidth at least $f_c(k)$, then it contains a bramble of size at least $k$ and congestion at most $c$.
  By the Directed Grid Theorem of \citelink{KawarabayashiK15}{Kawarabayashi and Kreutzer~[STOC 2015]}, such a  function $f_c(k)$ exists for every value $c \geq 2$ of the congestion.
  However, even using the recent improvements due to~\citelink{HatzelKMM24}{Hatzel, Kreutzer, Milani, and Muzi~[FOCS 2024]}, the resulting function is a power tower of height~$22$.
  Recently, \citelink{MasarikPilipczukRzazewskiSorge22}{Masa{\v{r}}ík, Pilipczuk, Rz{\k{a}}\.{z}ewski, and Sorge~[SIAM J.\ Discrete Math.\ 2022]} proved that, for congestion~$8$, the function $f_8(k)$ can be made polynomial; more precisely, they showed that $f_8(k) = \Ocal(k^{48} \cdot \log^{13} k)$.
  They asked whether it is possible to decrease the congestion, ideally to~$2$, even at the cost of a higher polynomial function.
  In this work we answer their question affirmatively with the best possible value of the congestion, by proving that $f_2(k) = \Ocal(k^{96}\cdot \log^{23} k)$.
  We also obtain smaller polynomials for higher congestion, namely $f_4(k) = \Ocal(k^{72} \cdot \log^{21} k)$ and $f_6(k) = \Ocal(k^{66} \cdot \log^{18} k)$.
  Our results are inspired by the proof for congestion~$8$, which we improve by introducing~\emph{walkages}.
  Walkages allow us to avoid technicalities of the original proof by translating ideas due to \citelink{RW12}{{Reed and Wood~[Eur.\ J.\ Comb.\ 2012]}} that work for undirected graphs to the directed setting.
  Our proofs are constructive and naturally yield polynomial-time algorithms to find the brambles assuming a certificate for large directed treewidth is given.
  Moreover, we discuss algorithmic applications of our results for the \textsc{Directed Disjoint Paths with Congestion} problem, and to half-integral packings of subgraphs with large directed treewidth.
\end{abstract}

\metadatum{Keywords}{digraphs, directed treewidth, brambles, congestion, directed disjoint paths.}

\metadatum{AI disclosure}{All the work that lead to this paper is entirely due to the authors, including posing the problems, finding the solutions, writing the text, and verifying its correctness. We have not used AI tools of any kind.}

\clearpage

\section{Introduction}\label{sec:intro}

\paragraph{Brambles in undirected graphs.} Undirected treewidth, denoted by $\tw$, is undoubtedly one of the most successful graph parameters, both from a structural~\cite{RobertsonS04} and an algorithmic~\cite{Courcelle90} point of view.
One of the key results in the celebrated Graph Minors project of Robertson and Seymour is the Grid Theorem~\cite{RobertsonST94}, stating that a grid minor is the canonical obstruction to having small treewidth.
More precisely, they proved  that there exists a function $f_{\sf GT}(k)$ such that if a graph $G$ satisfies $\tw(G)\geq f_{\sf GT}(k)$, then $G$ contains a grid of order $k$ as a minor.
In a breakthrough result, \textcite{ChekuriC16} reduced the original upper bound on $f_{\sf GT}(k)$ from nonelementary to polynomial.
The currently smallest function has been found by \textcite{ChuzhoyT21}, who proved that $f_{\sf GT}(k) = \Ocal(k^9\cdot\polylog k)$ suffices.

A notion tightly related to undirected treewidth is that of a \emph{bramble}, which can be seen as a relaxation of a grid, in the sense that it is a less structured object that still guarantees ``high global connectivity''.
A bramble $\Bcal$ in an undirected graph $G$ is a collection of connected subgraphs such that for any $B_1,B_2 \in \Bcal$, $B_1$ and $B_2$ intersect or $G$ contains an edge from $B_1$ to $B_2$.
The \emph{size} of a bramble is its number of elements, and its ``complexity'' measure is its \emph{order}, defined as the minimum size of a vertex set of $G$ that intersects all of  its elements.
The importance of brambles stems from the fact that, as proved by \textcite{SeymourT93}, they are an object dual to treewidth: the maximum order of a bramble in a graph is exactly its treewidth plus one.

The \emph{congestion} of a bramble is the maximum number of subgraphs in $\Bcal$ which contain some vertex of $G$.
Note that the size of a bramble is upper-bounded by its order times its congestion.
It is worth mentioning that, if the congestion is unbounded, there may be an exponential gap between the size and the order of a bramble.
Indeed, in some classes of graphs, such as constant-degree expanders, it is known that brambles of high order need to have exponential size~\cite{HatzelKPS22,GroheM09}.

Since treewidth also appears naturally when solving a plethora of graphs problems, and many of them can be solved efficiently once the treewidth is bounded~\cite{Courcelle90}, it is desirable to have a ``reasonable'' upper bound, ideally polynomial, on the value of the treewidth that guarantees the existence of a large bramble of constant congestion.
This can be achieved via the polynomial version of the Grid Theorem discussed above~\cite{ChuzhoyT21}: if the treewidth of a graph is $\Omega(k^9\cdot\polylog k)$, then it contains a grid of order $k$ as a minor, which contains a bramble of size $k$ and congestion~$2$.
Alternative proofs without using the Grid Theorem have also been found.
The current smallest bound shows that if $\tw(G) = \Omega(k^2 \cdot \polylog k)$, then $G$ contains a bramble of size $k$ and congestion~$2$~\cite{HatzelKPS22}.
\textcite{RW12}, on the other hand, show that if $\tw(G) = \Omega(k^4 \sqrt{\log k})$, then $G$ contains a grid-like minor of order $k$, which contains a slightly more organized bramble of size $k$ and congestion~$2$, given by a set of paths whose intersection graph is bipartite and contains a $K_k$-minor.

\paragraph{Brambles in directed graphs.} In directed graphs, which are the focus of this paper, most of what we discussed so far finds its appropriate counterpart, but several issues turn out to be considerably more involved.
\textcite{Johnson2001} defined directed treewidth, denoted by $\dtw$, and conjectured an analog of the Grid Theorem, where the obstruction would be a \emph{cylindrical grid} as a \emph{butterfly minor}.
This conjecture was proved by \textcite{KawarabayashiK15}, with a non-elementary function $f_{\sf DGT}(k)$ on the directed treewidth to guarantee a cylindrical grid of order $k$.
The bound on $f_{\sf DGT}(k)$ was recently improved in~\cite{HatzelKMM24} to a power tower of height 22.
Obtaining a polynomial bound for the function $f_{\sf DGT}$ remains a challenging open problem.

A \emph{bramble} $\Bcal$ in a digraph $D$ is defined as one may expect: a collection of strongly connected subgraphs such that, for any $B_1,B_2 \in \Bcal$, $B_1$ and $B_2$ intersect or $D$ contains an arc from $V(B_1)$ to $V(B_2)$ and an arc from $V(B_2)$ to $V(B_1)$.
The \emph{size}, \emph{order}, and \emph{congestion} of a bramble are defined as in the undirected case.
While brambles are not ``exact'' duals to directed treewidth, they are approximate ones, as proved by \textcite{Reed99}.
For a better bound, combining~\cite[Statement~(1.1)]{Johnson2001} and~\cite[Lemma~6.4.20]{KreutzerOrdyniak14} we get that if $D$ is a digraph with $\dtw(D) \geq 2k$, then $D$ contains a bramble of order $k$.

In directed graphs, where the toolbox of algorithmic techniques is not as rich as in undirected graphs, brambles of constant congestion are peculiarly useful.
In particular, they can be used for routing problems and the widely studied \textsc{Directed Disjoint Paths with Congestion} problem~\cite{BentertCHKKS26,CamposSLS25,KawarabayashiKK14,GiannopoulouKKK22,ChekuriEP18,Edwards2017}.
Thus, it is natural to ask, for each value $c \geq 2$, what is the smallest function $f_c(k)$ such that any digraph $D$ with $\dtw(D) \geq f_c(k)$ contains a bramble of size $k$ and congestion at most $c$ (recall that for brambles of constant congestion, size and order are related by a constant factor).
We remark that $f_1(k)$ does not exist as a cylindrical grid has arbitrarily large directed treewidth, but it does not contain a large bramble whose elements are pairwise disjoint.
On the other hand, $f_c(k)$ exists and is upper bounded by $f_{\sf DGT}(k)$ for every $c \geq 2$, as a cylindrical grid of order $k$ as a butterfly minor implies a bramble of size $k$ and congestion $2$ in the original digraph.
Since even the elementary upper bound on $f_{\sf DGT}(k)$ of~\cite{HatzelKMM24} is still huge, it is relevant to look for ``reasonable'' bounds for $f_c(k)$ for $c \geq 2$, ideally polynomial as in the undirected setting.

It is worth mentioning that a half-integral variant of the Directed Grid Theorem~\cite{KawarabayashiKK14} could be used to obtain a bramble of size $k$ and congestion $4$, but the resulting upper bound on $f_4(k)$ is also very fast-growing.
Restricted to planar digraphs, \textcite{HatzelKK19} showed a polynomial bound with degree $6$ for the Directed Grid Theorem, implying the same asymptotic bound for the function $f_c(k)$ for every $c \geq 2$, restricted to planar digraphs.

The main starting point of this article is~\cite{MasarikPilipczukRzazewskiSorge22}, which provides a polynomial bound for congestion~$8$ in general digraphs by proving that $f_8(k) = \Ocal(k^{48} \cdot \log^{13} k)$.
The authors asked whether it is possible to decrease the congestion, ideally to $2$, even at the cost of a higher polynomial function.

\paragraph{Our contribution.} We answer the question of the authors of~\cite{MasarikPilipczukRzazewskiSorge22} positively, by proving the following result.

\begin{theorem}\label{thm:main-proof-short}
  There exists a function $f\colon \N \to \N$ with $f(k)=\Ocal(k^{96} \cdot \log^{23} k)$ such that any digraph $D$ with $\dtw(D) \ge f(k)$ contains a bramble of size $k$ and congestion $2$.
\end{theorem}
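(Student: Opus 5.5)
We follow the outline of the congestion-$8$ proof of \textcite{MasarikPilipczukRzazewskiSorge22}, but replace its linkage-based intermediate objects by \emph{walkages}, so that the undirected argument of \textcite{RW12} carries over to digraphs. The aim is to build a family of walks whose ``intersection pattern'' contains a large clique minor with disjoint, well-separated branch sets. From that we read off a bramble in which every vertex is used by at most two elements.

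\emph{Step 1: a well-linked set and a path system.} Since $\dtw(D)\ge f(k)$, $D$ has a large well-linked set $W$; this follows from the bramble/treewidth duality of \textcite{Reed99} and \textcite{Johnson2001}. We then apply the linked-path proposition (with parameter $\wLP$) to obtain a long path $P$ in $D$ together with many pairwise disjoint segments of $P$ that are pairwise well-linked. By Menger, any two segments are joined in both directions by large linkages. We refine these linkages into a \emph{walkage}, i.e.\ a family of walks that may share vertices but has each vertex of $D$ on at most two walks. Using the closed-walkage and untangled-walkage lemmas ($\wCW$, $\wUW$, and the structured-walkage corollary $\wSW$), we make the walkage \emph{untangled}: every walk starts at one segment and ends at another, and the walks are organised consistently with the order along $P$. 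The polynomial losses here enter through the Menger and pigeonhole steps.

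\emph{Step 2: a cyclic walk system and its intersection graph.} Next we close the walks up into cycles, using the cyclic-extension corollary ($\dCE$) and the cyclic-walk-system lemmas ($\wCWS$, $\lCWS$, $\lFCWS$). The result is a family of closed walks $C_1,\dots,C_m$, each vertex lying on at most two of them, whose intersection graph contains a dense multipartite structure. On this multipartite intersection graph we apply the independent-set proposition ($\lISM$) to find a large independent set. We then apply the clique-minor proposition ($\cCM$, $\dCM$) to find a $K_k$-minor whose branch sets are connected sets of closed walks. This is the directed analogue of the Reed--Wood grid-like minor argument, and it plays the same role as the bipartite intersection graph with a $K_k$-minor in their undirected proof.

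\emph{Step 3: assembling the bramble.} For each branch set we take the union of its closed walks. This union is strongly connected, because it is connected in the intersection graph and every walk is closed. Two such unions either touch, or are adjacent in the minor; in the latter case they share a vertex, which is stronger than the bramble requirement of arcs in both directions. The final-bramble lemma ($\lB$) then states that the resulting bramble has size $k$. The congestion equals the maximum number of branch sets containing a vertex. Branch sets are disjoint as sets of walks, and each vertex lies on at most two walks, so the congestion is at most $2$.

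Composing the polynomial losses of each step gives $f(k)=\Ocal(k^{96}\log^{23}k)$. The main obstacle is Step 1. A vertex can occur on walks coming from several different Menger linkages, and we must keep the multiplicity at every vertex at two. In \textcite{MasarikPilipczukRzazewskiSorge22} this multiplicity grows to $8$ because linkages are overlaid. We would first try to reroute inside the union of two linkages, via the untangling lemma, so that overlaps are only ever pairwise and each overlap can be charged to a single branch set.
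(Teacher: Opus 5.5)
\textbf{There is a genuine gap.} At the top level your route matches the paper's: a path with a well-linked set, then untangled walkages, then closed walks, then a Reed--Wood-style clique minor. But the mechanism that keeps the congestion at $2$ is missing. Your fix for the main obstacle (rerouting inside unions of two linkages so overlaps are pairwise and chargeable to branch sets) is not an argument.

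The trouble starts with your definition. You take a walkage to be a family of walks with every vertex on at most two of them. In the paper a walkage consists of pairwise vertex-disjoint walks; what walks buy over paths is that a single walk may revisit vertices, so it keeps well-linked vertices that shortcutting to a path would discard. If the objects you carry forward already have congestion $2$, closing them up with further walkages and taking unions pushes the congestion past $2$: a union of two congestion-$2$ families only gives $4$ in your Step~3 count.

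The paper never carries a congested object forward. Whenever a family of closed walks with congestion $2$ arises, it applies the dichotomy of \cref{lem:closed-walkage}. Such families are the cycles formed by a walkage and a back-walkage from its ends to its starts (\cref{lem:structured-walkage-congestion}), and the bowtie closed walks behind \cref{cor:cyclic-extension}. Either the intersection graph has degeneracy at least $\dCM(k)$, and \cref{prop:dense-bramble} already yields a bramble of size $k$ and congestion $2$. Or a greedy independent set gives many pairwise disjoint closed walks. So congestion $2$ only appears in families that are immediately turned into the bramble or filtered down to disjoint ones, and in the final union of two walkages.

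Your Step~2 also does not describe an object from which a $K_k$-minor can be read off, and the order ``independent set, then clique minor'' cannot work. Walks forming an independent set of the intersection graph are pairwise disjoint, so no minor is left to find. The paper's target is an $(\lB(k),k)$-strong system: pairwise disjoint closed walks $W_1,\dots,W_k$, plus, for each pair $\{i,j\}$, a closed walkage $\Lcal_{ij}$ whose closed walks all meet $W_i$ and $W_j$. Building it is the real work:
\begin{itemize}
\item \cref{lem:cyclic-walk-system} closes each segment of $P$ by a back-walkage from its last to its first well-linked vertices. The middle well-linked vertices survive on the closed walk, so it can later be linked to the other closed walks.
\item \cref{lem:fully-cyclic-walk-system} turns the walkages between $W_i$ and $W_j$ into closed ones, using walkages to concatenate zigzags without losing well-linked vertices.
\end{itemize}
The two propositions are then used in separate cases. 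If some $\intersec(\Lcal_{ij}\cup\Lcal_{i'j'})$ has degeneracy at least $\dCM(k)$, \cref{prop:dense-bramble} finishes. Otherwise \cref{prop:independent-set-multipartite} selects pairwise disjoint $Z_{ij}\in\Lcal_{ij}$, and $\intersec(\Wcal\cup\Zcal)$ contains a $1$-subdivision of $K_k$. That subdivision, not \cref{prop:clique-minor}, is where the clique minor comes from.

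Smaller points:
\begin{itemize}
\item The initial untangled walkage needs no lemma: cut $P$ into segments, and the threads are the pieces of $P$ in between.
\item ``Untangled'' means admitting an untangled threaded extension, not walks running between segments in the order of $P$.
\item You only assert the bound. It comes from applying $\wLP$ to $\wCWS(k,k)\cdot\lCWS(k,\lFCWS(k,\lB(k)),k)=\Ocal(k^{48}\log^{23/2}k)$.
\end{itemize}
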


We remark that the produced bramble is \emph{strong}, in the sense that all bags pairwise intersect.
This is a more restrictive notion than the one of bramble defined above and, contrary to the latter, it is closed under \emph{butterfly minors}.
\cref{thm:main-proof-short} can be summarized as $f_2(k) = \Ocal(k^{96}\cdot \log^{23} k)$.
This provides an immediate polynomial upper bound to $f_c(k)$ for all $c\ge 2$.
As previously observed, this is the best possible congestion, as $f_1(k)$ does not exist.
We obtain smaller polynomials for higher values of the congestion, namely $f_4(k) = \Ocal(k^{72} \cdot \log^{21} k)$ (see \cref{thm:main-proof-congestion-4-masarik}) and $f_6(k) = \Ocal(k^{66} \cdot \log^{18} k)$ (see \cref{thm:main-proof-congestion-6-masarik}).
To obtain these, we review the techniques in~\cite{MasarikPilipczukRzazewskiSorge22} and improve them by simplifying the case analysis and reducing the congestion in intermediate structures.
We remark, however, that these techniques do not improve the bound $f_8(k) = \Ocal(k^{48} \cdot \log^{13} k)$ of~\cite{MasarikPilipczukRzazewskiSorge22}.
We give a more detailed overview in \cref{sec:proof-overview}.

\textcite{LS26} showed how to turn the main proof of~\cite{MasarikPilipczukRzazewskiSorge22} into a polynomial-time algorithm that finds the desired bramble of size $k$ and congestion at most $8$, assuming that a large obstruction to directed treewidth (namely, a path system) is given.
We observe that the same holds for our results since, as is the case in~\cite{LS26}, the main bottleneck is on the application of the Lovász Local Lemma~\cite{LE75}.
For simplicity, we elect to apply the algorithmic version of the Lovász Local Lemma due to \textcite{Harris2022} that maintains the same bounds on the parameter as in the original application due to \textcite{RW12}.
This is done through \cref{prop:independent-set-multipartite}.

\paragraph{Applications.} We now present a list of applications of \cref{thm:main-proof-short}.
These topics are further discussed in \cref{sec:main-proof}.

In the \textsc{Directed Disjoint Paths with Congestion} $c$ problem, we are given a digraph $D$ and $k$ pairs $(s_1, t_1), \ldots, (s_k, t_k)$ of vertices of $D$, and the goal is to find, for each $i \in [k]$, a path $P_i$ from $s_i$ to $t_i$ in $D$ such that every vertex of $D$ belongs to at most $c$ of these paths (see \cref{sec:ddp-congestion}).
For integers $k,c$, we say that a digraph $D$ is \emph{$(k,c)$-linked} if every instance of \textsc{Directed Disjoint Paths with Congestion} $c$ on $D$ with $k$ pairs of vertices is positive.
A digraph is \emph{$k$-strong} if it has at least $k+1$ vertices and it remains strongly connected after the removal of any set of fewer than $k$ vertices (see \cref{sec:pre} for a formal definition).
\textcite{Edwards2017} showed that there is a (huge) function $L(k)$ such that every $L(k)$-strong digraph is $(k, 2)$-linked.
\textcite{LS26} proved that there is \textsl{polynomial} function $g(k)$ such that every $g(k)$-strong digraph is $(k, 8)$-linked, and that a corresponding solution can be found in polynomial time.
Combining \cref{thm:main-proof-short} with~\cite{Edwards2017}, it follows that  there is a \textsl{polynomial} function $g(k)$ such that every $g(k)$-strong digraph is $(k, 2)$-linked, and that a corresponding solution can be found in polynomial time, as path systems can be found in polynomial time in this case~\cite[Proposition~10]{LS26}, thus reducing the super exponential bound $L(k)$ in~\cite{Edwards2017}.
We remark that such a result cannot be reduced to congestion $1$ as \textcite{Thomassen1991} showed that \textsc{Directed Disjoint Paths} is \NP-complete even for $k=2$ paths in $p$-strong digraphs, for any integer $p \geq 1$.

Almost immediately from the above paragraph, we also get an \textsf{XP} algorithm with parameter $k$ to solve \textsc{Directed Disjoint Paths with Congestion} $2$ in $n^{\poly(k)}$ time for $k$-strong digraphs.
Prior to our work, only $n^{f(k)}$ for a huge function $f$ was known due to the dependence on cylindrical grids as routing structures.

Also as a direct corollary of \cref{thm:main-proof-short}, we immediately get a stronger version of~\cite[Theorem~5]{MasarikMuziPilipczukRzazewskiSorge21}, where the authors show that a digraph $D$ with $\dtw(D) = \Omega(a^6  b^6 \cdot \polylog (a  b))$ contains $a$ subgraphs of directed treewidth at least $b$ such that each vertex of $D$ appears in at most $4$ of these subgraphs.
With \cref{thm:main-proof-short} we can improve the congestion from $4$ to $2$, although we get a worse polynomial dependence than that of~\cite[Theorem~5]{MasarikMuziPilipczukRzazewskiSorge21}.

\paragraph{Further research.} A natural research direction is to close the gap of the function $f_c(k)$ for every $c \geq 2$.
Since directed treewidth generalizes treewidth~\cite[Statement 2.1]{Johnson2001}, the bounds by \textcite{GroheM09} apply.
In particular, the authors of~\cite{GroheM09} show that, for every $k$, there is an undirected graph $G$ of treewidth $\Omega(k)$ such that every bramble of $G$ of order $k^{1/2 +\varepsilon}$ has size exponential in the number of vertices of $G$.
Hence, for any constant $c \geq 2$, we cannot hope for a better bound than $f_c(k) = \Ocal(k^2 \cdot \polylog k)$.

We are also interested in identifying classes of digraphs for which $f_1(k)$ is defined, and when it can be made polynomial.
For example, consider \emph{semicomplete} digraphs, which are digraphs containing at least one arc between any pair of vertices.
In~\cite{GKRW21} it is shown that the directed treewidth and the directed pathwidth of semicomplete digraphs are functionally equivalent, and \textcite{FradkinS2013} showed that every semicomplete digraph of directed pathwidth doubly exponential in $k$ contains a \emph{$k$-triple} which, in turn, contains a bramble of size $k$ and congestion~$1$.
The double exponential is due to the need of finding large transitive tournaments in localized structures, and reliance on a Ramsey-type argument.
It is not known if this bound can be made polynomial, or even reduced to a single exponential, if we request a bramble of size $k$ and congestion~$1$ instead of a $k$-triple.

In undirected graphs, a slightly more organized bramble with congestion $2$, namely two families of pairwise vertex-disjoint paths with an intersection graph containing a large clique minor with size depending quadratically on the treewidth, has been shown to exist by~\textcite{RW12}.
We ask whether similar ``grid-like'' minors can be obtained in digraphs while maintaining a polynomial dependency on the directed treewidth.

Finally, the overarching question is whether a polynomial bound suffices to build cylindrical grids.
The original proof of the Directed Grid Theorem by~\textcite{KawarabayashiK15}, whose overall structure is inherited by the authors in~\cite{HKMM24}, starts from a path system and tries to clean it into a new system where the linkages (that is, collections of pairwise vertex-disjoint paths) intersecting the base path $P$ are vertex-disjoint and intersect $P$ only in prescribed segments.
If this cleaner system is not found, in~\cite{HKMM24} the authors find another intermediary structure in the construction, known as a \emph{well-linked web}, which consists of a grid-like structure formed by two interweaving linkages $\mathcal{P,Q}$ with the additional property that the union of the starting and ending vertices of $\mathcal{Q}$ forms a well-linked set.
We remark that, with our methods, it is possible to use webs to produce a shorter construction of brambles of size $k$ and congestion $2$ in digraphs of directed treewidth roughly $2^k$.
Although we were not able to extend this strategy into a construction with polynomial dependence on $k$, we hope that our ideas can be applied to shave a part of the current known bound for cylindrical grids, consisting of a tower of exponentials of height 22, by circumventing webs in the proof.

\paragraph{Organization.}
\cref{sec:proof-overview} gives a detailed overview on how our proof works and how it relates to those in~\cite{MasarikPilipczukRzazewskiSorge22,RW12}.
\cref{sec:pre} introduces basic concepts of graphs and digraphs, parameterized complexity, and state some preliminary results.
\cref{sec:walkages} introduces \emph{walkages} and related structures, which are fundamental tools in our construction, and shows how we manipulate them throughout this paper.
In \cref{sec:walk-systems}, we formally define \emph{strong systems}, those being highly connected structures that contain large brambles of congestion~$2$, and show how they can be built from particular types of walkages.
In \cref{sec:main-proof}, we state and prove the complete version of \cref{thm:main-proof-short} and discuss applications of our result.
\cref{sec:higher-congestion-with-masarik} adapts the proof in~\cite{MasarikPilipczukRzazewskiSorge22} using our methods in order to produce brambles with congestion $4$ and $6$ with a smaller dependency on the directed treewidth than that of \cref{thm:main-proof-short}.

\section{Proof overview}\label{sec:proof-overview}
We draw inspiration from the proofs in~\cite{MasarikPilipczukRzazewskiSorge22}.
Our main goal is to translate the main proof in~\cite{RW12}, which only works for undirected graphs, to digraphs to simplify the analysis done in~\cite{MasarikPilipczukRzazewskiSorge22}.
A common process used in both proofs is to analyze the degeneracy of the \emph{intersection graph} of a family of walks $\mathcal{W}$, which is defined as the undirected graph with one vertex for each walk in $\mathcal{W}$ and an edge between any pair of intersecting walks.
We first review how their proofs work.
Some of the intuitive sentences that appear in this section can be found again closer to the corresponding proofs in the next sections, as well as all formal definitions of objects informally defined here.

\paragraph{Overview of the main proof in~\cite{MasarikPilipczukRzazewskiSorge22}.}
Given a digraph $D$ and $A, B \subseteq V(D)$ with $|A| = |B|$, a \emph{linkage}  from $A$ to $B$ is a set of $|A|$ pairwise vertex-disjoint paths from $A$ to $B$ in $D$.
A set $X \subseteq V(D)$ is \emph{well-linked} if, for all $A, B \subseteq X$ with $|A| = |B|$, there is a linkage of size $|A|$ from $A$ to $B$ in $D$.

A \emph{path system} of order $k$, informally, is a set of base paths $P_1, \ldots, P_k$ together with a linkage $\mathcal{L}_{ij}$ of size $k$ from $V(P_i)$ to $V(P_j)$, for all $i,j \in [k]$.
Path systems are known obstructions to bounded directed treewidth, and they can be built from a path $P$ containing a well-linked set $A$ of size $\Theta(k^2)$.
The construction of $P$ and $A$ can be found in \FPT{} time parameterized by the size of $A$ assuming large directed treewidth~\cite{CLMS22} (cf.\ \cref{prop:linked-path}).

Starting from a path system, they build \emph{threaded linkages} by interleaving a linkage $\{R_1, \ldots, R_\ell \}$ with $[\ell-1]$ extra paths, the \emph{threads}, linking the end of $R_i$ to the start of $R_{i+1}$, for $i \in [\ell-1]$.
Informally, in~\cite[Lemma 3.5]{MasarikPilipczukRzazewskiSorge22}, a threaded linkage $\mathcal{Q}$ is constructed using pairs of linkages $\mathcal{L}_{ij}, \mathcal{L}_{ji}$ to zigzag between $V(P_i)$ and $V(P_j)$, forming cycles with segments of $P_i$ that are used to find the threads.
Because vertices appear at most once in each linkage and once in $P_i$, any vertex appears at most three times in the walk built from the concatenation of the paths in one of the linkages and the threads forming $\mathcal{Q}$.
Hence, adopting their terminology, we say that $\mathcal{Q}$ has \emph{overlap} at most 3.
In~\cite[Lemma 3.6]{MasarikPilipczukRzazewskiSorge22}, these are then filtered into sets of closed walks or \emph{untangled threaded linkages}, preserving the bound on the overlap.

In~\cite[Lemma~3.1]{MasarikPilipczukRzazewskiSorge22}, they show that a collection of closed walks with congestion at most $c$ for which the degeneracy of their intersection graph is not below a certain threshold contains a large bramble with congestion at most $c$.
This lemma is backed by \textcite[Theorem~1]{Kostochka84} which forces a large clique minor in reasonably dense graphs (cf.~\cref{prop:clique-minor}) and constitutes the ``dense'' case of their proof, because in all other scenarios they combine collections of closed walks with sparse intersection graphs.

Next, in a localization step, they build an auxiliary graph $G$ whose vertices are associated with the closed walks and (untangled) threaded linkages, encoding intersection graph degeneracies as edges.
That is, the existence of an edge $e$ in $G$ implies that the degeneracy of the intersection graph of the two structures associated with the endpoints of $e$ is above a certain bound.
In~\cite[Claim 4.1]{MasarikPilipczukRzazewskiSorge22}, they show that a $0.6$ fraction of $V(G)$ concentrates into either an independent set whose vertices are all related to sets of closed walks, or in one of two matching structures.
In the first case, the bramble is constructed by cleaning intersections between paths forming the closed walks by using a result by \textcite{RW12} (cf.~\cref{prop:independent-set-multipartite}) whose backend depends on the Lovász Local Lemma~\cite{LE75}.
In the latter two matching cases, threaded linkages are tied into closed walks via the crucial Bowtie Lemma~\cite[Lemma 3.7]{MasarikPilipczukRzazewskiSorge22}.
The Bowtie Lemma says that if the intersection graph of two threaded linkages is not too sparse, then one can build a family of closed walks whose congestion depends on the overlap of the given threaded linkages and on whether they are untangled or not.
We remark that, if both threaded linkages are untangled, the overlap is irrelevant and the Bowtie Lemma yields congestion~$2$.

\paragraph{Overview of the main proof in~\cite{RW12}.}
The authors aim to find a \emph{grid-like minor}, which is defined as a collection of paths whose intersection graph is bipartite and contains a large clique minor.
This proof is conceptually much simpler.
Starting from an (undirected) path system, the authors simply analyze how the paths linking the segments $P_i$ and $P_j$ of the base path intersect.
If the intersection graph $\intersec(\Lcal_{ij} \cup \Lcal_{i'j'})$ of the linkages $\Lcal_{ij}$ and $\Lcal_{i'j'}$ is reasonably dense for some choice of $i$, $j$, $i'$, and $j'$, then they construct a grid-like minor by finding a clique minor in $\intersec(\Lcal_{ij} \cup \Lcal_{i'j'})$.
Otherwise, all linkages are sufficiently disjoint for them to pick a distinct path linking $P_i$ to $P_j$ for each $i$ and $j$, and all these distinct paths are also vertex-disjoint.
This is essentially a subdivision of a clique, which is then used to build the grid-like minor.

\paragraph{Improving tools from~\cite{MasarikPilipczukRzazewskiSorge22}.}
We now describe improvements we made to some of the tools from~\cite{MasarikPilipczukRzazewskiSorge22}.
The formal statements, appearing in later parts of this paper, are different, as they are written not in terms of linkages but \emph{walkages}, which we introduce later.

The reliance on the Bowtie Lemma and the initial overlap of $3$ are the main factors driving the congestion bound up to $8$ in~\cite{MasarikPilipczukRzazewskiSorge22}.
Indeed, the authors of~\cite{MasarikPilipczukRzazewskiSorge22} apply~\cite[Lemma~3.5\,+\,Lemma~3.6]{MasarikPilipczukRzazewskiSorge22} to generate untangled threaded linkages or many closed walks with congestion at most $3$.
We replace these applications by a process that generates untangled threaded linkages or many closed walks with congestion $2$ in \cref{lem:structured-walkage-congestion}.
This is done by building the threaded linkages using only $\Lcal_{ij}$ and $\Lcal_{ji}$, sparing $P_i$ and $P_j$ as opposed to~\cite[Lemma~3.5]{MasarikPilipczukRzazewskiSorge22}.

\cref{lem:structured-walkage-congestion} further enables us to improve on~\cite[Lemma~3.1]{MasarikPilipczukRzazewskiSorge22}.
Given a collection of closed walks with congestion at most $c$, we show in \cref{lem:closed-walkage} how to either build a bramble with congestion at most $c$ or find many vertex-disjoint closed walks in it, regardless of how these closed walks intersect.
This is useful for situations in~\cite{MasarikPilipczukRzazewskiSorge22} where there is a collection of closed walks with congestion $2$ that do not intersect well enough, so they need to combine it with another similar collection to enhance the intersections, but this increases the congestion of this new collection up to $4$.
With our approach, if we have such a collection, and we do not find a bramble with congestion $2$ inside it, we can assume its closed walks are vertex-disjoint, so whenever we merge such collections the congestion never exceeds $2$.

Finally, we improve on the Bowtie Lemma~\cite[Lemma~3.7]{MasarikPilipczukRzazewskiSorge22}.
One of the main uses of this lemma is its special case where the two threaded linkages are untangled, which results in closed walks with congestion $2$.
This is the best case scenario congestion-wise, but it is not always possible to ensure the threaded linkages are untangled due to the nature of~\cite[Lemma~3.6]{MasarikPilipczukRzazewskiSorge22} in their proof and \cref{lem:structured-walkage-congestion} in ours.
However, using \cref{lem:closed-walkage} we ensure the collections of closed walks are always vertex-disjoint, which gives us much more control.
In~\cite{MasarikPilipczukRzazewskiSorge22}, collections of closed walks are often treated as threaded linkages in disguise, and
including closed walks obtained from the Bowtie Lemma in such collections can result in increased congestion.
We show in \cref{cor:cyclic-extension} that the Bowtie Lemma can be extended to use one untangled threaded linkage and one family of vertex-disjoint closed walks, instead of two untangled threaded linkages, to create a collection of closed walks with congestion $2$.

\paragraph{Walkages.}
If we replaced ``paths'' with ``connected subgraphs'' in the main proof of~\cite{RW12}, we would no longer obtain a grid-like minor, but the final object would still contain a large bramble with congestion $2$.
Hence, even though the ideas in~\cite{RW12} are aimed at undirected graphs, if we are able to mimic this undirected path system structure in a digraph, we can then follow their steps replacing ``paths'' by ``strong subgraphs'' and still get a bramble with congestion $2$.
The problem is that in a directed setting there is much more work to be done, as paths are connected graphs, whereas directed paths are not strong subgraphs.

To build our undirected path system we need the notion of a new object called \emph{walkage}, which generalizes a linkage by replacing paths with walks.
Although this is conceptually a very simple change, we show that it is a powerful tool if we manage to make these walks behave like paths.
We introduce this concept as a framework to simplify arguments relying on large linkages intersecting some well-linked set.
We illustrate the usefulness of the walkage framework by showing the high level sketch of the construction of a \emph{strong system}, our directed version of the undirected path system used by~\textcite{RW12}.

Starting from a path system as in~\cite{MasarikPilipczukRzazewskiSorge22}, we observe that not only there is a linkage $\Lcal_{ij}$ between segments $P_i$ and $P_j$ of the base paths, but there is also a linkage $\Lcal_i$ from vertices in a later portion of $P_i$ to vertices in an earlier portion of $P_i$.
We use these linkages to convert these paths into closed walks; furthermore, we choose $\Lcal_i$ leaving some vertices of the well-linked set in the middle, as in \cref{fig:back-linkage}.
This guarantees that the produced closed walk still contains a large enough subset of the well-linked set.
It remains to show how to build a large collection of such closed walks which are vertex-disjoint.
The technical details are non-trivial and are presented in \cref{lem:cyclic-walk-system}.

\begin{figure}[htbp]
  \centering
  \begin{tikzpicture}
    \draw[linkage, decoration = {pre length = 1mm, post length = 1mm, segment length = 10mm}] (0,0) -- (4,0);
    \node[right] at (4,0) {$P_i$};
    \node[above] at (2,1) {$\Lcal_i$};

    \draw[thick, -latex] (4,0) to[bend right, in = 240, out = 300] (0,0);
    \draw[thick, -latex] (3.85,0) to[bend right, in = 240, out = 300] (0.15,0);
    \draw[thick, -latex] (3.7,0) to[bend right, in = 235, out = 305] (0.3,0.05);
    \draw[thick, -latex] (3.55,0.05) to[bend right, in = 230, out = 310] (0.45,0.08);

    \draw[red, dashed] (-0.1,-0.15) rectangle (0.55,0.15);
    \draw[red, dashed] (3.45,-0.15) rectangle (4.1,0.15);
    \draw[red, dashed] (1.675,-0.15) rectangle (2.325,0.15);
  \end{tikzpicture}
  \caption{The path $P_i$ and the linkage $\Lcal_i$ from the final portion of $P_i$ to its initial portion. The dashed red boxes represent subsets of a well-linked set.
  }\label{fig:back-linkage}
\end{figure}

By building these disjoint closed walks $W_1, \ldots, W_k$, we can still find linkages $\Lcal_{ij}$ between $W_i$ and $W_j$.
The next step towards the undirected path system of~\cite{RW12} is to convert such linkages into collections of vertex-disjoint closed walks.
This is very close to what is done in the previous paragraph, but now the paths linking different closed walks are not guaranteed to contain a large subset of a well-linked set.
Indeed, it could be the case that only the very first and very last vertices of each path are in a particular well-linked set.
Thus, we cannot directly apply the same idea.
The solution comes by replacing linkages with walkages.
Indeed, if the goal is to find many closed walks, it makes no difference if we combine paths or walks.
Replacing paths by walks has two major features: first, a path is still a walk, so our initial object is still a path system, and we can mix paths and walks at will; second, walks may overlap, so we can actually inject vertices from a well-linked set into a path if we allow it to self-intersect.

With this idea in mind, we can solve the issue of paths in the linkages $\Lcal_{ij}$ not having enough vertices from a well-linked set as follows.
By well-linkedness, there is a linkage $\Lcal_{ij}$ from the closed walk $W_i$ to the closed walk $W_j$, but there is also a linkage $\Lcal_{ji}$ from $W_j$ to $W_i$.
The start vertices of paths in $\Lcal_{ij}$ are the end vertices of the paths in $\Lcal_{ji}$ and vice versa, so the paths of $\Lcal_{ij}$ and $\Lcal_{ji}$ can be concatenated to form closed walks as in \cref{fig:closed-long-walks}.

\begin{figure}[htbp]
  \centering
  \begin{tikzpicture}[scale = 0.5]
    \draw[thick, -latex, goodblue] (3.7,1.6) to[bend left] (0.3,1.6);
    \draw[thick, -latex, goodblue] (3.33,0.5333) to[bend left] (0.65,0.5333);
    \draw[thick, -latex, goodblue] (3.33,-0.5333) to[bend left] (0.65,-0.5333);
    \draw[thick, -latex, goodblue] (3.7,-1.6) to[bend left] (0.3,-1.6);

    \draw[thick, -latex, goodred] (0.3,1.6) to[bend left] (3.7,1.6);
    \draw[thick, -latex, goodred] (0.65,0.5333) to[bend left] (3.33,0.5333);
    \draw[thick, -latex, goodred] (0.65,-0.5333) to[bend left] (3.33,-0.5333);
    \draw[thick, -latex, goodred] (0.3,-1.6) to[bend left] (3.7,-1.6);

    \draw[thick] (0,0) ellipse (0.7cm and 1.75cm);
    \draw[thick] (4,0) ellipse (0.7cm and 1.75cm);
    \node[left] at (-0.7,0) {$W_i$};
    \node[right] at (4.7,0) {$W_j$};
    \node[goodred, above] at (2,2.2) {$\Lcal'_{ij}$};
    \node[goodblue, below] at (2,-2.2) {$\Lcal'_{ji}$};
    \draw[-latex] (-0.7, -0.01) -- (-0.7,-0.01);
    \draw[-latex] (4.7, -0.01) -- (4.7,-0.01);

    \begin{scope}[xshift = 10cm]
      \draw[thick, -latex, goodblue] (3.7,1.6) -- (0.65,0.5333);
      \draw[thick, -latex, goodblue] (3.33,0.5333) -- (0.65,-0.5333);
      \draw[thick, -latex, goodblue] (3.33,-0.5333) -- (0.3,-1.6);
      \draw[thick, -latex, goodblue] (3.7,-1.6) -- (0.3,1.6);

      \draw[thick, -latex, goodred] (0.3,1.6) -- (3.7,1.6);
      \draw[thick, -latex, goodred] (0.65,0.5333) -- (3.33,0.5333);
      \draw[thick, -latex, goodred] (0.65,-0.5333) -- (3.33,-0.5333);
      \draw[thick, -latex, goodred] (0.3,-1.6) -- (3.7,-1.6);

      \draw[thick] (0,0) ellipse (0.7cm and 1.75cm);
      \draw[thick] (4,0) ellipse (0.7cm and 1.75cm);
      \node[left] at (-0.7,0) {$W_i$};
      \node[right] at (4.7,0) {$W_j$};
      \node[goodred, above] at (2,1.6) {$\Lcal'_{ij}$};
      \node[goodblue, below] at (2,-1.6) {$\Lcal'_{ji}$};
      \draw[-latex] (-0.7, -0.01) -- (-0.7,-0.01);
      \draw[-latex] (4.7, -0.01) -- (4.7,-0.01);
    \end{scope}
  \end{tikzpicture}
  \caption{The linkages $\Lcal'_{ij}$ and $\Lcal'_{ji}$ can be concatenated to form many closed walks (left) or some long closed walks (right).
  }\label{fig:closed-long-walks}
\end{figure}

Hence, if the concatenation of the paths gives us enough closed walks, we are done.
Otherwise, $\Lcal_{ij}$ and $\Lcal_{ji}$ interleave in a way that creates at least one closed walk that is long enough to contain many paths from $\Lcal_{ij}$ and from $\Lcal_{ji}$.
As each of these paths contains at least two elements from the well-linked set, we create a walk that starts in the beginning of a path in $\Lcal_{ij}$ and zigzags going from $W_i$ to $W_j$ following paths in $\Lcal_{ij}$ and in $\Lcal_{ji}$ until it gathers enough vertices in the well-linked set.
Here, we need to clean this walk to behave like a path.
We show, in \cref{lem:untangled-walkage-to-walkage}, that \emph{untangled threaded walkages} are the structures which behave like paths.
This means that they can be broken down into smaller disjoint walks just as in the construction of a path system, in which a path containing a large well-linked set is broken down into smaller paths, each containing a large well-linked set.
Cleaning the built walk into an untangled threaded walkage is done in \cref{lem:untangled-walkage}.

Now, we have an \emph{$\ell$-linked walkage} $\Wcal_{ij}$ going from $W_i$ to $W_j$, for all $i$ and $j$, which is a walkage in which each of its walks contains $\ell$ vertices of a well-linked set.
We can then apply the same strategy as before, i.e.,\ find a linkage from the later portion to the earlier portion of each walk in $\Wcal_{ij}$ and use these to find a collection of closed walks that connect $W_i$ and $W_j$.
Note that this is only possible due to the nature of walkages, as the step that builds $\Wcal_{ij}$ is not trivially adapted to build a linkage containing many vertices from a well-linked set instead.
Indeed, to convert these walks into paths we need to discard pieces corresponding to self-intersections, which could potentially throw away a large portion of the well-linked set.

We call a walkage $\Wcal$ an (\emph{untangled}) \emph{threaded walkage} if $\Wcal$ behaves analogously to what an (untangled) threaded linkage would (except the threads must now be disjoint, see \cref{def:threaded-walkage} in \cref{sec:walkages} for more details).
\cref{fig:untangled-threaded-walkage} provides examples of threaded walkages.
We call a walkage $\Wcal$ a \emph{cyclic walkage} if there exists another walkage $\Tcal$ such that it is possible to concatenate each walk in $\Wcal$ with some walk in $\Tcal$ to form a closed walk and all these closed walks are vertex-disjoint.
\cref{def:cyclic-walkage} and \cref{fig:cyclic-walkage} provide a formal definition and an example of a cyclic walkage.

Given the strong system, we translate the proof in~\cite{RW12} to digraphs in \cref{lem:final-bramble}.
A high level overview of the proof of \cref{thm:main-proof-short} is as follows.
If our initial path system is sufficiently large, we find an $\ell$-linked untangled threaded walkage for an appropriate $\ell$.
We proceed to transform it into either a bramble of size $k$ and congestion $2$ or an $\ell'$-linked cyclic walkage, for some $\ell' < \ell$ still sufficiently large.
We can then assume we have an $\ell'$-linked cyclic walkage, which is enough for us to find either a bramble of size $k$ and congestion $2$ or an $(\ell'', w)$-strong system.
Assuming we have an $(\ell'', w)$-strong system, we can always find a bramble of size $k$ and congestion $2$ inside it.
\cref{fig:proof-overview} gives an illustration of this overview.

\newcounter{c}

\newsavebox{\bramblefig}
\begin{lrbox}{\bramblefig}
  \begin{tikzpicture}[
    >=Latex,
    bag/.style={draw=none, fill=#1, opacity=0.35, smooth cycle, tension=0.5}
]

    \node[blackvertex, scale=2] (v12) at (90:3.6) {};
    \node[blackvertex, scale=2] (v23) at (-30:3.6) {};
    \node[blackvertex, scale=2] (v13) at (210:3.6) {};

    \node[blackvertex, scale=2] (v24) at (30:1.6) {};
    \node[blackvertex, scale=2] (v34) at (-90:1.6) {};
    \node[blackvertex, scale=2] (v14) at (150:1.6) {};

    \begin{pgfonlayer}{background}
        \fill[bag=goodred] plot coordinates { (90:4.3) (150:3.3) (210:4.3) (150:0.6) };
        
        \fill[bag=goodgreen] plot coordinates { (90:4.3) (30:3.3) (-30:4.3) (30:0.6) };
        
        \fill[bag=goodblue] plot coordinates { (210:4.3) (-90:3.3) (-30:4.3) (-90:0.6) };

        \fill[bag=goodpurple] plot coordinates { (150:2.4) (30:2.4) (-90:2.4) };
    \end{pgfonlayer}

    \draw[arrow, line width=4pt] (v13) to [bend left = 30] (v12);
    \draw[arrow, line width=4pt] (v12) to [bend left = 30] (v23);
    \draw[arrow, line width=4pt] (v23) to [bend left = 30] (v13);

    \draw[arrow, line width=4pt] (v14) -- (v24);
    \draw[arrow, line width=4pt] (v24) -- (v34);
    \draw[arrow, line width=4pt] (v34) -- (v14);

    \draw[arrow, line width=4pt] (v12) -- (v14);
    \draw[arrow, line width=4pt] (v14) -- (v13);

    \draw[arrow, line width=4pt] (v23) -- (v24);
    \draw[arrow, line width=4pt] (v24) -- (v12);

    \draw[arrow, line width=4pt] (v13) -- (v34);
    \draw[arrow, line width=4pt] (v34) -- (v23);

\end{tikzpicture}%
\end{lrbox}

\begin{figure}[h!]
  \centering
  \scalebox{.83}{\begin{tikzpicture}[shorten >=1pt,arrow, thick,yscale=.85]

  \node[rectangle,draw, scale=1,minimum height=1cm,align=center] (P-a) at (0,0) {$(D,k)$};

  \node[rectangle,draw,scale=1,align=center] (P-b) at ($(P-a) + (5,0)$) {Path $P$ and well-linked\\set $A \subseteq V(P)$};

  \draw (P-a) -- node[pos=.5, above] {\hyperref[prop:linked-path]{Prop.~\labelcref{prop:linked-path}}} (P-b);

  \node[rectangle,draw,scale=1,align=center] (P-c) at ($(P-b) + (6.5,0)$) {$\ell$-linked untangled\\threaded walkage $\mathcal{W}$};

  \draw (P-b) -- node[pos=.5, above] {In \hyperref[thm:main-proof]{Thm.~\labelcref{thm:main-proof}} } (P-c);

  \node[rectangle, draw, scale=1, align=center] (P-h) at ($(P-c) + (0,-2)$) {$\ell'$-linked\\cyclic walkage};

  \draw (P-c.east) -- ++(.7,0) |- node[pos=.25, xshift=1.05cm] {\hyperref[lem:cyclic-walk-system]{Lem.~\labelcref{lem:cyclic-walk-system}}} (P-h.east);

  \node[rectangle, draw, align=center, minimum height=1cm] (P-i) at ($(P-h) + (-4.5, 0)$) {Strong system};

  \draw (P-h) -- node[pos=.5, above] {\hyperref[lem:fully-cyclic-walk-system]{Lem.~\labelcref{lem:fully-cyclic-walk-system}}} (P-i);

  \node[align=center] (P-j) at ($(P-i) + (-4, 0)$) {\scalebox{.22}{\usebox{\bramblefig}}};
  \node[align=center] (P-k) at ($(P-j) + (-3,0)$) {Bramble of size $k$\\and congestion 2};
  \node[rectangle, draw, fit=(P-j)(P-k), yscale=0.9] (P-l) {};

  \draw (P-i) -- node[pos=.5, above] {\hyperref[lem:final-bramble]{Lem.~\labelcref{lem:final-bramble}}} (P-l);
\end{tikzpicture}%
}%
  \caption{Overview of the proof of \cref{thm:main-proof-short}. The step from $P$ and $A$ to the $\ell$-linked untangled threaded walkage $\mathcal{W}$ occurs internally in the proof of \cref{thm:main-proof}, the more detailed version of \cref{thm:main-proof-short}. Other outcomes that can produce the bramble before \cref{lem:final-bramble} are ommited.}\label{fig:proof-overview}
\end{figure}

\paragraph{Unpacking \cref{lem:cyclic-walk-system}.}
Our goal is to obtain an $\ell'$-linked cyclic walkage from an $\ell$-linked untangled threaded walkage $\mathcal{W}$.
If we are not able to achieve this, then a large bramble with congestion $2$ is found instead.
We refer the reader to~\cref{fig:proof-overview-of-linkage-cyclic-walk-lemma} for an overview of the proof of \cref{lem:cyclic-walk-system}.

\begin{figure}[h]
  \centering
  \scalebox{.9}{\begin{tikzpicture}[shorten >=1pt,arrow, thick,yscale=.75]

\node[rectangle,draw,scale=1,align=center] (P-a) at (0,0) {Walkages $\mathcal{Q}_i$ starting and\\ending on a well-linked set};
\node[rectangle,draw,scale=1,align=center] (P-b) at ($(P-a) + (7,0)$) {Cyclic walkages\\with congestion 2};
\node[rectangle,draw,scale=1,align=center] (P-c) at ($(P-a) + (0,-2.5)$) {Threaded\\walkages};

\node[rectangle,draw,scale=1,minimum height=1cm,align=center] (P-d) at ($(P-b) + (5, 0)$) {Cyclic walkage};

\node[rectangle,draw,scale=1,align=center] (P-e) at ($(P-c) + (6,0)$) {Untangled threaded\\walkage};

\node[rectangle, draw, scale=1, align=center] (P-f) at ($(P-e) + (6, 0)$) {$\ell'$-linked\\cyclic walkage};

\draw (P-a) -- node[pos = .5] (P-ab) {} (P-b); 
\draw (P-a) -- node[pos = .32] (P-ac) {} (P-c); 
\node (P-abc) at ($(P-ab) + (-1.5, -1)$) {\hyperref[lem:structured-walkage-congestion]{Lem.~\labelcref{lem:structured-walkage-congestion}}};
\draw[-,dashed] (P-abc.north east) -- (P-ab);
\draw[-,dashed] (P-abc) -- (P-ac);

\draw (P-b) -- node[pos=.5, above] {\hyperref[lem:closed-walkage]{Lem.~\labelcref{lem:closed-walkage}}} (P-d);

\draw[shorten >= 7pt] (P-c) -- node[pos=.5] (P-cb) {} (P-b);
\draw (P-c) -- node[pos=.5] (P-ce) {} (P-e);
\node (P-cbe) at ($(P-ce) + (0.5,.5)$) {\hyperref[lem:untangled-walkage]{Lem.~\labelcref{lem:untangled-walkage}}};

\draw (P-d) -- node[pos=.5, xshift=1cm] {\hyperref[prop:independent-set-multipartite]{Prop.~\labelcref{prop:independent-set-multipartite}}} (P-f);

\draw (P-e) -- node[pos=.5, above] {\hyperref[cor:cyclic-extension]{Cor.~\labelcref{cor:cyclic-extension}}} (P-f);

\end{tikzpicture}%
}
  \caption{Overview of the proof of \cref{lem:cyclic-walk-system}.}\label{fig:proof-overview-of-linkage-cyclic-walk-lemma}
\end{figure}

The first step of \cref{lem:cyclic-walk-system} uses the well-linkedness of $\mathcal{W}$ to split it into $w'$ walkages $\mathcal{Q}_1, \ldots, \mathcal{Q}_{w'}$ with the property that, for each $i \in [w']$, the union of the starting vertices $\stLink(\mathcal{Q}_i)$ with the ending vertices $\enLink(\mathcal{Q}_i)$ of $\mathcal{Q}_i$ forms a well-linked set.
By construction, $V(W_i)$ contains a well-linked set $A_i$ for each $W_i \in \mathcal{W}$ with $i \in [w']$ (for some choice of $w'$), and thus we choose $\mathcal{Q}_i$ as the walkage starting on the $k'$ last vertices of $A_i$, ordered as they appear in $W_i$, and ending on the first $k'$ vertices of $A_i$, for some choice of $k'$.
This is represented in \cref{fig:back-linkage} and is the starting point of \cref{fig:proof-overview-of-linkage-cyclic-walk-lemma}.

To each $\mathcal{Q}_i$ we apply \cref{lem:structured-walkage-congestion} which, in turn, exploits the well-linkedness of $\stLink(\mathcal{Q}_i) \cup \enLink(\mathcal{Q}_i)$ to find, for each $Q \in \mathcal{Q}_i$, back and forth walkages between sets of vertices.
Observing how the walks alternate between such sets, we either immediately find a cyclic walkage with congestion $2$ (since we started with two walkages) or a threaded walkage.
In the second case, we apply \cref{lem:untangled-walkage} and either untangle it or find a cyclic walkage with congestion $2$.
If the latter occurs, we clean the congestion via \cref{lem:closed-walkage}.

At this stage, from each $\mathcal{Q}_i$ we have extracted a $\mathcal{Q}'_i$ that is either a cyclic walkage or an untangled threaded walkage.
If any such $\mathcal{Q}'_i$ intersects $\mathcal{W}$ too heavily, we apply \cref{cor:cyclic-extension} to tie both together and obtain an $\ell'$-linked cyclic walkage.
Otherwise, we carefully dissect parts of both $\mathcal{W}$ and $\{\mathcal{Q}'_i \mid i \in [w']\}$ and analyze the degeneracy of the intersection graphs of those parts.
If it is sufficiently large for a given pair of parts, we tie them together into a cyclic walkage or a bramble using \cref{cor:cyclic-extension}.
In case no such pair of pieces is found, we apply \cref{prop:independent-set-multipartite}, exploiting the sparsity to find a pairwise vertex-disjoint set of walks $Q_i \in \mathcal{Q}'_i$, for each $i \in [w']$ which, in turn, are used to build an $\ell'$-linked walkage that is cyclic, completing \cref{lem:cyclic-walk-system}.
Crucially, the bowtie-like constructions of \cref{cor:cyclic-extension} (backed by \cref{lem:untangled-cyclic-extension-congestion,lem:cyclic-extension-congestion}) never yield congestion greater than 2, as we only operate on untangled structures.

\paragraph{Unpacking \cref{lem:fully-cyclic-walk-system}.} Here, we assume that we are given an $\ell$-linked and cyclic walkage $\mathcal{W}$ of size $w$, and the goal is to build a strong system from it.
Again, if we are not able to do this, then we find a large bramble with congestion $2$ instead.
We refer the reader to~\cref{fig:proof-sketch-fully-cyclic-walk-system} for an overview of the proof of~\cref{lem:fully-cyclic-walk-system}.

\begin{figure}[h]
  \centering
  \scalebox{.9}{\begin{tikzpicture}[shorten >=1pt,arrow, thick,yscale=.80]

  \node[rectangle,draw,scale=1,align=center] (P-a) at (0,0) {$\ell$-linked and cyclic walkage\\$\mathcal{W} = \{W_1, \ldots, W_w$\}};
  \node[rectangle,draw,scale=1,align=center] (P-b) at ($(P-a) + (6,0)$) {Walkages $\mathcal{L}_{ij}$\\from $V(W_i)$ to $V(W_j)$};
  \node[rectangle,draw,scale=1,align=center] (P-c) at ($(P-b) + (0,-2.5)$) {Cyclic walkages\\$\mathcal{L}'_{ij}$};

  \node[rectangle,draw,scale=1,minimum height=1cm,align=center] (P-d) at ($(P-b) + (6, 0)$) {Untangled walkages\\$\mathcal{L}'_{ij}$};

  \node[rectangle,draw,scale=1,align=center] (P-e) at ($(P-d) + (0,-2.5)$) {Reorganized untangled\\walkages $\mathcal{L}''_{ij}$};

  \node[rectangle, draw, scale=1, minimum height=1cm, align=center] (P-f) at ($(P-c) + (-6, 0)$) {Strong system};

  \draw (P-a) -- (P-b);
  \draw (P-b) -- node[pos = .3] (P-bc) {} (P-c);
  \draw (P-b) -- node[pos = .35] (P-bd) {} (P-d);
  \node (P-bcd) at ($(P-bd) + (-.5, -1)$) {\hyperref[cor:structured-walkage]{Cor.~\labelcref{cor:structured-walkage}}};
  \draw[-,dashed] (P-bcd) -- (P-bc);
  \draw[-,dashed] (P-bcd) -- (P-bd);

  \draw (P-d) -- node[pos=.5, xshift=1cm] {\hyperref[lem:untangled-walkage-to-walkage]{Lem.~\labelcref{lem:untangled-walkage-to-walkage}}} (P-e);

  \draw (P-e) -- node[pos=.5, above] {\hyperref[lem:cyclic-walk-system]{Lem.~\labelcref{lem:cyclic-walk-system}}} (P-c);
  \draw (P-c) -- (P-f);
\end{tikzpicture}%
}%
  \caption{Overview of the proof of~\cref{lem:fully-cyclic-walk-system}.}%
  \label{fig:proof-sketch-fully-cyclic-walk-system}
\end{figure}

By the well-linkedness of $\mathcal{W}$, for each distinct pair $i,j \in [w]$, there is a very large walkage $\mathcal{L}_{ij}$ from $V(W_i)$ to $V(W_j)$.
This allows us to apply, to each $\mathcal{L}_{ij}$, \cref{cor:structured-walkage} and find a large walkage $\mathcal{L}'_{ij}$ that is either cyclic or untangled.
We use the fact that untangled walkages behave like paths to reorganize them into $\ell$-linked walkages $\mathcal{L}''_{ij}$.
In turn, we use \cref{lem:cyclic-walk-system} to turn each $\mathcal{L}''_{ij}$ into a cyclic walkage.
At this point, for every $i$ and $j$, we may assume $\mathcal{L}'_{ij}$ is cyclic.
The construction of a strong system then follows by appropriately selecting cyclic walkages from the ones built so far, and observing that the construction ensures that each such walkage associated with the pair of distinct indices $i,j \in [w]$ contains a vertex in $W_i$ and one in $W_j$.

\cref{lem:fully-cyclic-walk-system} is a prime example of the utility of walkages.
The walkages $\Lcal''_{ij}$ are built by the concatenation of smaller walks, which is not trivial to achieve in a linkage setting because we cannot afford to discard vertices to transform walks back to paths, as this could throw away vertices from the well-linked set.

\paragraph{Unpacking \cref{lem:final-bramble}.} Here, we assume that we are given an $(\ell, k)$-strong system $(\mathcal{L,W})$ and the goal is to build a bramble of size $k$ and congestion 2.
See~\cref{fig:proof-sketch-final-bramble} for an overview of the proof of~\cref{lem:final-bramble}.

\begin{figure}[h]
  \centering
  \scalebox{.82}{\begin{tikzpicture}[shorten >=1pt,arrow, thick,yscale=.75]

\node[rectangle,draw,scale=1,align=center] (P-a) at (0,0) {Strong system\\$(\mathcal{L,W})$};
\node[rectangle,draw,scale=1,align=center] (P-b) at ($(P-a) + (0,-2.5)$) {Intersection graphs of\\$\mathcal{L}_{ij} \cup \mathcal{L}_{i'j'}$};
\node[rectangle,draw,scale=1,align=center, minimum height=1cm] (P-c) at ($(P-a) + (4.5,0)$) {All sparse pairs};
\node[rectangle,draw,scale=1,align=center] (P-d) at ($(P-c) + (5.5,0)$) {Pairwise vertex-disjoint\\walks, one from each $\mathcal{L}_{ij}$};

\node[rectangle,draw,scale=1,align=center, minimum height=1cm] (P-e) at ($(P-c) + (0,-2.5)$) {One dense pair};
\node[rectangle,draw,scale=1,align=center, minimum height=1cm] (P-f) at ($(P-d) + (5,0)$) {Clique minor};

\node[align=center] (P-j) at ($(P-e) + (4.5, 0)$) {\scalebox{.22}{\usebox{\bramblefig}}};
\node[align=center] (P-k) at ($(P-j) + (2.5,0)$) {Bramble of size $k$\\and congestion 2};
\node[rectangle, draw, fit=(P-j)(P-k), yscale=0.9] (P-l) {};

\draw (P-a) -- (P-b); 
\draw (P-b) -- (P-c.south west); 
\draw (P-b) -- (P-e); 
\draw (P-d) -- (P-f); 
\draw (P-c) -- node[pos=.5, above] {\hyperref[prop:independent-set-multipartite]{Prop.~\labelcref{prop:independent-set-multipartite}}} (P-d);
\draw (P-e) -- node[pos=.5, above] {\hyperref[prop:independent-set-multipartite]{Prop.~\labelcref{prop:dense-bramble}}} (P-l);
\draw (P-f) |- (P-l);

\end{tikzpicture}%
}%
  \caption{Overview of the proof of \cref{lem:final-bramble}.}%
  \label{fig:proof-sketch-final-bramble}
\end{figure}

The proof closely emulates the undirected scenario seen in~\cite{RW12}.
We test the degeneracy of the intersection graph generated by each pair of distinct closed walkages $\mathcal{L}_{ij}, \mathcal{L}_{i'j'} \in \mathcal{L}$, all of which have size $\ell$.
If it is larger than some function of $k$ then we apply \cref{prop:dense-bramble} to find the desired bramble of size $k$ and congestion 2.
\cref{prop:dense-bramble} applies \cref{prop:clique-minor} to find a clique minor in the intersection graph which, in turn, is used to build the bramble.
Otherwise, all such pairs of walkages generate sparse intersection graphs, and we apply \cref{prop:independent-set-multipartite} to find a family $\mathcal{Z}$ of pairwise vertex-disjoint closed walks, one for each $\mathcal{L}_{ij}$.
Next, we observe that the intersection graph of $\mathcal{W} \cup \mathcal{Z}$ contains a $1$-subdivision of a large clique, since each pair of vertices associated with walks in $\mathcal{W}$ share a common neighbor associated with a walk in $\mathcal{Z}$.
Hence, the intersection graph contains a large clique minor, from which a large bramble with congestion $2$ is built.

\paragraph{Finding the brambles in polynomial time.}
The reasoning that the constructions of \cref{thm:main-proof,thm:main-proof-congestion-4-masarik,thm:main-proof-congestion-6-masarik} can be done in polynomial time assuming a path system is given is similar to the one done by \textcite{LS26} in order to produce an algorithmic version of the proof in~\cite{MasarikPilipczukRzazewskiSorge22}.
In short, the bulk of both constructions consists in building intersection graphs, testing for degeneracy, and iteratively following walks from walkages (in our case) or paths/walks from a path system (in the case of~\cite{MasarikPilipczukRzazewskiSorge22}), picking pieces satisfying prescribed properties and moving to the next step of the construction.
The degeneracy of the intersection graphs can be computed in polynomial time, as shown by \textcite{MatulaBeck83}.
As is the case of~\cite[Lemma 3.7]{MasarikPilipczukRzazewskiSorge22}, the running time of our bowtie-like lemmas, namely~\cref{lem:untangled-cyclic-extension-congestion,lem:cyclic-extension-congestion}, depends on \cref{prop:partition}, whose proof naturally yields a polynomial-time algorithm.
In addition, an algorithmic version of~\cref{prop:dense-bramble} is stated in~\cite[Lemma 24]{LS26}, and is obtained by emulating the proof available in~\cite[Lemma 3.1]{MasarikPilipczukRzazewskiSorge22} and applying~\cref{prop:clique-minor}.

Similarly to~\cite{LS26}, the main bottleneck on the running time is on the application of an algorithmic version of the Lovász Local Lemma~\cite{LE75}, which is the backend of both~\cref{prop:independent-set-multipartite-non-algo,prop:independent-set-multipartite}.
For simplicity, we elect to directly apply the algorithmic version from~\cite{Harris2022} through~\cref{prop:independent-set-multipartite}.

\paragraph{Brambles with congestion 4 and 6.} In \cref{sec:higher-congestion-with-masarik}, we show how our ideas can be used to modify the method by the authors in~\cite{MasarikPilipczukRzazewskiSorge22} in order to produce brambles with congestion $4$ and $6$ with better bounds on the directed treewidth.
We do so by adapting the definition of path systems into \emph{walk systems} (see \cref{def:walk-system}), by replacing ``linkages'' with ``walkages'' (notice that this is not the same as strong systems, where all the walks are also required to be closed).
Then, in \cref{lem:sparse-bramble}, we adapt the sparse winning scenario of~\cite[Lemma 3.2]{MasarikPilipczukRzazewskiSorge22} to work with walk systems and show that this scenario yields a bramble with congestion at most $4$.
\cref{lem:final-bramble-clean-4,lem:final-bramble-clean-6} are then used to build brambles with congestion at most $4$ and at most $6$, respectively, exploring dense and sparse cases of intersection graphs.
The dense cases are handled by \cref{cor:cyclic-extension,cor:structured-walkage}, and the sparse cases are handled by \cref{lem:sparse-bramble}.
Finally, \cref{thm:main-proof-congestion-4-masarik,thm:main-proof-congestion-6-masarik} wrap up the proofs and bound the dependence on the directed treewidth.
We remark that even though walkages are not strictly necessary in these proofs for higher congestion, we still use them so we can build upon the results of \cref{sec:walkages}.

\section{Preliminaries}\label{sec:pre}

In this section we provide the definitions and notation used in this text.
Given a positive integer $k$ and a set $A$, we denote by $[k]$ the set $\{1, 2, \ldots, k\}$ and by $\binom{A}{k}$ the family of subsets of $A$ of size $k$.

\subsection{Graphs and digraphs}
For basic background on graph theory we refer the reader to~\cite{Bondy2008}.
Given a (di)graph $G$ we denote by $V(G)$ and $E(G)$ the sets of vertices and edges (arcs) of $G$, respectively.
For $X \subseteq V(G)$, we denote by $G[X]$ the subgraph of $G$ \emph{induced} by $X$.
For $v \in V(G)$ we denote by $G - v$ the (di)graph resulting from deleting $v$ from $G$.

If $e$ is an arc of a digraph from a vertex $u$ to a vertex $v$, we say that $e$ has \emph{endpoints} $u$ and $v$, that $e$ is \emph{incident} to $u$ and $v$, and that $e$ is \emph{oriented} from $u$ to $v$.
We may refer to $e$ as the ordered pair $(u,v)$.
In this case, $u$ is the \emph{tail} of $e$ and $v$ is the \emph{head} of $e$.
We also say that $e$ is \emph{leaving} $u$ and \emph{reaching} $v$, and that $u$ and $v$ are \emph{adjacent}.
A \emph{clique} in a (di)graph $D$ is a set of pairwise adjacent vertices of $D$, and an \emph{independent set} is a set of pairwise non-adjacent vertices of $D$.

The \emph{in-degree} (resp.\ \emph{out-degree}) of a vertex $v$ in a digraph $D$ is the number of arcs with head (resp.\ tail) $v$.
We say that $u$ is an \emph{in-neighbor} of $v$ in $D$ if $(u,v) \in E(D)$ and that $u$ is an \emph{out-neighbor} of $v$ if $(v,u) \in E(D)$.
If $u$ is an in-neighbor and/or an out-neighbor of $v$, we may simply say that $u$ is a \emph{neighbor} of $v$.

In an undirected graph $G$, we say that $u$ is a \emph{neighbor} of $v$ if $(u,v) \in E(G)$.
The \emph{degree} of a vertex $u \in V(G)$ is the number of edges having $u$ as an endpoint.
The \emph{degeneracy} $\degen(G)$ of an undirected graph $G$ is the least integer $k$ such that every subgraph $H$ of $G$ contains a vertex of degree at most $k$.

Let $G$ be a graph and $e$ be an edge from  $u$ to $v$ in $G$.
By \emph{contracting} $e$, we generate a new graph $G'$ by identifying vertices $u$ and $v$ into a single new vertex $w$ and, for every edge $(y,z) \in E(G)$ with $z \in \{u,v\}$ other than $e$, we add an edge from $y$ to $w$ in $G'$.
A graph $H$ is a \emph{minor} of $G$ if a copy of $H$ can be obtained from a subgraph of $G$ by a series of edge contractions.

A \emph{walk} in a graph $G$ is an alternating sequence $W$ of vertices and edges that starts and ends with a vertex, and such that for every edge $(u,v)$ in the walk, vertex $u$ (resp.\ vertex $v$) is the element right before (resp.\ right after) arc $(u,v)$ in $W$.
If the first vertex in a walk is $u$ and the last one is $v$, then we say this is a \emph{walk from $u$ to $v$} and a \emph{closed walk} if $u = v$.

A \emph{path} is a graph formed by pairwise distinct vertices $V(P) = \{v_1, \ldots, v_k\}$ and $E(P) = \{(v_i, v_{i+1}) \mid i \in [k-1]\}$ for some positive integer $k$.
We say that $v_1$ is the \emph{first} vertex of $P$, $v_k$ is the \emph{last} vertex of $P$ and $P$ is a path \emph{from} $v_1$ \emph{to} $v_k$.
If $P$ is a walk or a path, we denote by \emph{$\stLink(P)$} and \emph{$\enLink(P)$} the first and last vertices of $P$, respectively.
\emph{Directed paths} and \emph{directed walks} are defined analogously, and we may refer to them as simply ``paths'' and ``walks'' when the context is clear.

Given a family of walks $\Wcal$, directed or not, we denote by $I(\Wcal)$ its \emph{intersection graph}, i.e.,\ the undirected graph that has a vertex for each walk in $\Wcal$ and edges between walks that intersect.
Given two walks $W_1$ and $W_2$ such that $\enLink(W_1) = \stLink(W_2)$, we denote by $W_1 \cdot W_2$ its \emph{concatenation}, i.e.,\ a walk that starts in the first vertex of $W_1$, follows $W_1$ until its last vertex, which is the first vertex of $W_2$, and then follows $W_2$ until its last vertex.

A digraph $D$ is \emph{strongly connected} (or \emph{strong} for short) if for every $u,v \in V(D)$ there are paths from $u$ to $v$ and from $v$ to $u$ in $D$.
A \emph{separator} of $D$ is a set $X \subseteq V(D)$ such that $D \setminus X$ has a single vertex or is \textsl{not} strongly connected.
If $D$ has at least $k+1$ vertices and every separator of $D$ has size at least $k$, we say that $D$ is \emph{$k$-strongly connected} (or \emph{$k$-strong} for short).
A \emph{strongly connected component} (or \emph{strong component} for short) of a digraph $D$ is a maximal induced subgraph of $D$ that is strongly connected.

For the remainder of this article and unless stated otherwise, $n$ is used to denote the number of vertices of the input (di)graph of the problem under consideration.

\subsection{Parameterized complexity}\label{sec:param-compl}
We refer the reader to~\cite{Cygan2015,Downey2013} for background on parameterized complexity, and we define here only the most basic definitions.
A \emph{parameterized problem} is a language $L \subseteq \Sigma^* \times \mathbb{N}$.  For an instance $I=(x,k) \in \Sigma^* \times \mathbb{N}$, $k$ is called the \emph{parameter}.
A parameterized problem $L$ is \emph{fixed-parameter tractable} ({\FPT}) if there exists an algorithm $\mathcal{A}$, a computable function $f$, and a constant $c$ such that given an instance $I=(x,k)$, $\mathcal{A}$   (called an {\FPT} \emph{algorithm}) correctly decides whether $I \in L$ in time bounded by $f(k) \cdot |I|^c$. For instance, the \textsc{Vertex Cover} problem parameterized by the size of the solution is \FPT.
A parameterized problem $L$ is in {\sf XP} if there exists an algorithm $\mathcal{A}$ and two computable functions $f$ and $g$ such that given an instance $I=(x,k)$, $\mathcal{A}$  (called an {\sf XP} \emph{algorithm}) correctly decides whether $I \in L$ in time bounded by $f(k) \cdot |I|^{g(k)}$. For instance,  the \textsc{Clique} problem parameterized by the size of the solution is in  {\sf XP}.

Within parameterized problems, the class {\sf W}[1] may be seen as the parameterized equivalent to the class {\sf NP} of classical decision problems. Without entering into details, a parameterized problem being {\sf W}[1]-\emph{hard} can be seen as a strong evidence that this problem is not {\FPT}.
The canonical example of {\sf W}[1]-hard problem is \textsc{Clique}  parameterized by the size of the solution.

\subsection{Directed treewidth, brambles, and well-linked sets}\label{sec:dtw-brambles-well-linked-sets}
The \emph{directed treewidth} of digraphs was introduced in~\cite{Johnson2001} as a directed analog of treewidth of undirected graphs, together with a dual notion for directed treewidth in the form of \emph{havens}.
Since havens and \emph{brambles} in digraphs are easily interchangeable (see~\cite[Chapter 6]{Matthias2014}, for example), we focus only on the latter.
In addition, as we do not explicitly work with the decompositions, we refer the reader to~\cite{Johnson2001,KawarabayashiK15,CLMS22} for the formal definition.
We begin by formally defining the \emph{congestion} of a family of sets of vertices.

\begin{definition}[Congestion]\label{def:congestion}
  Let $D$ be a digraph and $\mathcal{S}$ be a collection of non-empty subsets of $V(D)$.
  We define the \emph{congestion} of $\mathcal{S}$ as $\max_{v \in V(D)}|\{S \in \mathcal{S} \mid v \in S\}|$.
\end{definition}

Although brambles and well-linked sets were discussed in \cref{sec:intro}, we repeat the definition here for completeness.

\begin{definition}[Brambles in digraphs]\label{def:brambles}
  A \emph{bramble} $\mathcal{B}$ in a digraph $D$ is a collection of strong subgraphs of $D$ such that if $B, B' \in \mathcal{B}$ then $V(B) \cap V(B') \neq \emptyset$ or there are arcs in $D$ from $V(B)$ to $V(B')$ and from $V(B')$ to $V(B)$.
  We say that the elements of $\mathcal{B}$ are the \emph{bags} of $\mathcal{B}$.
  The \emph{order} of $\mathcal{B}$ is the minimum size of a set $X \subseteq V(D)$ intersecting every element of $\mathcal{B}$.
  We denote by $\bn(D)$ the maximum order of a bramble in $D$.
  The \emph{congestion} of $\mathcal{B}$ is the congestion of $\{V(B) \mid B \in \mathcal{B}\}$.
\end{definition}
We say that $\mathcal{B}$ is \emph{strong} if its bags pairwise intersect.
\cref{def:brambles} is the (more permissive) definition most frequently used in related work, while strong brambles are more restrictive and, contrary to the former, closed under butterfly minors.
We remark that the brambles that we build for congestion $2$ are strong.
See \cref{fig:bramble-example} for an example of a (strong) bramble.
Notice that if $\mathcal{B}$ is a bramble of congestion $c$ then its order is at least $\lceil |\mathcal{B}|/c \rceil$.
In undirected graphs, brambles are an exact dual object to treewidth; in digraphs, brambles are still dual, although the relation between $\dtw(D)$ and $\bn(D)$ is not as tight.

\begin{figure}[h]
  \centering
  \scalebox{.65}{\begin{tikzpicture}[
    scale=.7,
    >=Latex,
    bag/.style={draw=none, fill=#1, opacity=0.35, smooth cycle, tension=0.5}
]
    \node[blackvertex, scale=2] (v12) at (90:3.6) {};
    \node[blackvertex, scale=2] (v23) at (-30:3.6) {};
    \node[blackvertex, scale=2] (v13) at (210:3.6) {};

    \node[blackvertex, scale=2] (v24) at (30:1.6) {};
    \node[blackvertex, scale=2] (v34) at (-90:1.6) {};
    \node[blackvertex, scale=2] (v14) at (150:1.6) {};

    \begin{pgfonlayer}{background}
        \fill[bag=goodred] plot coordinates { (90:4.3) (150:3.3) (210:4.3) (150:0.6) };
        
        \fill[bag=goodgreen] plot coordinates { (90:4.3) (30:3.3) (-30:4.3) (30:0.6) };
        
        \fill[bag=goodblue] plot coordinates { (210:4.3) (-90:3.3) (-30:4.3) (-90:0.6) };

        \fill[bag=goodpurple] plot coordinates { (150:2.4) (30:2.4) (-90:2.4) };
    \end{pgfonlayer}

    \draw[arrow, very thick] (v13) to [bend left = 30] (v12);
    \draw[arrow, very thick] (v12) to [bend left = 30] (v23);
    \draw[arrow, very thick] (v23) to [bend left = 30] (v13);

    \draw[arrow, very thick] (v14) -- (v24);
    \draw[arrow, very thick] (v24) -- (v34);
    \draw[arrow, very thick] (v34) -- (v14);

    \draw[arrow, very thick] (v12) -- (v14);
    \draw[arrow, very thick] (v14) -- (v13);

    \draw[arrow, very thick] (v23) -- (v24);
    \draw[arrow, very thick] (v24) -- (v12);

    \draw[arrow, very thick] (v13) -- (v34);
    \draw[arrow, very thick] (v34) -- (v23);

\end{tikzpicture}}
  \caption{Example of a bramble of order and congestion 2. The bags are denoted by the shaded regions.}\label{fig:bramble-example}
\end{figure}

\begin{proposition}[{See e.g.~\cite[Corollary~6.4.24]{KreutzerOrdyniak14}}]\label{prop:bramble-dtw}
  For any digraph $D$, $\dtw(D) = \Theta(\bn(D))$.
\end{proposition}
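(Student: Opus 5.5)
The statement asserts two inequalities, $\bn(D) = \Ocal(\dtw(D))$ and $\dtw(D) = \Ocal(\bn(D))$. I would prove them separately, passing through \emph{havens} for the second one, since havens are the dual object that \textcite{Johnson2001} works with.

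\textbf{First direction: $\bn(D) \le \dtw(D) + \Ocal(1)$.} Fix a bramble $\Bcal$ of order $k$ and a directed tree decomposition $(T, \beta, \gamma)$ of $D$ of width $w$. The goal is a node $t$ whose bag, together with the guards $\gamma(e)$ of the arcs at $t$, hits every bag of $\Bcal$.

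1. For each arc $e=(s,t)$ of $T$, the guard $\gamma(e)$ has size at most $w+1$. If $k > w+1$, some bag of $\Bcal$ avoids $\gamma(e)$.
2. Every bag of $\Bcal$ avoiding $\gamma(e)$ is strongly connected in $D - \gamma(e)$. By the guarding property, its vertex set therefore lies entirely on one side of $e$: in $\beta(T_t)$ or in its complement.
3. Any two such bags touch, or are joined by arcs in both directions, so they form a strongly connected union in $D-\gamma(e)$. Hence all bags avoiding $\gamma(e)$ lie on the same side, and I orient $e$ towards that side.
4. Follow the orientation from the root to a sink node $t$. At $t$, every bag of $\Bcal$ meets $\beta(t)$ or the guard of some arc incident to $t$. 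This set has size $\Ocal(w)$, so $k = \Ocal(w)$.

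\textbf{Second direction: $\dtw(D) = \Ocal(\bn(D))$.}

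1. Invoke \cite[Statement~(1.1)]{Johnson2001}: if $D$ has no haven of order $k$, then $\dtw(D) \le 3k-2$. Equivalently, large directed treewidth yields a haven of order $\Omega(\dtw(D))$.
2. Convert a haven $h$ of order $2k$ into a bramble of order $k$, as in \cite[Lemma~6.4.20]{KreutzerOrdyniak14}. For every $X$ with $|X| \le k$, let the bag $B_X$ be the strong component $h(X)$ of $D - X$.
3. Check the bramble property. Given $X, Y$, the haven axioms give $h(X\cup Y) \subseteq h(X) \cap h(Y)$, and $|X \cup Y| \le 2k$ is within the haven's order. So $h(X)$ and $h(Y)$ both contain the nonempty strong component $h(X\cup Y)$, and in particular they intersect.
4. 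Check the order. Suppose a set $S$ with $|S| < k$ hits every bag. Then in particular $S$ hits $B_S = h(S)$. But $h(S)$ is a component of $D - S$, so it is disjoint from $S$, a contradiction.

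Together these give $\dtw(D) \ge 2k$ implies $\bn(D) \ge k$ (up to the constants in the cited statements), so $\dtw(D) = \Theta(\bn(D))$.

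\textbf{Main obstacle.} The delicate step is the orientation argument in the first direction. The guard condition only controls walks leaving and re-entering $\beta(T_t)$, so I must verify carefully that a strong subgraph avoiding $\gamma(e)$ cannot straddle $e$. I must also check that consistency of the orientation along the root-to-sink path really produces a sink whose neighbourhood hits all bags. This is exactly the content of the cited results, which I would follow closely.
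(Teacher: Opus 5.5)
Your proposal is correct and is essentially the standard argument behind the cited \cite[Corollary~6.4.24]{KreutzerOrdyniak14}. The paper gives no proof of its own, and its introduction points to exactly your second direction (\cite[Statement~(1.1)]{Johnson2001} combined with the haven-to-bramble conversion of \cite[Lemma~6.4.20]{KreutzerOrdyniak14}), while your orientation argument gives $\bn(D)\le\dtw(D)+1$ without any real obstacle, since a closed walk inside a strong bag that avoids $\gamma(e)$ and meets both sides of $e$ directly violates the guarding condition.

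The only slip is an off-by-one: a haven of order $2k$ is defined only on sets of size at most $2k-1$, so you should index the bags by sets $X$ with $|X|\le k-1$ (your order argument only needs $|S|<k$ anyway). With the cited $3k-2$ bound this gives the threshold $\dtw(D)\ge 6k-1$ rather than $2k$, which is harmless for the $\Theta$ statement.
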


\begin{definition}[Well-linked sets]\label{def:well-linked-sets}
  Let $D$ be a digraph and $X \subseteq V(D)$.
  We say that $X$ is \emph{well-linked} if, for all disjoint $A, B \subseteq X$ with $|A| = |B|$, there are $|A|$ disjoint paths from $A$ to $B$ in $D$.
\end{definition}
In~\cite{Johnson2001}, the authors gave an $\XP$ algorithm that either correctly decides that a digraph $D$ has $\dtw(D) \leq 3k-2$ or produces a bramble of order $k$.
This was later improved to an {\FPT} algorithm in~\cite[Theorem~2.17]{CLMS22}.
Then, in~\cite[Theorem 2.23]{CLMS22} the authors show how to use the particular type of bramble obtained by~\cite[Theorem 2.17]{CLMS22} to build, in \FPT{} time, a path $P$ with $V(P)$ containing the vertices of a large well-linked set of $D$.
Formally, pipelining those two aforementioned results, the following is obtained.

\begin{proposition}[{\cite[Theorem 2.17 + Theorem 2.23]{CLMS22}}]\label{prop:linked-path}
  Given $k \in \N$, let $\wLP(k) = 3(k+1)(\lfloor k/2 \rfloor + 1) - 5 = \Ocal(k^2)$.
  There exists an algorithm running in $2^{\Ocal(k^2 \cdot \log k)} \cdot n^{\Ocal(1)}$ time that receives as input a digraph $D$ and outputs either
  \begin{enumerate}[label=\textnormal{(\alph*)}]
    \item An arboreal decomposition of $D$ of width at most $\wLP(k)$; or
    \item A path $P$ in $D$ and a well-linked set $A$ of size $k$ such that $A \subseteq V(P)$.
  \end{enumerate}
\end{proposition}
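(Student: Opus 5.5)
The statement is a pipelining of two results from~\cite{CLMS22}, so the plan is to compose them rather than reprove either.

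\textbf{Step 1: the bramble-or-decomposition algorithm.} Apply~\cite[Theorem~2.17]{CLMS22} with a parameter $k'$ chosen so that a bramble of order $k'$ suffices for the next step. This is an \FPT{} algorithm that either
\begin{enumerate}[label=\textnormal{(\roman*)}]
  \item certifies that the directed treewidth is small by producing an arboreal decomposition, or
  \item outputs a bramble $\Bcal$ of order $k'$ with the specific structure used in~\cite[Theorem~2.23]{CLMS22}.
\end{enumerate}
I take this to be bags of a special form whose hitting-set behaviour is controlled. The decomposition in outcome~(i) has width bounded by a linear function of $k'$ (of the form $3k'-2$, as in Johnson et al.). The quantity $\wLP(k) = 3(k+1)(\lfloor k/2\rfloor+1) - 5$ should therefore arise by choosing $k' = (k+1)(\lfloor k/2 \rfloor + 1) - 1$, or similar, and substituting into $3k'-2$. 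I would check this arithmetic against the exact constant in~\cite[Theorem~2.17]{CLMS22}. Outcome~(i) then gives (a).

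\textbf{Step 2: from the bramble to the path.} In the bramble case, invoke~\cite[Theorem~2.23]{CLMS22}. It walks through the bramble, building a path $P$ that meets many bags, and extracts from $V(P)$ a set $A$ of size $k$ that is well-linked in the sense of \cref{def:well-linked-sets}. The mechanism is as follows. Any small separator between two subsets of $A$ misses some bag. Because bags pairwise touch, a missed bag can be used to route, which forces $|A|$ disjoint paths by Menger's theorem. The order needed for $A$ of size $k$ is quadratic, roughly $(k/2+1)(k+1)$, because $P$ must hit about $k$ bags for each of about $k/2$ segments. This explains the $\Ocal(k^2)$ size of $\wLP(k)$.

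\textbf{Step 3: running time.} Theorem~2.17 of~\cite{CLMS22} runs in $2^{\Ocal(k'\log k')}\cdot n^{\Ocal(1)}$ time, and Theorem~2.23 runs in polynomial time, or at most the same \FPT{} bound, given the bramble. Substituting $k' = \Ocal(k^2)$ gives $2^{\Ocal(k^2\log k)}\cdot n^{\Ocal(1)}$.

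\textbf{Main obstacle.} The only delicate point is parameter matching: the bramble order produced in Step~1 must be exactly what Theorem~2.23 requires for a well-linked set of size $k$, and the resulting width bound must come out to precisely $\wLP(k)$. I would verify this by reading off the hypotheses of~\cite[Theorem~2.23]{CLMS22} directly.
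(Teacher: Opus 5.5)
Your proposal matches the paper, which gives no proof of its own and simply pipelines \cite[Theorem~2.17]{CLMS22} (an FPT algorithm returning either an arboreal decomposition or a bramble) with \cite[Theorem~2.23]{CLMS22} (a path containing a well-linked set built from that bramble). Your parameter guess agrees with the stated bound, since $3\bigl((k+1)(\lfloor k/2\rfloor+1)-1\bigr)-2 = 3(k+1)(\lfloor k/2\rfloor+1)-5$, and substituting a bramble order of $\mathcal{O}(k^2)$ into a $2^{\mathcal{O}(k'\log k')}\cdot n^{\mathcal{O}(1)}$ running time gives exactly $2^{\mathcal{O}(k^2\log k)}\cdot n^{\mathcal{O}(1)}$.
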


We see in \cref{sec:walk-systems} that the path $P$ and the well-linked set $A$, as mentioned in the second possible output of \cref{prop:linked-path}, form the backbone of the fundamental structure, namely \emph{path systems}, from where the main proof of~\cite{MasarikPilipczukRzazewskiSorge22} begins.
Our proof for congestion $2$ skips path systems in favor of \emph{strong} systems, while the proofs for congestion $4$ and $6$ simply relax path systems to \emph{walk systems}.

\subsection{Clique minors and multicolored independent sets}
A common strategy shared between this work and~\cite{MasarikPilipczukRzazewskiSorge22} is to explore the interplay of sparse and dense cases of intersection graphs built from sets of closed walks.
To do so, we observe the degeneracy $\degen(G)$ of the obtained graph $G$.
In case it is sufficiently small, the goal is to apply the following result to find multicolored independent sets in $G$.

\begin{proposition}[{\cite[Lemma~4.3]{RW12}}]\label{prop:independent-set-multipartite-non-algo}
  Given $d, r \in \N$, let $\lISMold(d, r) = 4e(r-1)d = \Ocal(d  r)$.
  Let $V_1, V_2, \dots, V_r$ be the color classes in a (not necessarily proper) $r$-coloring of the vertices of a graph $H$.
  If $|V_i| \ge \lISMold(d, r)$ for all $i \in [r]$ and $\degen(H[V_i \cup V_j]) \le d$ for all distinct $i, j \in [r]$, then there exists an independent set $\{v_1, v_2, \dots, v_r\}$ of $H$ such that each $v_i \in V_i$.
\end{proposition}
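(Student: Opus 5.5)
We prove the statement with the Lovász Local Lemma in its symmetric form, choosing one vertex per color class uniformly at random. Before doing so we shrink the classes. We may assume every class has size exactly $N := \lceil \lISMold(d,r)\rceil$, since restricting to subsets keeps the hypotheses: induced subgraphs of $H[V_i\cup V_j]$ have degeneracy at most $d$. We may also assume the classes are pairwise disjoint. If a vertex lies in several classes, we give it to only one of them. An independent set in the resulting instance is still valid, because edges only matter between distinct chosen vertices and each chosen $v_i$ still lies in $V_i$. To keep class sizes equal after this reassignment we may first pass to subsets; this detail needs a little care. Then we pick $v_i \in V_i$ independently and uniformly at random.

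The bad events are indexed by edges. For each edge $e = xy$ of $H$ with $x\in V_i$, $y\in V_j$ and $i \ne j$, let $A_e$ be the event that $v_i = x$ and $v_j = y$. Then $\Pr[A_e] = 1/N^2$. If no $A_e$ occurs, the set $\{v_1,\dots,v_r\}$ is independent; edges inside one class are irrelevant because we pick only one vertex per class. The event $A_e$ depends only on the choices made in classes $i$ and $j$. So $A_e$ is mutually independent of all events $A_f$ whose edge $f$ does not touch $V_i\cup V_j$.

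The key step is to bound the dependency degree using degeneracy. Since $\degen(H[V_i\cup V_j])\le d$, the graph $H[V_i\cup V_j]$ has at most $d\,|V_i\cup V_j| = 2dN$ edges. Hence the number of edges between $V_i$ and $V_j$ is at most $2dN$. An edge $f$ sharing a class with $e$ lies between $V_i$ or $V_j$ and one of the other $r-1$ classes. This gives at most $2(r-1)\cdot 2dN = 4(r-1)dN$ dependent events.

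This count is too large for the symmetric Local Lemma, since $e\,p\,(D+1)\approx 4e(r-1)d/N$. That is fine when $N \ge 4e(r-1)d$, which is exactly the bound $\lISMold(d,r)$, so the symmetric LLL in the form $e\,p\,(D+1)\le 1$ just works. It remains to check the constant carefully, handling the $+1$ and the slack between $|E(V_i,V_j)|$ and $2dN$; the true bound is at most $dN + dN - \binom{d+1}{2}$. I expect the main (minor) obstacle to be this bookkeeping, together with the disjointness reduction. If the constants are off by a little, I would use the asymmetric LLL, or the sharper edge bound $|E|\le d(2N) - \binom{d+1}{2}$, to absorb the $+1$.
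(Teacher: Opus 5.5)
Your proposal is correct and is the same argument the paper attributes to \cite[Lemma~4.3]{RW12}: pick one vertex per class uniformly at random, take one bad event per edge between distinct classes, bound the dependency degree using the degeneracy edge count, and apply the symmetric Local Lemma. The two loose ends you flag are harmless. First, drop the reassignment step: the color classes of a coloring are already disjoint, and reassigning shared vertices would not in general preserve the size bound. Second, the $+1$ is absorbed because $|E(V_a,V_b)|\le d(2N-1)\le 2dN-1$ for $d\ge 1$, which gives $D+1\le 2(r-1)(2dN-1)<4(r-1)dN$ and hence $e\,p\,(D+1)<4e(r-1)d/N\le 1$.
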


Otherwise, if $\degen(G)$ is ``large'' then, since every graph of degeneracy $d$ has a subgraph of average degree at least $d$, we can apply the result by \textcite[Theorem~1]{Kostochka84} (see also~\cite{Thomason84}) to find a large clique minor in $G$.

\begin{proposition}[{\cite[Theorem~1]{Kostochka84}}]\label{prop:clique-minor-non-algo}
  There exists a constant $\cCMold > 0$ such that the following holds.
  Given $t \in \N$, let \(\dCMold(t) = \cCMold \cdot t \sqrt{\log t} = \Ocal(t\sqrt{\log t})\).
  Any graph with average degree at least $\dCMold(t)$ contains $K_t$ as a minor.
\end{proposition}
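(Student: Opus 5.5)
This is the classical theorem of Kostochka and Thomason, which we import from~\cite{Kostochka84} rather than reprove. The following plan reconstructs the standard argument in two steps. First, reduce to a dense host graph on roughly $d$ vertices. Second, build the $K_t$ minor there by random branch sets of size about $\sqrt{\log t}$.

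\emph{Reduction to a dense piece.} Let $d = \dCMold(t)$ and $G$ have average degree at least $d$, i.e.\ $|E(G)| \ge \frac d2 |V(G)|$. Take a minor $H$ of $G$ that is minimal subject to $|E(H)| \ge \frac d2 |V(H)|$. By minimality:
\begin{itemize}
\item deleting any edge breaks the inequality, so $|E(H)| < \frac d2|V(H)| + 1$;
\item deleting a vertex breaks it, so every vertex has degree at least about $d/2$;
\item contracting an edge $uv$ removes one vertex and $1 + |N(u)\cap N(v)|$ edges, so every edge lies in at least about $d/2 - 1$ triangles.
\end{itemize}
Since the average degree of $H$ is below $d+1$, there is a vertex $v$ with $\deg(v) \le d$. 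The graph $F = H[N(v)]$ has at most $d$ vertices and minimum degree at least about $d/2 - 1$, by the triangle condition. So $F$ is a graph on $n \le d$ vertices with minimum degree at least about $n/2$; in particular any two vertices of $F$ have a common neighbour within $F$, up to lower-order terms. It then suffices to show that such an $F$ has a $K_t$ minor with $t = \Omega(n/\sqrt{\log n})$. Note that $n \ge d/2$, which forces $t = \Omega(d/\sqrt{\log d})$, and this is exactly the claimed $d = \Ocal(t\sqrt{\log t})$.

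\emph{Random branch sets.} Set $s = \lceil C\sqrt{\log n}\,\rceil$. Reserve a random set $R$ of about $n/3$ vertices of $F$ as connectors. Split a random set of $t s$ of the remaining vertices into $t$ groups $X_1,\dots,X_t$ of size $s$, with $t \approx n/(3s)$.

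For two fixed groups $X_i,X_j$, edge density at least about $1/2$ suggests that no edge joins them with probability about $2^{-s^2} \le n^{-C^2}$. A union bound over the $\binom t2$ pairs then makes all groups pairwise adjacent with positive probability. Since the sets are random and not independent coin flips, I would run this via a second-moment or alteration argument: discard the few groups involved in bad pairs, losing only a constant factor in $t$.

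Each $X_i$ is then made connected by adding, for each of the $s-1$ consecutive pairs in $X_i$, a common neighbour taken from $R$. Distinct branch sets must use distinct connectors, which is the step I expect to be the main obstacle. The branch sets together need about $ts = \Theta(n)$ connectors, so $R$ must be chosen as a linear-size set in which every pair still has many common neighbours. This holds with high probability because each pair has $\Omega(n)$ common neighbours in $F$; one then matches pairs to connectors greedily or by Hall's theorem.

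If the minimum-degree slack after the reduction is too small for this counting, I would first pass to a subgraph of $F$ of density bounded away from $1/2$ from below. Alternatively I would take slightly larger $s$, adjusting $C$, and absorb the loss into the constant $\cCMold$.
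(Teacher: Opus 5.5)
The paper gives no proof of this statement; it imports it from Kostochka. So the only question is whether your reconstruction stands on its own. Your reduction to a graph $F$ on $n\le d$ vertices with minimum degree about $n/2$ is the standard Mader-type step and is fine (modulo excluding the null graph from the minimality).

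The genuine gap is in the connector step. Your claim that every pair of vertices of $F$ has $\Omega(n)$ common neighbours does not follow from $\delta(F)\approx n/2$. Minimum degree $n/2$ only gives non-adjacent $x,y$ the bound $|N(x)\cap N(y)|\ge 2\delta(F)-n+2$, i.e.\ $O(1)$ common neighbours. Worse, $F$ can have a two-vertex separator: take two cliques of order $n/2+1$ sharing two vertices $p,q$. This graph has minimum degree $n/2$, yet a uniformly random $s$-set meets both sides with probability about $1-2^{1-s}$. Any connected branch set meeting both sides must contain $p$ or $q$, so at most two of your random groups can be made connected disjointly, however the connectors are chosen. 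Your fallbacks do not repair this. Density alone says nothing about connectivity, and in general one cannot pass to a linear-size subgraph with minimum degree $(1/2+\varepsilon)$ times its order ($K_{n/2,n/2}$ has none).

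What is missing is a linear \emph{connectivity} property of the dense piece. For instance, apply Mader's theorem (average degree at least $4k$ forces a $k$-connected subgraph) to $F$. This gives $F'\subseteq F$ on $n'\le n$ vertices that is $k$-connected with $k=\Omega(n)$. By Menger, any two vertices of $F'$ are joined by $k$ internally disjoint paths, and at most $n'/16$ of these have $16$ or more internal vertices. A constant fraction of the short ones survive inside a random reserve $R$, by Chernoff and a union bound over pairs. Connectors can then be chosen greedily and disjointly as long as $ts$ is a small constant fraction of $n'$.

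The adjacency step also rests on a false estimate, though it can be rescued. In a general graph of density $1/2$, two random $s$-sets span no edge with probability far above $2^{-s^2}$: in $K_{n/2,n/2}$ it is about $2^{1-2s}$, so the union bound over $\binom t2$ pairs fails. The alteration you mention does work, but not for the reason your heuristic gives. Suppose every vertex has at least $\beta n$ neighbours. Then the expected fraction $u_i$ of vertices with no neighbour in $X_i$ is at most about $(1-\beta)^s$. By Markov's inequality, $u_i<\alpha:=(8t)^{-1/s}$ except with probability at most $(1-\beta)^s(8t)^{1/s}$. On that event, the chance that some $X_j$ lies entirely in the undominated set is at most $t\alpha^s\le 1/8$. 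With $s=C\sqrt{\log n}$ and $C$ large, both terms are small, so each group lies in a bad pair with probability below $1/4$. Discarding those groups loses at most half of them with positive probability. With these two repairs your outline becomes a correct proof of the Kostochka--Thomason bound, but as written the dense case does not go through.
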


Both \cref{prop:independent-set-multipartite-non-algo,prop:clique-minor-non-algo}, however, are not algorithmic.
For the latter, with the goal of obtaining a polynomial-time algorithm to find a bramble with congestion $2$ when given a path system, we can use the following algorithmic version.

\begin{proposition}[{\cite{DHJRW2013}}]\label{prop:clique-minor}
  There exists a constant $\cCM > 0$ such that the following holds.
  Given $t \in \N$, let \(\dCM(t) = \cCM \cdot t \sqrt{\log t} = \Ocal(t\sqrt{\log t})\).
  Any graph with average degree at least $\dCM(t)$ contains $K_t$ as a minor that can be found in $\mathcal{O}(n)$ time.
\end{proposition}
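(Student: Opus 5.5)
The plan is to follow the classical Mader–Kostochka–Thomason route and make each step constructive, in the style of~\cite{DHJRW2013}. Write $D = \cCM \cdot t\sqrt{\log t}$ for a suitably large constant $\cCM$. Let $G$ have average degree at least $D$, so $|E(G)| \ge \frac{D}{2}|V(G)|$.

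\textbf{Step 1: reduce to a dense, locally rich minor.} Maintain a current minor $H$ of $G$ with $|E(H)| \ge \frac{D}{2}|V(H)|$ and apply two operations while possible.
\begin{itemize}
  \item Delete any vertex of degree less than $D/2$.
  \item Contract any edge $uv$ whose endpoints have fewer than $D/2$ common neighbours.
\end{itemize}
Contracting $uv$ loses one vertex and $1+|N(u)\cap N(v)|$ edges, so the invariant $|E(H)| \ge \frac{D}{2}|V(H)|$ is preserved. Deleting a low-degree vertex preserves it as well.

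When the process stops, $H$ is nonempty, since the invariant forces edges. It has minimum degree at least $D/2$, and every edge lies in at least $D/2$ triangles. Pick a vertex $v$ of minimum degree, which is at most $D$ by the edge bound (up to constants). The graph $H[N(v)]$ then has at most $D$ vertices and minimum degree at least $D/2$. It is therefore a graph on $m = \Theta(D)$ vertices with density bounded below by a constant, and it is a minor of $G$.

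\textbf{Step 2: find $K_t$ in the dense small graph.} Use Thomason's random-branch-set argument. In a graph on $m$ vertices with density at least $1/2$, choose $t$ disjoint random sets of size $s = \Theta(\sqrt{\log t})$. Two fixed sets are non-adjacent with probability at most $2^{-\Omega(s^2)}$, which is at most $t^{-3}$ for $\cCM$ large enough. Each set is also connected, or can be made connected through a few extra vertices, with high probability, because the minimum degree is at least $m/2$ and so every pair of vertices has common neighbours. A union bound over the $\binom t2$ pairs then gives $t$ pairwise adjacent connected branch sets, provided $ts \le m/2$; this holds for large $\cCM$.

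To make this deterministic, apply the method of conditional expectations to the count of bad pairs. Since the graph has only $\Ocal(D)$ vertices, this costs time depending only on $t$. Lifting the branch sets back through the recorded contractions yields the $K_t$-minor in $G$.

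\textbf{Main obstacle: the $\Ocal(n)$ running time of Step 1.} The deletions are easy to handle with degree buckets, but repeatedly finding an edge with few common neighbours and contracting it naively costs more than linear time. I would first try the approach of \cite{DHJRW2013}.
\begin{itemize}
  \item Begin by passing to a subgraph with $\Ocal(D\,n)$ edges; this is still linear for fixed $t$.
  \item Keep branch sets in a union–find structure.
  \item Process vertices in degeneracy order, testing common neighbourhoods only of edges at vertices of current degree at most about $2D$, each test costing $\Ocal(D)$.
  \item Charge each contraction or deletion to the vertex it eliminates.
\end{itemize}
This gives total time $\Ocal(\mathrm{poly}(t)\cdot n)$, which is $\Ocal(n)$ for fixed $t$.

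The delicate point is arguing that we never need to examine high-degree vertices. The answer should come from an averaging argument: a vertex of degree at most about $D$ always exists, and the local procedure either eliminates it or certifies that its neighbourhood is already the dense graph required in Step 2.
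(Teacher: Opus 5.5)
The paper does not prove this proposition; it quotes it from \cite{DHJRW2013}. In essence, that paper finds in linear time a minor of $G$ whose number of vertices is bounded in terms of $t$ and whose average degree is at least the extremal function $g(t)$ for $K_t$-minors. It then locates $K_t$ in that minor by exhaustive search, with existence supplied by Kostochka--Thomason (\cref{prop:clique-minor-non-algo}).

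Your outline has this shape, but Step~1 as written does not produce a small minor. The invariant $|E(H)| \ge \frac{D}{2}|V(H)|$ is only a lower bound, so nothing bounds the minimum degree of the terminal $H$ from above. For $G = K_n$ with $n \gg D$, neither of your operations ever applies, and $H[N(v)]$ has $n-1$ vertices. A one-time trimming to $\Ocal(Dn)$ edges does not help, since later deletions and contractions shrink $|V(H)|$ and can let $|E(H)|/|V(H)|$ grow. You need the third operation of Mader's argument: delete an edge whenever $|E(H)| \ge \frac{D}{2}|V(H)|+1$. Then the terminal $H$ has average degree below $D+1$, so $\deg(v) < D+1$, and every edge still has at least $D/2$ common neighbours.

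More seriously, the $\Ocal(n)$ bound, which is the actual content of \cite{DHJRW2013}, is left as a hope in your proposal. At every stage you would have to find a vertex of degree $\Ocal(D)$. You would have to test the common neighbourhoods of its edges without scanning the adjacency lists of possibly high-degree neighbours. You would also have to maintain the edge upper bound under contractions. All of this must cost amortised time depending only on $t$ per eliminated vertex. \cite{DHJRW2013} manage this only after giving up a constant factor: their theorem assumes average degree at least $(2+\varepsilon)g(t)$, which $\cCM$ absorbs. Your sketch supplies no argument for this part.

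Step~2 is where the proposal breaks as written, and it is also unnecessary. The bound $2^{-\Omega(s^2)}$ on the probability that two random $s$-sets are non-adjacent holds in random-like graphs. It fails in arbitrary graphs of density $1/2$ and minimum degree $m/2$. Take $D$ even and $G = K_{D/2,D/2,D/2}$. It has average degree $D$, $|E(G)| = \frac{D}{2}|V(G)|$, and every edge has exactly $D/2$ common neighbours. So none of the operations applies, including the edge deletion above, and $H[N(v)] = K_{D/2,D/2}$. Two of your random sets lie on the same side, and are therefore non-adjacent, with probability about $2^{1-2s}$. For $s = \Theta(\sqrt{\log t})$ this is $t^{-o(1)}$, not $t^{-3}$. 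The expected number of non-adjacent pairs is then much larger than $t$, so the union bound over the $\binom{t}{2}$ pairs collapses.

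The connectivity of the branch sets is likewise only asserted. Two vertices having a common neighbour does not by itself provide, for all $t$ sets at once, pairwise disjoint connectors that avoid the other sets. Making the random construction work for arbitrary dense graphs is exactly the non-trivial part of the Kostochka--Thomason theorem.

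You can bypass it. With the edge deletions added, $H[N(v)]$ is a minor of $G$ on fewer than $D+1$ vertices with minimum degree at least $D/2$. For $\cCM \ge 2\cCMold$, it therefore contains $K_t$ as a minor by \cref{prop:clique-minor-non-algo}. An exhaustive search over $H[N(v)]$ finds this minor in time depending only on $t$, and lifting it through the recorded contractions costs $\Ocal(n)$. After this repair, the only real gap left is the linear-time implementation of Step~1.
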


For an algorithmic version of \cref{prop:independent-set-multipartite-non-algo}, some alternatives need to be considered.
The proof of \cref{prop:independent-set-multipartite-non-algo} depends on the Lovász Local Lemma~\cite{LE75} (henceforth abbreviated as the LLL).
Informally, the LLL says that, when given a set $\mathcal{X}$ of ``undesirable events'' on a probability space, if the events in $\mathcal{X}$ are mostly independent of one another and occur with reasonably low probabilities, then there is a configuration where no event in $\mathcal{X}$ occurs.
In the context of \cref{prop:independent-set-multipartite-non-algo}, the ``undesirable events'' are the existence of edges between distinct parts $V_i, V_j$, and the ``reasonably low'' probability that edges appears comes from the bound on the degeneracy of the graphs of the form $H[V_i \cup V_j]$ with $i \neq j$.

Since the original proof of the LLL is not constructive, a lot of work has been done into finding algorithmic versions of the LLL\@.
\textcite{MT2010} gave a constructive proof that ends in expected polynomial time.
\textcite{CGH2013} showed that an algorithmic proof is also possible with a slightly weaker statement, which is used by \textcite{LS26} to adapt the proof of~\cite{MasarikPilipczukRzazewskiSorge22} into an algorithm.
\textcite{Harris2022} gave an algorithmic version of the LLL with the same bounds as the original proof, but without specifying the degree of the polynomial in its polynomial running time.
For simplicity, and since using the version in~\cite{CGH2013} would imply an increase in the dependence on $k$ in the statement of \cref{thm:main-proof-short}, we elect to use the result by \textcite{Harris2022}, which appears implicitly in the following.

\begin{proposition}[{\cite[Theorem~1.5]{Harris2022}}, degeneracy version]{\label{prop:independent-set-multipartite}}
  Let $\varepsilon > 0$ be a constant.
  Given $d, r \in \N$, let $\lISM(d, r) = (4+2\varepsilon)(r-1)d = \Ocal(d  r)$.
  Let $V_1, V_2, \dots, V_r$ be the color classes in a (not necessarily proper) $r$-coloring of the vertices of a graph $H$.
  If $|V_i| \ge \lISM(d, r)$ for all $i \in [r]$ and $\degen(H[V_i \cup V_j]) \le d$ for all distinct $i, j \in [r]$, then there is an algorithm running in polynomial time that finds an independent set $\{v_1, v_2, \dots, v_r\}$ of $H$ such that each $v_i \in V_i$.
\end{proposition}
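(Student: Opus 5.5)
The statement is a direct reformulation of Harris's algorithmic Lovász Local Lemma (\cite[Theorem~1.5]{Harris2022}) in the setting of \cref{prop:independent-set-multipartite-non-algo}. I would follow the probabilistic argument of \textcite{RW12} and replace the call to the LLL by Harris's algorithm.

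\textbf{Preprocessing.} First shrink each class. By deleting vertices we may assume $|V_i| = N := \lceil \lISM(d,r) \rceil$ for every $i$. If the classes overlap, keep each vertex in only one class; this does not affect the conclusion, since a vertex appearing in two classes could only be chosen for one of them. Deleting vertices does not increase the degeneracy of any $H[V_i \cup V_j]$.

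\textbf{Orienting and defining the bad events.} For each pair $i<j$, the graph $H[V_i\cup V_j]$ is $d$-degenerate, so it can be oriented acyclically with out-degree at most $d$. The sampling is as follows: independently for each $i$, pick $v_i \in V_i$ uniformly at random. For each edge $uv$ of $H$ with $u\in V_i$, $v\in V_j$, $i\neq j$, define the bad event $B_{uv}$ that $v_i=u$ and $v_j=v$. Then $\Pr[B_{uv}] = 1/N^2$. Edges inside a single class are irrelevant, because only one vertex is chosen per class. Avoiding every bad event gives exactly the desired independent set.

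\textbf{Checking the LLL condition.} The event $B_{uv}$ depends only on the variables $v_i$ and $v_j$. So it shares variables with events involving $u$, events involving $v$, and events involving the other vertices of $V_i$ or $V_j$. Rather than bounding the maximum degree of the dependency graph, which may be unbounded since degrees in $H$ are unbounded, I would use the variable-version (Shearer/cluster) criterion in the weighted form used by Reed and Wood.

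For a fixed variable $v_i$, the bad events involving it are the cross edges incident to $V_i$. Their total probability is
\[
  \sum_{j\ne i} \frac{|E(H[V_i\cup V_j])|}{N^2} \le \frac{(r-1)\cdot 2dN}{N^2} = \frac{2d(r-1)}{N}.
\]
With $N \ge (4+2\varepsilon)(r-1)d$ this is at most $\frac{1}{2+\varepsilon}$, a constant strictly below $1/2$.

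Each bad event involves exactly two variables. The standard ``each variable carries total bad-probability mass $\le 1/4$'' style criterion, which Reed and Wood verify with the constant $4e$, is then satisfied. Harris's Theorem~1.5 works with the LLL-distribution / Shearer bound with slack $\varepsilon$, so the weaker constant $(4+2\varepsilon)$ suffices.

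\textbf{Main obstacle.} The delicate point is matching precisely the hypothesis of \cite[Theorem~1.5]{Harris2022}: a variable-based criterion with $\varepsilon$-slack that is stated in terms of per-variable weighted sums rather than the symmetric LLL. I would phrase the condition as: for every variable, the sum over incident events of $\Pr[B]\cdot(\text{weight})$ is at most $\frac{1}{4+2\varepsilon}$ times a normalising factor. The bound above gives this once $d$ is replaced by the edge density $\le d$.

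\textbf{Running time.} The events and their variables are explicitly listed and number at most $\binom{r}{2} dN$. Conditional probabilities are trivially computable. Hence Harris's algorithm runs in polynomial time and outputs an assignment avoiding all bad events, which is the required independent set $\{v_1,\dots,v_r\}$.
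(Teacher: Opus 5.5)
Your local-lemma step does not give the constant in the statement. The per-class mass you compute is $m=2d(r-1)/N\le 1/(2+\varepsilon)$, which is larger than the threshold $1/4$ of the criterion you then invoke. To see where $1/4$ comes from, take weights $\mu_B=\alpha\Pr[B]$. An independent subset of $\Gamma^+(B_{uv})$ contains at most one event at $V_i$ and one at $V_j$, so the cluster-expansion condition becomes $(1+\alpha m)^2\le\alpha$, which is solvable only for $m\le 1/4$, i.e.\ $N\ge 8(r-1)d$.

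Appealing to ``slack $\varepsilon$'' cannot recover the missing factor of two. Shearer's criterion is the best that can be certified from the dependency graph and the probabilities alone, and for large $r$ it already fails once $m$ exceeds roughly $1/4$. Here the dependency graph is the line graph of the multigraph on the classes, so its independence polynomial is a weighted matching polynomial. For $K_r$ with equal weights it is a Hermite polynomial with largest root $\sim 2\sqrt{r}$. Such instances are realisable within the hypotheses: for every pair, make $d$ vertices on each side complete to the other class.

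The constant $4+2\varepsilon$ is not a local-lemma constant. It is the Haxell-type bound $|V_i|\ge(2+\varepsilon)\Delta$ of \cite[Theorem~1.5]{Harris2022} with $\Delta=2d(r-1)$. The paper uses that theorem as a black box; it does no LLL computation.

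Separately, your remark that the dependency degree may be unbounded is mistaken. $B_{uv}$ is mutually independent of all events not meeting $V_i\cup V_j$, and at most $4dN(r-1)$ events meet these two classes. That count is exactly how Reed and Wood obtain $4e$.

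The paper's reduction has a gap of its own. $d$-degeneracy of $H[V_i\cup V_j]$ bounds its number of edges. With classes of equal size $N$, this bounds only the \emph{average} cross-degree in $V_i$ by $2d(r-1)$, not the cross-degree of every vertex: a star between $V_i$ and $V_j$ is $1$-degenerate. To use a maximum-degree theorem you must first keep, in each class, the $N/2$ vertices of smallest cross-degree. That gives $\Delta\le 4d(r-1)$ on classes of size $N/2$, and hence requires $N\ge(16+8\varepsilon)(r-1)d$.

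Your variable-LLL route, corrected to the threshold $m\le 1/4$, gives the better bound $N\ge 8(r-1)d$. It is algorithmic, since Moser--Tardos runs in polynomial expected time under the cluster-expansion criterion. Either way, what you can actually prove is the proposition with a larger constant. This is harmless for the paper, because only the order $\mathcal{O}(dr)$ of this bound enters the later estimates. But neither your argument nor the paper's remark establishes $(4+2\varepsilon)(r-1)d$ as stated.
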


We remark that the original statement of~\cite[Theorem~1.5]{Harris2022} uses maximum degree instead of degeneracy to bound the number of edges between distinct parts $V_i, V_j$.
The condition that $\degen(H[V_i \cup V_j]) \le d$ for all distinct $i, j \in [r]$ implies that each vertex of a given $V_i$ has at most $2d(r-1)$ neighbors in the other sets $V_j$ with $j \neq i$.
Thus, we can substitute the condition $|V_i| \geq (2+\varepsilon)\Delta$ of~\cite[Theorem 1.5]{Harris2022}, where $\Delta$ is the maximum degree of the given graph, by $|V_i| \geq (4+2\varepsilon)(r-1)d$ as in \cref{prop:independent-set-multipartite}.
Lastly, we also use the following auxiliary results.

\begin{proposition}\label{prop:average-min-degree}
  Given $d > 0$, if a graph $G$ has average degree \(d\), then \(G\) contains a subgraph of minimum degree larger than \(d/2\).
\end{proposition}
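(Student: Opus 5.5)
The plan is the standard peeling argument: repeatedly delete low-degree vertices and argue that the deletions cannot remove all edges. Let $n = |V(G)|$ and $m = |E(G)|$, so that $m = dn/2$ with $d > 0$; in particular $G$ has at least one edge. Set $G_0 = G$. As long as the current graph $G_t$ has a vertex $v$ with $\deg_{G_t}(v) \le d/2$, let $G_{t+1} = G_t - v$. Isolated vertices are also deleted in this process.

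Every deletion removes at most $d/2$ edges. Since at most $n$ vertices can be deleted, the total number of removed edges is at most $n \cdot d/2 = m$. To show the process cannot exhaust the graph, I will make this count strict. The last vertex ever deleted, if the procedure empties the graph, has degree $0$ when it is removed. Hence at most $(n-1)\cdot d/2 < m$ edges are deleted in total.

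It follows that the procedure stops at a graph $H$ that still has at least one edge, so $H$ is nonempty. Because the procedure stopped, every vertex of $H$ has degree greater than $d/2$. Thus $H$ is a subgraph of $G$ with minimum degree larger than $d/2$, as required.

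The main obstacle, though a minor one, is getting the strict inequality: the naive count of at most $nd/2 = m$ removed edges does not rule out ending with the empty graph. The fix is exactly the observation that the final deleted vertex contributes $0$ edges. Equivalently, one can track the invariant that $|E(G_t)| - (d/2)|V(G_t)|$ is nondecreasing and starts at $0$. The empty graph would need this quantity to be $0$ with every step contributing exactly $0$, but the final step contributes $d/2 > 0$, which is a contradiction.
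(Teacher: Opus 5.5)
Your proof is correct and is essentially the paper's argument: the paper also repeatedly deletes vertices of degree at most $d/2$, and its invariant that the average degree stays at least $d$ is just your potential $|E(G_t)| - (d/2)|V(G_t)|$ staying nonnegative, written differently. Your strictness step (the last deleted vertex has degree $0$, so fewer than $m$ edges are ever removed) spells out more carefully what the paper states tersely, namely that the final graph has positive average degree and is therefore nonempty.
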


\begin{proof}
  For $v \in V(G)$, let $\deg(v)$ be the degree of $v$.
  First, note that if \(v \in V(G)\) has degree at most \(d/2\), then the average degree of \(G - v\) is at least
  \[
    \frac{|V(G)|d - 2\deg(v)}{|V(G)|-1} \ge \frac{|V(G)|d - d}{|V(G)|-1} = \frac{d(|V(G)|-1)}{|V(G)|-1} = d.
  \]
  Since removing a vertex of degree at most \(d/2\) does not decrease the average degree, iteratively delete from \(G\) any vertex of degree at most \(d/2\) to obtain a subgraph \(H\).
  As \(H\) has positive average degree, it is non-empty and by construction it has no vertex of degree at most \(d/2\).
\end{proof}

We remark that \cref{prop:clique-minor-non-algo,prop:clique-minor} also work for a graph with degeneracy \(\Omega(t\sqrt{\log t})\), possibly with a different constant factor, due to \cref{prop:average-min-degree}.
For simplicity, we assume the constants \(\cCMold, \cCM \) work for both versions of this result.

Given a sequence $X = (x_1, x_2, \dots, x_a)$, a \emph{segment} of $X$ is a set of consecutive elements of $X$, that is, a set of the form $\{x_i, x_{i+1}, \dots, x_j\}$ for some $1 \le i \le j \le a$.

\begin{proposition}[{\cite[Lemma~15]{MasarikMuziPilipczukRzazewskiSorge21}}]\label{prop:partition}
  Given $k, r \in \N$, let \(\dP(k,r) = 2^9 \cdot k  r = \Ocal(k  r)\).
  Let $G$ be a bipartite graph with vertex-ordered bipartition classes $X = (x_1, x_2, \dots, x_a)$ and $Y = (y_1, y_2, \dots, y_b)$.
  If the minimum degree of $G$ is at least $\dP(k,r)$, then there are $k$ sets $X_1, X_2, \dots, X_k$, and $k$ sets $Y_1, Y_2, \dots, Y_k$, such that:
  \begin{enumerate}[label=\textnormal{(\alph*)}]
    \item\label{item:partition-prop-1} For each $i \in [k]$, the set $X_i$ is a segment of $X$ and the set $Y_i$ is a segment of $Y$;
    \item\label{item:partition-prop-2} For each distinct $i, j \in [k]$, we have $X_i \cap X_j = \emptyset$ and $Y_i \cap Y_j = \emptyset$; and
    \item\label{item:partition-prop-3} For each $i \in [k]$, the average degree of the graph $G[X_i \cup Y_i]$ is at least $r$.
  \end{enumerate}
\end{proposition}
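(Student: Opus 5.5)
The plan is to prove \cref{prop:partition} by a density argument on a coarse block decomposition, finishing with König's theorem. Write $D := \dP(k,r) = 2^9 kr$ and $m := |E(G)|$. Every vertex of $X$ has degree at least $D$, so $m \ge Da$, and likewise $m \ge Db$. In particular every vertex has degree at most $\max(a,b) \le m/D$.

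\textbf{Step 1: cut both orderings into segments.} Fix a parameter $t$ with $4k \le t \le D$, to be tuned later. Cut $X$ greedily from left to right: extend the current segment while its number of incident edges stays at most $m/t$, and close it just before it would exceed $m/t$. Since a single vertex has degree at most $m/D \le m/t$, each segment carries at most $m/t$ edges. Any two consecutive segments together carry more than $m/t$ edges, so there are at most $2t$ segments $X'_1,\dots,X'_{s}$. Do the same for $Y$, obtaining $Y'_1,\dots,Y'_{s'}$.

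\textbf{Step 2: most edges lie in dense blocks.} Call a block $(p,q)$ \emph{dense} if $e(X'_p,Y'_q) \ge \tfrac r2 (|X'_p|+|Y'_q|)$. This is exactly the condition that $G[X'_p\cup Y'_q]$ has average degree at least $r$, which is \cref{item:partition-prop-3}. Summing the defining inequality over the non-dense blocks bounds the edges they contain by
\[
  \tfrac r2\Bigl(s'\sum_p |X'_p| + s\sum_q|Y'_q|\Bigr) \le \tfrac r2 \cdot 2t(a+b) \le \tfrac{2rt\,m}{D}.
\]
Taking $t=4k$ makes this at most $m/64$. Hence the dense blocks carry at least $(1-1/64)m$ edges.

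\textbf{Step 3: extract $k$ dense blocks in distinct rows and columns.} Form the bipartite "block graph" whose vertices are the segment indices $p$ and $q$, with an edge $pq$ whenever $(p,q)$ is dense. Suppose it has no matching of size $k$. By König's theorem its dense blocks are then covered by at most $k-1$ rows and columns. Each row or column carries at most $m/t = m/(4k)$ edges, so the dense blocks carry fewer than $m/4$ edges. This contradicts Step 2.

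So a matching $(p_1,q_1),\dots,(p_k,q_k)$ exists. Set $X_i := X'_{p_i}$ and $Y_i := Y'_{q_i}$. These are segments, giving \cref{item:partition-prop-1}. They are pairwise disjoint because the $p_i$ are distinct and the $q_i$ are distinct, giving \cref{item:partition-prop-2}. Each block is dense, giving \cref{item:partition-prop-3}.

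\textbf{Main obstacle.} The delicate point is Step 1. High-degree vertices could make some segment carry far more than $m/t$ edges, which would break the König counting in Step 3. This is where the minimum-degree hypothesis enters a second time: it gives $\max(a,b)\le m/D$, so no single vertex has degree above $m/t$ when $t \le D$. The remaining work is checking that the constants fit ($t=4k\le D$, and the loss $2rt/D$ is small). The factor $2^9$ leaves ample slack for this.
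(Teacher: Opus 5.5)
Your proof is correct, but it follows a different route from the one the paper relies on. The paper does not reprove this statement. It imports it from the cited Lemma~15 and describes that proof as a recursive argument: it counts degrees along the vertex orderings to find segment-induced subgraphs with a guaranteed number of edges, then recurses on them, and this is what yields an algorithm linear in the number $k$ of parts.

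You replace the recursion by a single global step, in three parts:
\begin{itemize}
  \item one greedy cut of each side into at most $2t$ consecutive segments of edge-load at most $m/t$, where the minimum degree gives $\Delta(G)\le\max(a,b)\le m/D$;
  \item one averaging step showing that blocks of average degree below $r$ carry at most $2rtm/D$ edges, where the minimum degree enters again through $a+b\le 2m/D$;
  \item K\"onig's theorem, which turns the fact that every row and column is light into $k$ dense blocks with distinct rows and distinct columns.
\end{itemize}

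This is shorter, and it makes the slack explicit: taking $t=2k$, the same computation already works with $8kr$ in place of $2^9kr$. It is also just as algorithmic, since it needs only linear-time segmentation, block edge counts, and a matching in a bipartite graph with $\mathcal{O}(k)$ vertices per side. So it supports the paper's polynomial-time remark equally well.

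You can even drop K\"onig. With your choice $t=4k$, pick dense blocks greedily and discard the row and column of each pick. Each step destroys at most $2m/t=m/(2k)$ dense-block edges. After $k-1$ picks, more than $63m/64-m/2>0$ dense-block edges remain, so a further dense block in a fresh row and column always exists.
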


\begin{proposition}[{\cite[Lemma~3.1]{MasarikPilipczukRzazewskiSorge22}}]\label{prop:dense-bramble}
  If a digraph $D$ contains a family $\Zcal$ of closed walks with congestion at most $c$ such that $\degen(\intersec(\Zcal)) \ge \dCM(k)$, then $D$ contains a bramble of size $k$ and congestion at most $c$.
\end{proposition}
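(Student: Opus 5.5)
Since $\degen(\intersec(\Zcal)) \ge \dCM(k)$, the remark after \cref{prop:average-min-degree} (or directly: a graph of degeneracy $d$ has a subgraph of minimum degree at least $d$, hence average degree at least $d$) lets us apply \cref{prop:clique-minor} to $\intersec(\Zcal)$. This yields a $K_k$ minor: pairwise disjoint vertex sets $\Zcal_1, \dots, \Zcal_k \subseteq \Zcal$ such that each $\intersec(\Zcal_i)$ is connected and, for distinct $i,j$, some walk of $\Zcal_i$ intersects some walk of $\Zcal_j$.

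For each $i \in [k]$, define the bag $B_i$ as the subgraph of $D$ formed by the union of all closed walks in $\Zcal_i$, taking their vertices and arcs. The claim is that $\Bcal = \{B_1, \dots, B_k\}$ is the desired bramble. It has three properties to check.
\begin{enumerate}
  \item \textbf{Each bag is strong.} Every closed walk induces a strongly connected subgraph. If two strongly connected subgraphs share a vertex, their union is strongly connected. Since $\intersec(\Zcal_i)$ is connected, an induction along a spanning tree of $\intersec(\Zcal_i)$ shows that $B_i$ is strong.
  \item \textbf{Bags pairwise touch.} By the defining property of a clique minor, $V(B_i) \cap V(B_j) \neq \emptyset$ for all $i \neq j$. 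So the bramble condition holds, and the bramble is even strong in the paper's sense.
  \item \textbf{Congestion.} Fix a vertex $v$. If $v \in V(B_i)$, then $v$ lies on some walk of $\Zcal_i$. The sets $\Zcal_i$ are disjoint, so distinct bags containing $v$ are witnessed by distinct walks of $\Zcal$ through $v$. There are at most $c$ such walks, so $v$ lies in at most $c$ bags.
\end{enumerate}
Finally, the $k$ bags are distinct as elements of the collection since they come from distinct branch sets, so $|\Bcal| = k$. If one insists on distinct subgraphs, note that two bags that coincide would still be counted separately, and the congestion argument above still applies.

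The only point needing care is the passage from degeneracy to average degree required by \cref{prop:clique-minor}. The paper has already handled this by assuming the constant $\cCM$ covers the degeneracy version, so no step is really an obstacle. Algorithmically, everything here is polynomial once the minor is found.
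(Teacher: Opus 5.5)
Your proof is correct and takes the same route as the cited proof (and as the paper's own argument in the proof of \cref{lem:final-bramble}). You pass from degeneracy to a subgraph of minimum degree at least $\dCM(k)$, extract a $K_k$ minor of $\intersec(\Zcal)$, and take as bags the unions of the closed walks in each branch set; strong connectivity, pairwise intersection, and congestion at most $c$ then follow from the connectivity, the mutual adjacency, and the disjointness of the branch sets, respectively.
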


We remark that the proof of \cref{prop:dense-bramble} given in~\cite{MasarikPilipczukRzazewskiSorge22} is not algorithmic as it uses~\cite[Theorem~1]{Kostochka84}.
We can, however, substitute this latter result by its algorithmic counterpart given in \cref{prop:clique-minor}, as done by \textcite{LS26}, and assume \cref{prop:dense-bramble} is also algorithmic.

\section{Walkages}\label{sec:walkages}

In this section we introduce the concept of \emph{walkages}, which generalize linkages by replacing paths with walks.
Walkages are a key notion in this work, as they allow us to simplify arguments that use both linkages and well-linked sets.
We also establish several properties of walkages that play a central role in the process of building low congestion brambles.

\begin{definition}[Walkages]\label{def:walkages}
  Given a digraph $D$, a \emph{walkage} is a collection of pairwise vertex-disjoint walks in $D$.
  Given a digraph $D$ and $A, B \subseteq V(D)$, we say a walkage $\Wcal$ is an $(A, B)$-walkage if the first (resp.\ last) vertex of each walk in $\Wcal$ is in $A$ (resp.\ in $B$).
  A \emph{subwalkage} of a walkage \(\Wcal \) is a walkage which is a subset of \(\Wcal \) (that is, every walk of a subwalkage of \(\Wcal \) is a walk of \(\Wcal \)).
  A walkage is \emph{closed} if all of its walks are closed.
  The \emph{size} of the walkage is the number of walks it contains.
  A walkage $\Wcal$ of size $w$ is said to be \emph{$\ell$-linked} if each walk $W_i \in \Wcal$ contains a set $A_i \subseteq V(W_i)$ of size at least $\ell$ such that $\bigcup_{i \in [w]} A_i$ is well-linked.

  For a positive integer $c$, a \emph{walkage with congestion $c$} is a set of walks $\mathcal{W}$ where the congestion of $\{V(W) \mid W \in \mathcal{W}\}$ is $c$.
  We extend the notions of $(A, B)$-walkages, subwalkages, closed walkages, and $\ell$-linked walkages to walkages with congestion.

\end{definition}

Linkages and well-linked sets are fundamental tools in structural (di)graph theory.
A large linkage provides a way to route many disjoint paths between prescribed sets of vertices, while a large well-linked set serves both as an obstruction to  small directed treewidth and as a way to find such linkages.
Linkages in which each path contains a large subset of a well-linked set are particularly useful to build highly connected objects, such as path systems (as in~\cite{KK22, MasarikPilipczukRzazewskiSorge22, HKMM24}), which are the basic building blocks for more complex structures, like brambles of low congestion~\cite{MasarikPilipczukRzazewskiSorge22} or even the cylindrical grid~\cite{KK22}.

At first glance, walkages and linkages are equivalent: the existence of one implies the existence of the other, the main difference being that walks in a walkage may self-intersect, whereas paths in a linkage do not.
However, working with walkages is simpler, as we avoid walk-to-path polishing steps.
In \cref{fig:walkages}, we see an example of a walkage which can be converted into a linkage, but if we convert the walks into paths, the set of removed vertices may contain vertices in a particular well-linked set.
We introduce walkages to overcome this issue, and to develop new techniques, which are not possible with linkages.

\begin{figure}[htbp]
  \centering
  \begin{tikzpicture}[yscale=.75]
    \node[draw, circle, fill = black, inner sep = 0pt, minimum size = 3pt] (00) at (0,0) {};
    \node[draw, circle, fill = black, inner sep = 0pt, minimum size = 3pt] (01) at (1,0) {};
    \node[draw, circle, fill = black, inner sep = 0pt, minimum size = 3pt] (02) at (2,0) {};
    \node[draw, circle, fill = black, inner sep = 0pt, minimum size = 3pt] (03) at (3,0) {};
    \node[draw, circle, fill = black, inner sep = 0pt, minimum size = 3pt] (04) at (1,1) {};
    \node[draw, circle, fill = black, inner sep = 0pt, minimum size = 3pt] (05) at (2,1) {};

    \draw[thick, -latex] (00) -- (01);
    \draw[thick, -latex] (01) to[bend left] (04);
    \draw[thick, -latex] (04) to[bend left] (01);
    \draw[thick, -latex] (01) -- (02);
    \draw[thick, -latex] (02) to[bend left] (05);
    \draw[thick, -latex] (05) to[bend left] (02);
    \draw[thick, -latex] (02) -- (03);

    \node[draw, circle, fill = black, inner sep = 0pt, minimum size = 3pt] (10) at (0,-1.2) {};
    \node[draw, circle, fill = black, inner sep = 0pt, minimum size = 3pt] (11) at (1.5,-1.2) {};
    \node[draw, circle, fill = black, inner sep = 0pt, minimum size = 3pt] (12) at (3,-1.2) {};

    \draw[thick, -latex] (10) -- (11);
    \draw[thick, -latex] (11) -- (12);

    \node[draw, circle, fill = black, inner sep = 0pt, minimum size = 3pt] (20) at (0,-2.4) {};
    \node[draw, circle, fill = black, inner sep = 0pt, minimum size = 3pt] (21) at (1.5,-2.4) {};
    \node[draw, circle, fill = black, inner sep = 0pt, minimum size = 3pt] (22) at (3,-2.4) {};
    \node[draw, circle, fill = black, inner sep = 0pt, minimum size = 3pt] (23) at (1,-3.2660254) {};
    \node[draw, circle, fill = black, inner sep = 0pt, minimum size = 3pt] (24) at (2,-3.2660254) {};

    \draw[thick, -latex] (20) -- (21);
    \draw[thick, -latex] (21) -- (23);
    \draw[thick, -latex] (23) -- (24);
    \draw[thick, -latex] (24) -- (21);
    \draw[thick, -latex] (21) -- (22);

    \draw[dashed] (0,-1.2) ellipse (0.6cm and 2.2cm);
    \draw[dashed] (3,-1.2) ellipse (0.6cm and 2.2cm);

    \node[left] at (-0.6,-1.2) {$A$};
    \node[right] at (3.6,-1.2) {$B$};
  \end{tikzpicture}

  \caption{Example of an $(A,B)$-walkage of size $3$ in a digraph.}\label{fig:walkages}
\end{figure}

We remark that \cref{def:walkages} builds on the concept of half-integral linkages (see e.g.~\cite{Edwards2017,MasarikMuziPilipczukRzazewskiSorge21}) and generalizes it further allowing an arbitrary congestion.
This is particularly useful to reduce the degree of the polynomial that bounds the directed treewidth when building brambles of larger congestion, namely $4$ and $6$.
The relaxation of the congestion allows us to save up on some constructions that need not be as strict as in the congestion $2$ case.

We now extend the notion of \emph{threaded linkages}, introduced in~\cite[Definition~3.4]{MasarikPilipczukRzazewskiSorge22}\footnote{This concept seems to have first appeared in~\cite{MasarikMuziPilipczukRzazewskiSorge21} under the name of \emph{interlaced-walk} but played a smaller role there.}.

\begin{definition}[Threaded walkages]\label{def:threaded-walkage}
  Let \(\Wcal = \{ W_1, W_2, \dots, W_\ell \} \) be a walkage in a digraph $D$.
  A \emph{threaded walkage on} \(\Wcal \) is a pair \((\Wcal, \Tcal)\), where \(\Tcal = \{T_1, T_2, \dots, T_{\ell - 1}\} \) is a walkage in $D$ and for each \(i \in [\ell - 1] \) we have \(\stLink(T_i) = \enLink(W_i)\) and \(\enLink(T_i) = \stLink(W_{i+1})\).
  The walks in $\Tcal $ are called \emph{threads}.
  For all $1 \le i \le j \le \ell$, we define $\Wcal(i, j)$ as the walk $W_i \cdot T_i \cdot W_{i+1} \cdots T_{j-1} \cdot W_j$, which starts at the first vertex of $W_i$ and ends at the last vertex of $W_j$; in particular, $\Wcal(i, i) = W_i$.
  The \emph{concatenation} of \((\Wcal, \Tcal)\) is the walk given by $\Wcal(1, \ell)$.
  A threaded walkage is called \emph{untangled} if for each \(i \in [\ell - 1] \) the thread $T_i$ only intersects the rest of $\Wcal $ in $W_i$ or $W_{i+1}$.
  The \emph{size} of the threaded walkage \((\Wcal, \Tcal)\) is the number $\ell$ of walks in $\Wcal$.
\end{definition}

Threaded walkages allow us to connect the walks of a walkage $\Wcal$ using walks from another walkage $\Tcal$.
Untangled threaded walkages ensure that we can connect these walks in a reasonably disjoint way, which is important for maintaining a low congestion when building the bramble.
See \cref{fig:untangled-threaded-walkage} for an example of threaded walkages.

\begin{figure}[htbp]
  \centering
  \begin{tikzpicture}[scale = 0.5]
    \draw[red, walkage, decoration = {coil, segment length = 4.1mm, amplitude = 1.1mm}] (0,3.94) -- (4,0);

    \draw[red, walkage] (4.0,4) .. controls (3.2,5.2) and (-2.0,5.2) .. (-1.4,2.6) .. controls (-1.4,1.3) and (1.8,-1.2) .. (8.0,0);

    \draw[walkage, decoration = {segment length = 4mm}] (0,0) -- (0,4);
    \draw[walkage, decoration = {segment length = 4mm}] (4,0) -- (4,4);
    \draw[walkage, decoration = {segment length = 4mm}] (8,0) -- (8,4);

    \node[below = 0.2cm] at (0, 0) {$W_1$};
    \node[below = 0.2cm] at (4, 0) {$W_2$};
    \node[below = 0.2cm] at (8, 0) {$W_3$};

    \node[above right, red] at (2,2) {$T_1$};
    \node[left = 0.18cm, red] at (-1.4,2.6) {$T_2$};

    \begin{scope}[scale = 2, xshift = 6.5cm]

      \draw[red, walkage, decoration = {segment length = 2.4mm, post length = 1.5mm}] (0,2) arc (90:240:0.4) -- (2.2, 0.7464) arc (60:-90:0.4);
      \draw[red, walkage, decoration = {segment length = 2.4mm, post length = 1.5mm}] (2,2) arc (90:240:0.4) -- (4.2, 0.7464) arc (60:-90:0.4);
      
      \draw[walkage] (0,0) -- (0,2);
      \draw[walkage] (2,0) -- (2, 2);
      \draw[walkage] (4,0) -- (4, 2);

      \node[below] at (0,0) {$W'_1$};
      \node[below] at (2,0) {$W'_2$};
      \node[below] at (4,0) {$W'_3$};

      \node[below left, red] at (1,1) {$T'_1$};
      \node[above right, red] at (3,1) {$T'_2$};
    \end{scope}
  \end{tikzpicture}

  \caption{Example of two threaded walkages $(\Wcal, \Tcal)$ and $(\Wcal', \Tcal')$ of size $3$. The latter is untangled, while the former is not.}\label{fig:untangled-threaded-walkage}
\end{figure}

\cref{def:threaded-walkage} is somewhat different to~\cite[Definition~3.4]{MasarikPilipczukRzazewskiSorge22}.
We avoid arguing about the \emph{overlap}\footnote{The overlap of a family of walks is the maximum number of times a vertex is repeated in all of them.} of threaded walkages by ensuring the threads are vertex-disjoint.
In~\cite[Lemma~3.5]{MasarikPilipczukRzazewskiSorge22}, on the other hand, threaded linkages are built in such a way that allows threads to intersect slightly (cf.~\cref{lem:structured-walkage-congestion}).
These are then combined to create closed walks, and their congestion is bounded using their overlap.
Our modification costs us a little bit of extra directed treewidth but ensures the final congestion never exceeds $2$.

In the process of finding a bramble of low congestion, we must be able to go from an untangled threaded walkage to some object that allows us to build strong subgraphs.
This object is a \emph{cyclic walkage}.

\begin{definition}[Cyclic walkages]\label{def:cyclic-walkage}
  Let \(\Wcal = \{ W_1, W_2, \dots, W_\ell \} \) be a walkage in a digraph $D$.
  A \emph{cyclic walkage with congestion $c$ on} \(\Wcal \) is a pair \((\Wcal, \Ccal)\), where $\Ccal = \{C_1, C_2, \dots, C_\ell \}$ is a walkage (possibly with congestion greater than $1$) in $D$, for each $i \in [\ell]$ we have \(\stLink(C_i) = \enLink(W_i)\) and \(\enLink(C_i) = \stLink(W_i)\), and $\{ W_i \cdot C_i \mid i \in [\ell] \}$ is a closed walkage with congestion $c$.
  The \emph{size} of the cyclic walkage \((\Wcal, \Ccal)\) is the number $\ell$ of walks in $\Wcal$.
\end{definition}

The rough notion of a cyclic walkage appears naturally in the main proof of~\cite{MasarikPilipczukRzazewskiSorge22}.
The authors deal with these objects simply by viewing them as combinations of two threaded walkages.
By taking advantage of their cyclic structure instead, we obtain a tighter control over the congestion.
\cref{fig:cyclic-walkage} shows an example of cyclic walkage.

\begin{figure}[htbp]
  \centering
  \begin{tikzpicture}
    \draw[walkage, color = red, decoration = {coil, aspect = 1, amplitude = 1mm, segment length = 1.8mm, post length = 1.5mm}] (0,1.97) arc (90:-90:0.4925) arc (90:270:0.4925);
    \draw[walkage] (0,0) -- (0,2);
    \node[right, red] at (0.6,1.4) {$C_1$};

    \begin{scope}[xshift = 2cm]
      \draw[walkage, color = red, decoration = {aspect = 1, amplitude = 1mm, segment length = 2.5mm, post length = 1mm}] (0,1.97) arc (90:-90:0.985);
      \draw[walkage] (0,0) -- (0,2);
      \node[right, red] at (1.1,1) {$C_2$};
    \end{scope}

    \begin{scope}[xshift = 4.6cm]
      \draw[walkage, color = red, decoration = {aspect = 1, amplitude = 1mm, segment length = 1.8mm, post length = 1.5mm}] (0,1.97) arc (90:270:0.4925) arc (90:-90:0.4925);
      \draw[walkage] (0,0) -- (0,2);
      \node[right, red] at (0.6,0.6) {$C_3$};
    \end{scope}

    \node[below] at (0,0) {$W_1$};
    \node[below] at (2.2,0) {$W_2$};
    \node[below] at (4.4,0) {$W_3$};

  \end{tikzpicture}
  \caption{Example of a cyclic walkage $(\Wcal, \Ccal)$ of size $3$.}\label{fig:cyclic-walkage}
\end{figure}

Starting from a walkage, we build untangled or cyclic walkages by finding an object that connects its walks appropriately.
We call such an object an \emph{extension}.

\begin{definition} [Walkage extensions]\label{def:walkage-extensions}
  Let \(\Wcal = \{W_1, W_2, \dots, W_\ell \} \) be a walkage in a digraph $D$.
  An \emph{extension} of \(\Wcal \) is a walkage $\Scal = \{S_1, S_2, \dots, S_s\}$, for $s \le \ell$ in $D$, such that $\stLink(S_i) = \enLink(W_i)$ for all $i \in [s]$.
  We say that \(\Scal \) is a \emph{threaded}, \emph{untangled} or \emph{$c$-cyclic} extension if \((\Wcal, \Scal)\) is a threaded walkage, untangled threaded walkage or a cyclic walkage with congestion at most $c$, respectively.
  We abuse the notation and say that a walkage \(\Wcal \) \emph{is threaded}, \emph{untangled} or \emph{$c$-cyclic} if it has an extension which is threaded, untangled or $c$-cyclic, respectively.
  Hence, $c$-cyclic stands for cyclic with congestion at most $c$, and we may simply say \emph{cyclic} instead of $1$-cyclic.
  Furthermore, we say that $\Wcal$ is \emph{structured with congestion $c$} if it is either $c$-cyclic or untangled, and \emph{structured} if it is structured with congestion $1$.
\end{definition}

\cref{fig:untangled-threaded-walkage,fig:cyclic-walkage} illustrate examples of walkage extensions in red for the walkages in black.
We show how to find extensions for walkages $\Wcal$ for which the set $\stLink(\Wcal) \cup \enLink(\Wcal)$ is well-linked in \cref{lem:structured-walkage-congestion}.
Before that, we establish some basic results.

\begin{lemma}\label{lem:untangled-walkage-to-walkage}
  Let $D$ be a digraph containing an untangled threaded walkage \((\Wcal, \Tcal)\), where \(\Wcal = \{W_1\), \(W_2\), \(\dots \), \(W_\ell \} \) and \(\Tcal = \{T_1, T_2, \dots, T_{\ell - 1}\} \).
  For any \(1 \le a_1 \le b_1 < a_2 \le b_2 < \cdots < a_{k-1} \le b_{k-1} < a_{k} \le b_k \le \ell \), we have that \( \{\Wcal(a_{j}, b_j) \mid j \in [k] \} \) is a walkage in $D$.
\end{lemma}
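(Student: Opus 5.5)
Unfolding the definitions reduces the lemma to a pairwise disjointness check. By \cref{def:walkages}, a walkage is a collection of pairwise vertex-disjoint walks. Each $\Wcal(a_j,b_j)$ is a walk by \cref{def:threaded-walkage}, since consecutive pieces share endpoints. So it suffices to show that for $j < j'$ the walks $\Wcal(a_j,b_j)$ and $\Wcal(a_{j'},b_{j'})$ share no vertex.

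First decompose the vertex sets. We have
\[
V(\Wcal(a_j,b_j)) = \bigcup_{i=a_j}^{b_j} V(W_i) \cup \bigcup_{i=a_j}^{b_j-1} V(T_i),
\]
and similarly for $j'$. The index ranges satisfy $b_j < a_{j'}$, so the index sets $\{a_j,\dots,b_j\}$ and $\{a_{j'},\dots,b_{j'}\}$ are disjoint. The threads used are indexed by $\{a_j,\dots,b_j-1\}$ and $\{a_{j'},\dots,b_{j'}-1\}$. These are also disjoint, and neither contains the index $b_j$; that is, the thread $T_{b_j}$ joining the two blocks is never used.

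Next, check the three kinds of pairs. Any $W_i$ and $W_{i'}$ with $i \neq i'$ are disjoint because $\Wcal$ is a walkage, and any $T_i$ and $T_{i'}$ with $i \neq i'$ are disjoint because $\Tcal$ is a walkage. The only remaining case is a thread $T_i$ with $a_j \le i \le b_j-1$ against a walk $W_{i'}$ with $a_{j'} \le i' \le b_{j'}$, or the symmetric case with the roles of $j$ and $j'$ swapped.

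This mixed case is where untangledness is used. By definition, $T_i$ meets $\bigcup \Wcal$ only within $W_i \cup W_{i+1}$. Here $i+1 \le b_j < a_{j'} \le i'$, so $i' \notin \{i, i+1\}$ and hence $T_i \cap W_{i'} = \emptyset$. In the symmetric case, $T_{i'}$ with $a_{j'} \le i' \le b_{j'}-1$ against $W_i$ with $i \le b_j$: we have $i < a_{j'} \le i'$, so again $i \notin \{i', i'+1\}$.

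The main (and only) subtle point is reading untangled as ``$T_i$ meets $\bigcup_{m\neq i,i+1} V(W_m)$ in no vertex.'' Once that is made precise, the lemma follows from disjointness of the index ranges.
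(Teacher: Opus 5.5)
Your proof is correct and follows the paper's argument. The walkage property of $\Wcal$ and of $\Tcal$ rules out walk--walk and thread--thread collisions, so any collision would have to be between a thread in one block and a walk in another block, and untangledness forbids exactly that; your index check (only threads $T_i$ with $a_j \le i \le b_j-1$ occur in $\Wcal(a_j,b_j)$, so $i+1 \le b_j$) spells out the step the paper leaves implicit.
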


\begin{proof}
  First note that any vertex appears at most once in \(\Wcal \) and at most once in \(\Tcal \).
  Therefore, if \( \{\Wcal(a_{j}, b_j) \mid j \in [k]\} \) is not a walkage, then there is a vertex \(v\) and distinct indices \(i, j\) and indices \(r, s\) such that \(a_{i} \le r \le b_i\), \(a_{j} \le s \le b_j\) and \(v\in V(W_{r}) \cap V(T_{s})\).
  This is a contradiction to the fact that \((\Wcal, \Tcal)\) is untangled.
\end{proof}

\cref{lem:untangled-walkage-to-walkage} essentially tells us that the structure of an untangled threaded walkage is disjoint enough for us to define new walks by connecting a small set of consecutive walks in $\Wcal$ using walks in $\Tcal$ and still obtain a walkage in the end, regardless of how these small sets of walks in $\Wcal$ are picked.
\cref{fig:untangled-threaded-walkage-to-walkage} shows an example of this.

\begin{figure}[htbp]
  \centering
  \begin{tikzpicture}

    \draw[red, walkage, decoration = {segment length = 2.4mm, post length = 1.5mm}] (0,2) arc (90:240:0.4) -- (2.2, 0.7464) arc (60:-90:0.4);
    \draw[red, walkage, decoration = {segment length = 2.4mm, post length = 1.5mm}] (2,2) arc (90:240:0.4) -- (4.2, 0.7464) arc (60:-90:0.4);
    \draw[red, walkage, decoration = {segment length = 2.4mm, post length = 1.5mm}] (4,2) arc (90:240:0.4) -- (6.2, 0.7464) arc (60:-90:0.4);
    
    \draw[walkage] (0,0) -- (0,2);
    \draw[walkage] (2,0) -- (2, 2);
    \draw[walkage] (4,0) -- (4, 2);
    \draw[walkage] (6,0) -- (6, 2);

    \node[below] at (0,0) {$W_1$};
    \node[below] at (2,0) {$W_2$};
    \node[below] at (4,0) {$W_3$};
    \node[below] at (6,0) {$W_4$};

    \node[below, red] at (1,1) {$T_1$};
    \node[below, red] at (3,1) {$T_2$};
    \node[below, red] at (5,1) {$T_3$};

    \begin{scope}[xshift = 8cm]

      \node[below] at (1,0) {$\Wcal(1, 2)$};
      \node[below] at (5,0) {$\Wcal(3,4)$};

      \draw[thick, decorate, decoration = {coil, post length = 1.5mm, aspect = 1.2}] (0,0) -- (0,2);
      \draw[thick, decorate, decoration = {coil, segment length = 2mm, post length = 1.5mm, aspect = 1.2}] (0,2) arc (90:240:0.4) -- (2.2, 0.7464) arc (60:-90:0.4);
      \draw[walkage, decorate, decoration = {coil, post length = 1.5mm, aspect = 1.2}] (2,0) -- (2,2);

      \draw[thick, decorate, decoration = {coil, post length = 1.5mm, aspect = 1.2}] (4,0) -- (4,2);
      \draw[thick, decorate, decoration = {coil, segment length = 2mm, post length = 1.5mm, aspect = 1.2}] (4,2) arc (90:240:0.4) -- (6.2, 0.7464) arc (60:-90:0.4);
      \draw[walkage, decorate, decoration = {coil, post length = 1.5mm, aspect = 1.2}] (6,0) -- (6,2);

    \end{scope}
  \end{tikzpicture}

  \caption{Extracting a walkage from segments of an untangled threaded walkage via \cref{lem:untangled-walkage-to-walkage}.}\label{fig:untangled-threaded-walkage-to-walkage}
\end{figure}

The next result is a direct adaptation of~\cite[Lemma~3.6]{MasarikPilipczukRzazewskiSorge22} from threaded linkages to threaded walkages.
For completeness, we provide a proof that follows closely the original one.
The core idea is that walkages that are threaded can be polished into walkages that are either untangled or cyclic, although possibly with some congestion.
This is useful because both these structures give us more control over the structure of the walks.

\begin{lemma}[{Adapted from~\cite[Lemma~3.6]{MasarikPilipczukRzazewskiSorge22}}]\label{lem:untangled-walkage}
  Given $s, u \in \N$, let $\wUW(s, u) = s(u + 1) - 1 = \Ocal(s  u)$.
  Let $D$ be a digraph containing a walkage \(\Wcal \) of size \(w\) that is threaded.
  If $w  \ge \wUW(s,u)$, then \(D\) contains either
  \begin{enumerate}[label=\textnormal{(\alph*)}]
    \item\label{item:untangled-walkage-1} A subwalkage of \(\Wcal \) of size $s$ that is $2$-cyclic; or
    \item\label{item:untangled-walkage-2} A subwalkage of \(\Wcal \) of size $u$ that is untangled.
  \end{enumerate}
\end{lemma}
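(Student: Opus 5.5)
The plan is to scan the threaded walkage $(\Wcal,\Tcal)$ from left to right, with $\Wcal=\{W_1,\dots,W_w\}$ and $\Tcal=\{T_1,\dots,T_{w-1}\}$. Each time we meet a thread that makes a "backward" intersection, we close off a cycle. We aim for the following dichotomy: either we find $s$ cycles supported on pairwise disjoint intervals of indices, which gives outcome \labelcref{item:untangled-walkage-1}, or the scan makes $u$ steps without finding enough cycles, and the walks visited along the way form an untangled subwalkage, which gives outcome \labelcref{item:untangled-walkage-2}. The budget $\wUW(s,u)=s(u+1)-1$ suggests cutting the index set into $s$ consecutive intervals of length $u+1$, with the separators between them absorbing one index each.

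\emph{Key observation (backward hits give cycles).} Suppose $T_i$ meets $W_j$ at a vertex $v$, with $j\le i$. Set $C_i$ to be the part of $T_i$ up to its first vertex in $W_j$, followed by the suffix of $W_j$ from $v$, then $T_j\cdot W_{j+1}\cdots T_{i-1}$, ending at $\stLink(W_i)$. Then $W_i\cdot C_i$ is a closed walk using only walks of $\Wcal$ and threads of $\Tcal$ with indices in $[j,i]$. Since the walks of $\Wcal$ are pairwise disjoint, and so are the threads of $\Tcal$, each vertex lies on at most one walk and at most one thread. Consequently, closed walks obtained in this way from pairwise disjoint index intervals $[j,i]$ form a family with congestion at most $2$. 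This family is a $2$-cyclic extension of the subwalkage $\{W_i\}$, one walk per interval. So if $s$ interval-disjoint backward hits exist, we are in case \labelcref{item:untangled-walkage-1}.

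\emph{Greedy scan.} Run within consecutive windows of length $u+1$. In a window starting at $a$, walk forward along $T_a,T_{a+1},\dots$. Stop the first time some $T_i$ meets a walk $W_j$ with $a\le j\le i$ (a backward hit inside the window). At that point, record a cycle on $[j,i]\subseteq[a,i]$ and restart the next window at $i+2$. If no thread in the window makes a backward hit, then every $T_i$ in the window meets the window's walks only in $W_i$, $W_{i+1}$, or walks further to the right. Forward hits must still be excluded. For these, the plan is to mirror the argument: a forward hit of $T_i$ into $W_j$ with $j>i+1$ can be shortcut. Replace the thread from $W_i$ by the prefix of $T_i$ up to $W_j$ followed by the suffix of $W_j$, and drop $W_{i+1},\dots,W_{j-1}$ (taking the first such hit along $T_i$). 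After these shortcuts the retained walks, with the shortened threads, form an untangled threaded walkage. Its threads are still disjoint, being pieces of distinct $T_i$'s and $W_j$'s. The length $u+1$ of a window is what guarantees $u$ retained walks, or else a backward hit.

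\emph{Main obstacle.} The hard step is the forward-hit shortcut. The new thread contains a piece of $W_j$, while the definition requires a subwalkage of $\Wcal$ and threads that meet the remaining walks only at their own endpoints. I expect to instead choose, as the next retained walk, the first $W_j$ hit by $T_i$. This keeps the thread equal to $T_i$ truncated, plus a tail inside $W_j$ that we must allow as part of the endpoint walk, so the counting must be arranged so that each skipped block costs only one index. Checking that the restart rule loses at most one index per window, which is exactly what $s(u+1)-1$ pays for, is the remaining bookkeeping.
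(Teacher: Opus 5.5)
\textbf{There is a genuine gap: the forward-hit case.} This is not bookkeeping. Suppose $T_i$ meets $W_j$ with $j>i+1$. Your shortcut cannot yield outcome (b). A thread must end at $\stLink$ of the next retained walk, and from the hit vertex $v\in V(W_j)$ there is no way back to $\stLink(W_j)$ along $W_j$. Your fallback of keeping ``a tail inside $W_j$'' makes the retained object a suffix of $W_j$. That suffix is not a walk of $\Wcal$, so you no longer have a subwalkage of $\Wcal$.

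The counting also fails. The shortcut discards $W_{i+1},\dots,W_{j-1}$, i.e.\ $j-i-1$ walks, and this number is unbounded. For example, if $T_a$ hits the last walk of the window, the whole window collapses, yet no backward hit has been found. So the claim that each skipped block costs only one index is false. Your scan guarantees neither a cycle nor $u$ retained walks.

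\textbf{The missing observation is that a forward hit is itself a cycle.} Let $v\in V(T_i)\cap V(W_j)$ with $j>i+1$. Take the subwalk of $\Wcal(i,j)$ from the occurrence of $v$ on $T_i$ to the occurrence of $v$ on $W_j$. It is closed, contains $W_{i+1}$ entirely, and uses only walks and threads with indices in $[i,j]$. Rotating it gives a closed walk of the form $W_{i+1}\cdot C$. So both kinds of violations of untangledness should be treated symmetrically, as cycles supported on index intervals. This is exactly the paper's notion of a \emph{useful walk}.

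The paper picks useful walks greedily by smallest right endpoint. Either there are $s$ of them, which gives (a) with congestion $2$ by your disjoint-interval argument. Or there are fewer than $s$, and their right endpoints cut $[w]$ into at most $s$ gaps containing no useful walk. One of these gaps then has at least $(w-s+1)/s\ge u$ walks.

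With this fix, your window scheme also works. Use windows of $u$ consecutive walks. After finding a violation inside the window, restart immediately after the right endpoint of its interval, not at $i+2$. Each round then consumes at most $u$ indices, so $w\ge su$ suffices, and $su\le s(u+1)-1$.

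There is also a smaller slip: your backward-hit condition must read $j<i$. Since $\stLink(T_i)=\enLink(W_i)$, the case $j=i$ is always triggered. In that case your $C_i$ ends at $\enLink(W_i)$ rather than $\stLink(W_i)$, so it does not close up.
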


\begin{proof}
  Let \((\Wcal, \Tcal )\) be a threaded walkage of size $w \ge \wUW(s,u)$ on \(\Wcal = \{W_1, W_2, \dots, W_w\} \).
  We call a walk $\Wcal(i,j)$ a \emph{useful walk} if it contains a closed subwalk that contains at least one walk from $\Wcal$ entirely, as in \cref{fig:useful-walk}.
  We say that $\Wcal(i,j)$ \emph{contains a useful walk} if $\Wcal(i',j')$ is a useful walk for some $i \le i' \le j' \le j$.
  Note that $\Wcal(i,j)$ may be a useful walk only when $i < j$.
  Furthermore, if $\Wcal(1, w)$ contains no useful walk, then $(\Wcal, \Tcal)$ is untangled.
  Indeed, if $(\Wcal, \Tcal)$ is not untangled then there is a vertex $v \in V(T_i) \cap V(W_r)$ for some $i \in [w - 1]$ and $r \in [w] \setminus \{i, i+1\}$.
  If $r < i$, then the subwalk of $\Wcal(r, i+1)$ from an occurrence of $v$ in $W_r$ to an occurrence of $v$ in $T_i$ is a closed walk containing $W_{r+1}$, and thus $\Wcal(r, i+1)$ is a useful walk.
  Similarly, if $r > i + 1$, then the subwalk of $\Wcal(i, r)$ from an occurrence of $v$ in $T_i$ to an occurrence of $v$ in $W_r$ is a closed walk containing $W_{i+1}$, and thus $\Wcal(i, r)$ is a useful walk.

  \begin{figure}[htbp]
    \centering
    \begin{tikzpicture}[scale = 0.5]
      \begin{scope}[line cap = round]
        \draw[decoration = {coil, aspect = 1.2, post length = 2.1mm, segment length = 4mm}, decorate, line width = 4pt, red!30] (4,0) -- (4,4);
        \begin{scope}
          \clip (0.2,0) -- (-0.245,3) -- (5,3) -- (5,5.5) -- (-2,5.5) -- (-2,0);
          \draw[decoration = {coil, aspect = 1.2}, decorate, line width = 4pt, red!30] (4.0,4)
          .. controls (3.2,5.2) and (-2.0,5.2) ..
          (-1.4,2.6)
          .. controls (-1.4,1.3) and (1.8,-1.2) ..
          (8.0,0);
        \end{scope}
        \begin{scope}[line cap = round]
          \clip (-2,0.8) -- (0.12,0.8) -- (0.035, 0.98) -- (3,3) -- (0,4.5) -- cycle;
          \draw[decoration = {coil, aspect = 1.2, post length = 2.1mm, segment length = 4mm}, decorate, line width = 4pt, red!30] (0,0) -- (0,4);
        \end{scope}
        \draw[decoration = {coil, aspect = 1.2}, decorate, line width = 4pt, red!30] (0,4) -- (4,0);
      \end{scope}

      \draw[red, walkage]
      (0,4) -- (4,0);

      \draw[red, walkage]
      (4.0,4)
      .. controls (3.2,5.2) and (-2.0,5.2) ..
      (-1.4,2.6)
      .. controls (-1.4,1.3) and (1.8,-1.2) ..
      (8.0,0);

      \begin{scope}[xshift = 6cm, xscale = 0.5]
        \draw[red, walkage]
        (4,3.94)
        .. controls (3.2,5.2) and (-2.0,5.2) ..
        (-1.4,2.6)
        .. controls (-1.4,1.3) and (1.8,0) ..
        (12.0,0);
      \end{scope}

      \draw[walkage, decoration = {segment length = 4mm}] (0,0) -- (0,4);
      \draw[walkage, decoration = {segment length = 4mm}] (4,0) -- (4,4);
      \draw[walkage, decoration = {segment length = 4mm}] (8,0) -- (8,4);
      \draw[walkage, decoration = {segment length = 4mm}] (12,0) -- (12,4);

      \node[below] at (0, 0) {$W_1$};
      \node[below = 0.2cm] at (4, 0) {$W_2$};
      \node[below] at (8, 0) {$W_3$};
      \node[below] at (12, 0) {$W_4$};
    \end{tikzpicture}

    \caption{$\Wcal(1,3)$ is a useful walk, as it contains a closed subwalk (in light red) that contains $W_2$ entirely, but $\Wcal(3, 4)$ does not contain a useful walk, as none of its closed subwalks contains a walk in $\Wcal$ entirely.}\label{fig:useful-walk}
  \end{figure}

  If $\Wcal(1, w)$ does not contain any useful walks, then $\Wcal$ itself is untangled, as it contains an untangled extension.
  Otherwise, we greedily construct a sequence $I_1, I_2, \dots, I_\ell$ of useful walks, for some $\ell \ge 1$, as follows.
  Let $I_1 = \Wcal(i_1, j_1)$ be a useful walk such that $j_1$ is smallest possible.
  Given $I_{m} = \Wcal(i_{m}, j_{m})$, let $I_{m+1} = \Wcal(i_{m+1}, j_{m+1})$ be a useful walk such that $i_{m+1} > j_{m}$ and \(j_{m+1}\) is smallest possible (if at least one such useful walk exists).

  Note that the choices of $j_m$ and $i_{m+1}$ guarantee that no walk in \(\Wcal \cup \Tcal \) is used in more than one of the constructed useful walks.
  Since \(\Wcal \) and \(\Tcal \) are walkages, this implies that the sequence of useful walks has congestion $2$.
  If the number $\ell$ of constructed useful walks is at least $s$, then we get the first desired outcome by selecting one closed walk that contains a walk in $\Wcal$ from each useful walk.
  The congestion for these closed walks must be $2$ as any vertex in $D$ can be in at most one walk in $\Wcal$ and at most one walk in $\Tcal$.

  Hence, we may assume that $\ell < s$.
  Define $\ell + 1$ walks $W'_0, W'_1, \dots, W'_\ell$ as follows.
  Let $W'_0 = \Wcal(1, j_1 - 1)$, $W'_\ell = \Wcal(j_{\ell} + 1, w)$ (which is possibly empty if $j_{\ell} = w$), and $W'_m = \Wcal(j_m + 1, j_{m+1} - 1)$, which is never empty as $j_m < i_{m+1} < j_{m+1}$, for all $m \in [\ell - 1]$.
  Note that $W'_m$ does not contain a useful walk, for if it did, then $j_{m+1}$ would not be the smallest possible choice for $I_{m+1}$ (or, if $m = \ell$, the useful walk $I_{\ell+1}$ would exist).

  All walks in $\Wcal$ appear in exactly one of these walks, except for $W_{j_1}, W_{j_2}, \dots, W_{j_\ell}$, so the union of $W'_0, W'_1, \dots, W'_\ell$ covers at least $w - \ell \ge w  - s + 1$ walks in $\Wcal$.
  Hence, there exists some $k \in \{0, 1, \dots, \ell\}$ such that $W'_k$ contains at least
  \[
    \frac{w - s + 1}{\ell + 1} \ge \frac{s \cdot u}{s} = u
  \]
  walks from $\Wcal$.
  Write $W'_k = \Wcal(j_k + 1, j_{k+1} - 1)$, where we set $j_0 = 0$ and $j_{\ell+1} = w + 1$ (it may be the case that $W'_k = W'_0$ or $W'_k = W'_\ell$).
  It follows that the subwalkage $\{W_{j_k+1}, W_{j_k + 2}, \dots, W_{j_{k+1} - 1}\} \subseteq \Wcal$ is untangled as $W'_k$ contains no useful walk.
\end{proof}

\cref{lem:closed-walkage} below essentially tells us that finding a walkage that is cyclic but is slightly congested, as in the outcome~\ref{item:untangled-walkage-1} of \cref{lem:untangled-walkage}, is fine because if the congestion cannot be easily removed, then we actually find a large bramble of low congestion.

\begin{lemma}\label{lem:closed-walkage}
  Given $k, s \in \N$, let \(\wCW(k,s) = s \cdot \dCM(k) = \Ocal(k  s\sqrt{\log k})\).
  Let $D$ be a digraph containing a walkage \(\Wcal \) of size $w$ that is $c$-cyclic.
  If $w \ge \wCW(k, s)$, then $D$ contains either
  \begin{enumerate}[label=\textnormal{(\alph*)}]
    \item\label{item:closed-walkage-1} A bramble of size $k$ and congestion at most $c$; or
    \item\label{item:closed-walkage-2} A subwalkage of \(\Wcal \) of size $s$ that is cyclic.
  \end{enumerate}
\end{lemma}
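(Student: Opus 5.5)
Let $(\Wcal, \Ccal)$ be a cyclic walkage with congestion at most $c$ on $\Wcal = \{W_1, \dots, W_w\}$. For each $i \in [w]$ put $Z_i = W_i \cdot C_i$, and let $\Zcal = \{Z_1, \dots, Z_w\}$. By definition, $\Zcal$ is a closed walkage with congestion at most $c$, so \cref{prop:dense-bramble} applies to it directly. The plan is to split on the degeneracy of the intersection graph $\intersec(\Zcal)$.

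\textbf{Dense case.} Suppose $\degen(\intersec(\Zcal)) \ge \dCM(k)$. Then \cref{prop:dense-bramble} gives a bramble of size $k$ and congestion at most $c$, which is outcome~\ref{item:closed-walkage-1}.

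\textbf{Sparse case.} Otherwise $\intersec(\Zcal)$ is $d$-degenerate for some $d < \dCM(k)$. A $d$-degenerate graph is properly $(d+1)$-colourable, by greedy colouring along a degeneracy ordering. So $\intersec(\Zcal)$ has an independent set of size at least
\[
  \frac{w}{d+1} \ge \frac{w}{\dCM(k)} \ge s,
\]
using $w \ge s \cdot \dCM(k) = \wCW(k,s)$ and the fact that $d + 1 \le \dCM(k)$ when $\dCM(k)$ is an integer, or after rounding it up.

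Alternatively, and without colouring, repeatedly pick a vertex of minimum degree (at most $d$), add it to the independent set, and delete it together with its neighbours. Each round removes at most $d+1$ vertices, so at least $\lceil w/(d+1) \rceil \ge s$ rounds are performed. This greedy procedure is also clearly polynomial-time.

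Let $J$ be this independent set, of size $s$. The closed walks $Z_j$ with $j \in J$ are then pairwise vertex-disjoint. Hence $(\{W_j \mid j \in J\}, \{C_j \mid j \in J\})$ is a cyclic walkage with congestion $1$ on the subwalkage $\{W_j \mid j \in J\}$ of $\Wcal$. This is outcome~\ref{item:closed-walkage-2}.

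\textbf{Main obstacle.} The only delicate point is the bookkeeping in the threshold: degeneracy below $\dCM(k)$ must give an independent set of size at least $w/\dCM(k)$, with care about off-by-one issues. If needed, I would absorb this into the constant $\cCM$, or note that $\dCM(k)$ can be taken integral. Everything else is a direct application of \cref{prop:dense-bramble}.
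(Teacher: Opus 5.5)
Your proposal is correct and takes essentially the same route as the paper. It applies \cref{prop:dense-bramble} to the closed walks $W_i \cdot C_i$ when $\degen(\intersec(\Zcal)) \ge \dCM(k)$, and otherwise extracts an independent set of size $s$ by repeatedly deleting a minimum-degree vertex together with its neighbours; the paper likewise glosses over the integrality of $\dCM(k)$ that you flag, simply asserting that each step removes at most $\dCM(k)$ vertices.
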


\begin{proof}
  Let $\Zcal$ be the family of closed walks with congestion at most $c$ given by the $c$-cyclic walkage on $\Wcal$.
  If \(\degen(\intersec(\Zcal)) \ge \dCM(k)\), then $D$ contains a bramble of size $k$ and congestion at most $c$ by \cref{prop:dense-bramble}.
  Hence, we may assume that \(\degen(\intersec(\Zcal)) \le \dCM(k) - 1\).
  Iteratively choose a vertex of minimum degree in \(\intersec(\Zcal)\) and remove it and all of its neighbors from \(\intersec(\Zcal)\) to find a stable set $S$ in \(\intersec(\Zcal)\).
  By the degeneracy of \(\intersec(\Zcal)\), every step removes at most \(\dCM(k)\) of its vertices, so \(S\) has size at least \(s\).
  Since \(S\) is a stable set, its associated closed walks are pairwise vertex-disjoint, and thus the walks of $\Wcal$ they contain define a subwalkage of $\Wcal$ in \(D\) that is cyclic, whose closed walks are walks of $\Zcal$.
\end{proof}

The next three results show how to combine walkages that have appropriate extensions into walkages that are cyclic, though possibly a little bit congested.
\cref{lem:untangled-cyclic-extension-congestion} is a direct adaptation of the Bowtie Lemma~\cite[Lemma~3.7]{MasarikPilipczukRzazewskiSorge22} to the context of walkages.
For completeness, we present a proof that loosely follows the original one.

\begin{lemma}[{Adapted from~\cite[Lemma~3.7]{MasarikPilipczukRzazewskiSorge22}}]\label{lem:untangled-cyclic-extension-congestion}
  Given $s \in \N$, let $\dCEC(s) = \dP(s, 5) = \Ocal(s)$.
  Let $D$ be a digraph that contains two walkages $\Wcal$ and $\Scal$ that are untangled.
  If $\degen(\intersec(\Wcal \cup \Scal)) \ge \dCEC(s)$, then \(D\) contains a subwalkage of \(\Wcal \) and a subwalkage of $\Scal$ of size $s$ that are $2$-cyclic.
  Moreover, in this case, $D$ also contains a family of $s$ closed walks with congestion $2$ such that each closed walk contains a distinct walk from $\Wcal$ and a distinct walk from $\Scal$.
\end{lemma}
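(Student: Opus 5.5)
We follow the Bowtie Lemma. Fix untangled extensions $\Tcal$ of $\Wcal = \{W_1,\dots,W_w\}$ and $\Tcal'$ of $\Scal=\{S_1,\dots,S_{w'}\}$. Since $\Wcal$ and $\Scal$ are walkages, $\intersec(\Wcal)$ and $\intersec(\Scal)$ are edgeless. Hence $\intersec(\Wcal\cup\Scal)$ is bipartite with classes $\Wcal$ and $\Scal$, and we order them as $(W_1,\dots,W_w)$ and $(S_1,\dots,S_{w'})$. Its degeneracy is at least $\dP(s,5)$, so it has a subgraph $G$ of minimum degree at least $\dP(s,5)$. We keep the induced orders on the two classes of $G$ and apply \cref{prop:partition} with $k=s$ and $r=5$. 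This yields pairwise disjoint segments $X_1,\dots,X_s$ of $\Wcal$ and $Y_1,\dots,Y_s$ of $\Scal$ such that each $G[X_t\cup Y_t]$ has average degree at least $5$, and in particular at least one edge.

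For each $t\in[s]$ we build one closed walk from the pair of segments. Let $X_t$ span the indices $a_t\le\dots\le b_t$ in $\Wcal$, and let $Y_t$ span $c_t\le\dots\le d_t$ in $\Scal$. By \cref{lem:untangled-walkage-to-walkage}, the walks $\Wcal(a_t,b_t)$, $t\in[s]$, form a walkage. Likewise the walks $\Scal(c_t,d_t)$ form a walkage, where threads of $\Tcal'$ are used. Pick an edge of $G[X_t\cup Y_t]$, say $W_p$ meets $S_q$ at a vertex $v$, and we may also use a second intersecting pair. Following the concatenation $\Wcal(a_t,b_t)$ forward and $\Scal(c_t,d_t)$ forward between two such intersection points gives a closed walk. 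Concretely, take two intersecting pairs $(W_p,S_q)$ at $v$ and $(W_{p'},S_{q'})$ at $v'$ with $p\le p'$ and $q'\le q$. Walk along $\Wcal(p,p')$ from $v$ to $v'$, then along $\Scal(q',q)$ from $v'$ back to $v$; this is the bowtie. If only one crossing orientation is available, we use the density (average degree $\ge5$, so many edges and a non-monotone pair) to find two edges $W_pS_q$, $W_{p'}S_{q'}$ with $p<p'$ and $q'\le q$, or with $p=p'$, or with $q=q'$. Two non-crossing edges $(p<p',q<q')$ alone do not close a cycle. Showing that such an ``inversion'' pair exists in a bipartite graph of average degree $\ge 5$ is the step I expect to need care. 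I would argue that a graph with no inversion is a monotone staircase, which has average degree below $2$. The slack up to $5$ presumably also lets the closed walk contain a full $W_i$ and a full $S_j$: we choose pairs with $p<p'$ strictly, so some whole $W_i$ lies inside, and analogously for $S$. Alternatively, we extend along the whole segment.

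Congestion: each vertex lies in at most one walk of $\Wcal$, at most one of $\Tcal$, at most one of $\Scal$ and at most one of $\Tcal'$. By untangledness together with the disjointness of segments from \cref{lem:untangled-walkage-to-walkage}, the $\Wcal\cup\Tcal$ parts of the different closed walks are pairwise disjoint. The same holds for the $\Scal\cup\Tcal'$ parts. So each vertex is in at most two of the $s$ closed walks, and each closed walk contains a distinct $W_i$ and a distinct $S_j$, which gives the final claim.

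For the $2$-cyclic subwalkages, take from each closed walk the fully contained $W_{i_t}$. We let $C_t$ be the rest of the closed walk from $\enLink(W_{i_t})$ back to $\stLink(W_{i_t})$. Then $\{W_{i_t}\cdot C_t\}$ is the congestion-$2$ family, and symmetrically for $\Scal$.
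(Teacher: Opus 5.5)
\textbf{Gap: the closed walks need not contain a whole walk of $\Wcal$ or $\Scal$.} Your overall architecture matches the paper's: bipartite intersection graph, \cref{prop:partition} with $r=5$, one bowtie per pair of segments, and disjointness of the pieces via \cref{lem:untangled-walkage-to-walkage}, giving congestion $2$. Cutting out a fully contained $W_{i_t}$ to obtain the cyclic extension is also the paper's step. But the step the lemma hinges on is missing, and your stated justification for it is false. Choosing $p<p'$ strictly does \emph{not} put a whole walk of $\Wcal$ inside the closed walk. If $p'=p+1$, the $\Wcal$-side of your bowtie is only a suffix of $W_p$, the thread from $\enLink(W_p)$ to $\stLink(W_{p+1})$, and a prefix of $W_{p+1}$. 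The same happens on the $\Scal$-side when $q=q'+1$. Then neither the ``moreover'' claim nor the decomposition $W_{i_t}\cdot C_t$ is available.

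\textbf{What is actually needed.} You need two edges $W_pS_q$ and $W_{p'}S_{q'}$ of $G[X_t\cup Y_t]$ with $p'\ge p+2$ and $q\ge q'+2$, i.e.\ an inversion with a gap on both sides. Your staircase argument (no strict inversion forces average degree below $2$) says nothing about this. The slack from $2$ up to $5$ is exactly what must be exploited, and ``presumably'' does not do it. It is not automatic either: a complete bipartite graph with only two walks of $\Wcal$ on one side has average degree close to $4$ and no such pair. The degenerate cases $p=p'$ or $q=q'$ also fail. They contain no full walk, and additionally $v'$ need not occur after $v$ inside the single walk.

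\textbf{How the paper closes it.} By \cref{prop:average-min-degree}, $G[X_t\cup Y_t]$ has a subgraph $G_t$ of minimum degree at least $3$.
\begin{itemize}
\item Let $S^t_3$ be the last walk of $Y_t$ in $G_t$ in the order of $\Scal$, and let $W^t_1,W^t_2,W^t_3$ be three of its neighbours, in order.
\item Let $S^t_1,S^t_2$ be two neighbours of $W^t_3$ in $G_t$ other than $S^t_3$, in order. By maximality of $S^t_3$, both precede it.
\item Take $u_t\in V(W^t_1)\cap V(S^t_3)$ and $v_t\in V(W^t_3)\cap V(S^t_1)$. The subwalk of the $\Wcal$-concatenation from $u_t$ to $v_t$ contains $W^t_2$ entirely. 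The subwalk of the $\Scal$-concatenation from $v_t$ back to $u_t$ contains $S^t_2$ entirely.
\end{itemize}
Equivalently, having no such gapped inversion is hereditary and the above shows it is incompatible with minimum degree $3$. Such graphs are therefore $2$-degenerate and have average degree below $4$. With this step inserted, the rest of your argument goes through.
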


\begin{proof}
  Let \(\Wcal'\) and $\Scal'$ be untangled extensions of \(\Wcal \) and \(\Scal \), respectively, and let $W$ and $S$ be the concatenations of $(\Wcal, \Wcal')$ and $(\Scal, \Scal')$, respectively.
  Let $G$ be a subgraph of $\intersec(\Wcal \cup \Scal)$ with minimum degree at least $\dP(s, 5)$.
  Consider the walks in \(\Wcal \) and $\Scal$ ordered by their appearance in \(W \) and $S$, respectively.
  From \cref{prop:partition} in $G$ using the orders provided for the walks in $\Wcal$ and \(\Scal \), we get a collection of sets $X_1, \ldots, X_{s}$ and $Y_1, \ldots, Y_{s}$ satisfying properties~\ref{item:partition-prop-1},~\ref{item:partition-prop-2} and~\ref{item:partition-prop-3} of the proposition.

  As $G[X_i \cup Y_i]$ has average degree at least $5$, let $G_i$ be a subgraph of $G[X_i \cup Y_i]$ with minimum degree at least $3$, which exists by \cref{prop:average-min-degree}.
  Let \(S_3^i\) be the last walk from \(Y_i\) in $G_i$ considering the order of \(\Scal \) and \(W_1^i, W_2^i, W_3^i\) be $3$ neighbors of \(S_3^i\) in \(X_i\) which have this order in \(\Wcal \).
  Let \(S_1^i\) and $S_2^i$ be two neighbors of \(W_3^i\) in \(Y_i\) distinct from \(S_3^i\) which appear in this order in $\Scal$.
  Note that there is a walk from any vertex in \(V(W_1^i)\) to any vertex in \(V(W_3^i)\) entirely contained in $W$ and any such walk contains \(W_2^i\); similarly, there is a walk from any vertex in \(V(S_1^i)\) to any vertex in \(V(S_3^i)\) entirely contained in $S$ and any such walk contains $S_2^i$.
  As $W_1^i$ and $S_3^i$ are neighbors in $G_i$ and so are $W_3^i$ and $S_1^i$, let $u_i \in V(W_1^i) \cap V(S_3^i)$ and $v_i \in V(W_3^i) \cap V(S_1^i)$.
  Let \(C_{W}^i\) be the subwalk of $W$ from \(u_i\) to \(v_i\) and \(C_{S}^i\) be the subwalk of $S$ from \(v_i\) to \(u_i\).
  Define now $C_i$ as the concatenation of $C_W^i$ and $C_S^i$, as in \cref{fig:closed-walk-ci}.

  \begin{figure}[htbp]
    \centering
    \begin{tikzpicture}
      \path[use as bounding box] (-1,-1.2) rectangle (8.4,2.2);
      \begin{scope}[line cap = round]
        \clip (3.7, 4) circle (4.88);
        \draw[decorate, decoration = {coil, aspect = 1.2}, red!30, line width = 4pt] (0,0) to[bend left] (2,1);
        \draw[red!30, line width = 4pt] (2,1) -- (2.58196601125,1);
        \draw[decorate, decoration = {coil, aspect = 1.2, pre length = 1mm, post length = 0.4mm}, red!30, line width = 4pt] (2.58196601125,1) to (4.81803398875,1);
        \draw[red!30, line width = 4pt] (4.81803398875,1) -- (5.4,1);
        \draw[decorate, decoration = {coil, aspect = 1.2}, red!30, line width = 4pt] (5.4,1) to[bend left] (7.4,0);
      \end{scope}

      \begin{scope}[line cap = round]
        \clip (3.7, -3) circle (4.88);
        \draw[decorate, decoration = {coil, aspect = 1.2}, red!30, line width = 4pt] (2,0) to[bend left] (0,1);
        \draw[red!30, line width = 4pt] (2,0) -- (2.58196601125,0);
        \draw[decorate, decoration = {coil, aspect = 1.2, pre length = 1mm, post length = 0.4mm}, red!30, line width = 4pt] (4.81803398875,0) to (2.58196601125,0);
        \draw[red!30, line width = 4pt] (4.81803398875,0) -- (5.4,0);
        \draw[decorate, decoration = {coil, aspect = 1.2}, red!30, line width = 4pt] (7.4,1) to[bend left] (5.4,0);
      \end{scope}

      \draw[walkage] (0,0) to[bend left] node[above = 0.1cm, pos = 0.6] {$S_1^i$} (2,1);
      \draw[walkage, decoration = {pre length = 1mm, post length = 0.4mm}] (2.58196601125,1) to node[above, pos = 0.5] {$S_2^i$} (4.81803398875,1);
      \draw[walkage] (5.4,1) to[bend left] node[above, pos = 0.4] {$S_3^i$} (7.4,0);
      \node at (2.290983005625, 1) {$\cdots$};
      \node at (5.109016994375, 1) {$\cdots$};

      \draw[walkage] (2,0) to[bend left]  node[below = 0.1cm, pos = 0.4] {$W_3^i$} (0,1);
      \draw[walkage, decoration = {pre length = 1mm, post length = 0.4mm}] (4.81803398875,0) to node[below, pos = 0.5] {$W_2^i$} (2.58196601125,0);
      \draw[walkage] (7.4,1) to[bend left] node[below = 0.1cm, pos = 0.6] {$W_1^i$} (5.4,0);
      \node at (2.291983005625,0) {$\cdots$};
      \node at (5.110016994375,0) {$\cdots$};

      \node[draw, red, circle, fill = red, inner sep = 0, minimum size = 3pt] (v) at (0.32, 0.52) {};
      \node[draw, red, circle, fill = red, inner sep = 0, minimum size = 3pt] (u) at (7.08, 0.45) {};
      \node[left, red] at (v) {$v_i$};
      \node[right, red] at (u) {$u_i$};
    \end{tikzpicture}
    \caption{Sketch of the construction of the closed walk $C_i$ (in light red).}\label{fig:closed-walk-ci}
  \end{figure}

  Since \(C_{W}^i\) starts and ends in \(\Wcal \) and $C_{S}^i$ starts and ends in $\Scal$, \( \{C_{W}^1, C_{W}^2, \ldots,\allowbreak C_{W}^{s}\} \) and $\{C_{S}^1, C_{S}^2, \dots,\allowbreak C_{S}^s\}$ are walkages by \cref{lem:untangled-walkage-to-walkage}.
  Therefore, \( \Ccal = \{ C_1, \ldots, C_{s}\} \) is a collection of closed walks of size \(s\) and congestion $2$.
  As each walk from $\Ccal$ contains a walk from $\Wcal$ and a walk from $\Scal$, there is a subwalkage of $\Wcal$ and a subwalkage of $\Scal$ both of which are $2$-cyclic and of size $s$.
\end{proof}

\cref{lem:cyclic-extension-congestion} extends the core idea of \cref{lem:untangled-cyclic-extension-congestion} to work also with walkages that are cyclic, which provides a better treatment of cyclic walkages to avoid exceeding congestion $2$.
The proof is very similar to that of \cref{lem:untangled-cyclic-extension-congestion}.

\begin{lemma}\label{lem:cyclic-extension-congestion}
  Let $D$ be a digraph that contains a walkage $\Wcal$ that is untangled and a walkage $\Scal$ that is $c$-cyclic for some $c \in \N$.
  Given $s \in \N$, if $\degen(\intersec(\Wcal \cup \Scal)) \ge \dCEC(s)$, then \(D\) contains a subwalkage of \(\Wcal \) and a subwalkage of $\Scal$ of size $s$ that are $(c+1)$-cyclic.
  Moreover, in this case, $D$ also contains a family of $s$ closed walks with congestion at most $c+1$ such that each closed walk contains a distinct walk from $\Wcal$ and a distinct walk from $\Scal$.
\end{lemma}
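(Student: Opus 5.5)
We mirror the proof of \cref{lem:untangled-cyclic-extension-congestion}, replacing the concatenated walk $S$ by the individual closed walks of the $c$-cyclic structure. Let $\Wcal'$ be an untangled extension of $\Wcal$, with $W$ the concatenation of $(\Wcal,\Wcal')$. Let $\Scal'$ be a $c$-cyclic extension of $\Scal=\{S_1,\dots,S_m\}$, and write $Z_j = S_j\cdot S'_j$, so $\{Z_j\}$ has congestion at most $c$. Take a subgraph $G$ of $\intersec(\Wcal\cup\Scal)$ with minimum degree at least $\dP(s,5)$. Order $\Wcal$ by appearance in $W$ and order $\Scal$ arbitrarily. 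Apply \cref{prop:partition} to obtain disjoint segments $X_1,\dots,X_s$ of $\Wcal$ and disjoint sets $Y_1,\dots,Y_s$ of $\Scal$ such that each $G[X_i\cup Y_i]$ has average degree at least $5$. By \cref{prop:average-min-degree}, pick $G_i\subseteq G[X_i\cup Y_i]$ with minimum degree at least $3$.

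In $G_i$, take any walk $S^i\in Y_i$ and three of its neighbours $W^i_1,W^i_2,W^i_3\in X_i$, in the order of $W$. Choose $u_i\in V(W^i_1)\cap V(S^i)$ and $v_i\in V(W^i_3)\cap V(S^i)$. Let $C^i_W$ be the subwalk of $W$ from $u_i$ to $v_i$; it lies in $\Wcal(a_i,b_i)$, where $X_i$ spans indices $a_i..b_i$, and it contains $W^i_2$ entirely. Let $C^i_S$ be the subwalk of the closed walk $Z^i$ (of $S^i$) from $v_i$ to $u_i$, going around $Z^i$. Set $C_i=C^i_W\cdot C^i_S$. Then $C_i$ contains $W^i_2$ entirely. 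It also contains $S^i$ entirely if we route appropriately: since $Z^i$ is closed and contains $S^i$, we can instead take $C^i_S$ to be the walk from $v_i$ around $Z^i$, passing through the whole of $Z^i$ once, back to $u_i$. Concretely, go from $v_i$ along $Z^i$ to its start, traverse all of $S^i\cdot S'^i$, then continue to $u_i$. This is still a walk within $V(Z^i)$, so the congestion is unaffected. The only change from the untangled case is this use of the closed walk, which removes the need for a second neighbour triple on the $\Scal$ side.

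The congestion bound is the key step. Since the $X_i$ are disjoint segments and $\Wcal$ is untangled, \cref{lem:untangled-walkage-to-walkage} shows that the walks $C^1_W,\dots,C^s_W$ are pairwise vertex-disjoint, since each lies inside the corresponding $\Wcal(a_i,b_i)$. Since the $Y_i$ are disjoint, the walks $Z^i$ are distinct members of a family of congestion at most $c$. Hence $\{C_i\}$ has congestion at most $c+1$. Each $C_i$ contains the distinct walk $W^i_2\in X_i$ and the distinct walk $S^i\in Y_i$. This gives the family of $s$ closed walks in the "moreover" part.

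To obtain the $(c+1)$-cyclic subwalkages, take $\{W^i_2\}$ and $\{S^i\}$. Each $C_i$ can be rotated to start at $\stLink(W^i_2)$ and written as $W^i_2\cdot R_i$ with $R_i$ a closed-up remainder, and similarly for $S^i$. This exhibits $(c+1)$-cyclic extensions, because the closed walks $\{C_i\}$ themselves have congestion at most $c+1$. The main point needing care is this rotation: $C^i_W$ contains $W^i_2$ as a contiguous piece, and $C^i_S$ as constructed contains $S^i$ contiguously. So the rotation is a valid decomposition, and the remainder walks need not be disjoint, which is permitted by \cref{def:cyclic-walkage}.
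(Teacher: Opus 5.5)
Your proposal is correct and follows essentially the same route as the paper: apply \cref{prop:partition} to a high-minimum-degree subgraph of $\intersec(\Wcal \cup \Scal)$, pick a single $S^i \in Y_i$ with three neighbours $W_1^i, W_2^i, W_3^i \in X_i$ in the order of $W$, and close the $W$-subwalk from $u_i$ to $v_i$ through the whole closed walk of the $c$-cyclic structure containing $S^i$. The congestion bound of $1 + c$ (disjoint segments of the untangled concatenation via \cref{lem:untangled-walkage-to-walkage}, plus the $c$-congested closed walks) matches the paper, and your explicit rotation argument for the $(c+1)$-cyclic subwalkages only spells out what the paper leaves implicit.
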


\begin{proof}
  Let \(\Wcal'\) and $\Scal'$ be the untangled and the $c$-cyclic extensions of \(\Wcal \) and $\Scal$, respectively, and \(W\) be the concatenation of $(\Wcal, \Wcal')$.
  Let $G$ be a subgraph of $\intersec(\Wcal \cup \Scal)$ with minimum degree at least $\dP(s, 5)$.
  Consider the walks in \(\Wcal \) ordered by their appearance in \(W \) and the walks in \(\Scal \) ordered arbitrarily.
  From \cref{prop:partition} in $G$ using the orders provided for the walks in $\Wcal$ and \(\Scal \), we get a collection of sets $X_1, \ldots, X_{s}$ and $Y_1, \ldots, Y_{s}$ satisfying properties~\ref{item:partition-prop-1},~\ref{item:partition-prop-2} and~\ref{item:partition-prop-3} of the proposition.

  Let $G_i$ be a subgraph of $G[X_i \cup Y_i]$ with minimum degree at least $3$, which exists by \cref{prop:average-min-degree}.
  Let \(S^i\) be the first walk from \(Y_i\) in $G_i$ considering the order of \(\Scal \) and \(W_1^i, W_2^i, W_3^i\) be $3$ neighbors of \(S^i\) in \(X_i\) which have this order in \(\Wcal \).
  Note that there is a walk from any vertex in \(V(W_1^i)\) to any vertex in \(V(W_3^i)\) entirely contained in $W$ and any such walk contains \(W_2^i\).
  As $W_1^i$ and $W_3^i$ are neighbors of $S^i$ in $G_i$, let $u_i \in V(W_1^i) \cap V(S^i)$ and $v_i \in V(W_3^i) \cap V(S^i)$.
  Let \(C_i\) be the concatenation of the subwalk of $W$ from $u_i$ to $v_i$ and a walk from $v_i$ to $u_i$ in the closed walk of the $c$-cyclic walkage $(\Scal, \Scal')$ containing $S^i$ that passes through all vertices in this closed walk, as in \cref{fig:closed-walk-ci-2}.

  \begin{figure}[htbp]
    \centering
    \begin{tikzpicture}[scale = 1.5]
      \begin{scope}[line cap = round]
        \draw[red!30, line width = 4pt] (0.7,1.15) -- (0.91,0.86);
        \draw[red!30, line width = 4pt] (2.2,1.15) -- (2.41,0.86);
      \end{scope}
      \begin{scope}[line cap = round]
        \draw[red!30, line width = 4pt, decoration = {coil, aspect = 1.2, post length = 2mm}, decorate] (1.5, 0) -- (1.5, 2);
        \begin{scope}
          \clip (0, 1.966667) circle (0.63333);
          \draw[red!30, line width = 4pt, decoration = {coil, aspect = 1.2, post length = 2mm}, decorate] (0, 0) -- (0, 2);
        \end{scope}
        \begin{scope}
          \clip (2.5,-0.5) -- (3.5,-0.5) -- (3.5,1.25) to[bend right] (2.5, 1.25) -- cycle;
          \draw[red!30, line width = 4pt, decoration = {coil, aspect = 1.2, post length = 2mm}, decorate] (3, 0) -- (3, 2);
        \end{scope}
        \draw[red!30, line width = 4pt, decoration = {coil, aspect = 1.2, post length = 2mm}, decorate] (-1, 1.333333) -- (4,1.333333);
        \draw[red!30, line width = 4pt, decoration = {coil, aspect = 1.2, post length = 2mm}, decorate] (4,1.333333) to[bend left] (-1, 1.333333);
        \begin{scope}
          \clip (-1, 1.12) -- (4,1.12) -- (4,2.5) -- (-1,2.5) -- cycle;
          \draw[red!30, line width = 4pt, decoration = {coil, aspect = 1.2, post length = 2mm}, decorate] (0,2) -- (1.5,0);
          \draw[red!30, line width = 4pt, decoration = {coil, aspect = 1.2, post length = 2mm}, decorate] (1.5,2) -- (3,0);
          \draw[walkage, red, decoration = {post length = 2mm}] (1.5,2) -- (3,0);
          \draw[walkage, red, decoration = {post length = 2mm}] (0,2) -- (1.5,0);
        \end{scope}
        \begin{scope}
          \clip (0.885,0.913) -- (1.7,0.913) -- (1.7,-0.5) -- (0.885,-0.5) -- cycle;
          \draw[red!30, line width = 4pt, decoration = {coil, aspect = 1.2, post length = 2mm}, decorate] (0,2) -- (1.5,0);
          \draw[walkage, red, decoration = {post length = 2mm}] (0,2) -- (1.5,0);
        \end{scope}
        \begin{scope}
          \clip (2.385,0.913) -- (3.2,0.913) -- (3.2,-0.5) -- (2.385,-0.5) -- cycle;
          \draw[red!30, line width = 4pt, decoration = {coil, aspect = 1.2, post length = 2mm}, decorate] (1.5,2) -- (3,0);
          \draw[walkage, red, decoration = {post length = 2mm}] (1.5,2) -- (3,0);
        \end{scope}
      \end{scope}
      \node[rotate = -57] at (0.815,0.98) {$\cdots$};
      \node[rotate = -57] at (2.315,0.98) {$\cdots$};

      \draw[walkage, red, decoration = {post length = 2mm}] (4,1.333333) to[bend left] (-1,1.333333);
      \draw[walkage, decoration = {post length = 1mm}] (-1,1.333333) -- (4,1.333333);

      \draw[walkage, decoration = {post length = 2mm}] (0,0) -- (0,2);
      \draw[walkage, decoration = {post length = 2mm}] (1.5,0) -- (1.5,2);
      \draw[walkage, decoration = {post length = 2mm}] (3,0) -- (3,2);

      \node[below] at (0,0) {$W_1^i$};
      \node[below] at (1.5,0) {$W_2^i$};
      \node[below] at (3,0) {$W_3^i$};
      \node[right] at (4,1.333333) {$S^i$};

      \node[draw, red, circle, fill = red, inner sep = 0, minimum size = 3pt] (v) at (-0.05, 1.35) {};
      \node[draw, red, circle, fill = red, inner sep = 0, minimum size = 3pt] (u) at (2.96, 1.37) {};
      \node[above right, red] at (u) {$v_i$};
      \node[above left, red] at (v) {$u_i$};
    \end{tikzpicture}
    \caption{Sketch of the construction of the closed walk $C_i$ (in light red).}\label{fig:closed-walk-ci-2}
  \end{figure}

  Since each vertex in $D$ can be in at most $c$ parts from the partition of $Y$ and in at most one part from the partition of $X$, the congestion of $\Ccal = \{C_1, C_2, \dots, C_s\}$ is at most $c + 1$.
  Therefore, \(\Ccal \) is a collection of closed walks of size \(s\) and congestion at most $c + 1$.
  As each walk from $\Ccal$ contains a walk from $\Wcal$ and a walk from $\Scal$, there is a subwalkage of $\Wcal$ and a subwalkage of $\Scal$ both of which are $(c+1)$-cyclic and of size $s$.
\end{proof}

The next corollary simply wraps up \cref{lem:cyclic-extension-congestion,lem:untangled-cyclic-extension-congestion}.

\begin{corollary}\label{cor:cyclic-extension}
  Given $k, s \in \N$, let $\dCE(k, s) = \dCEC(\wCW(k, s)) = \Ocal(k  s\sqrt{\log k})$.
  Let $D$ be a digraph that contains a walkage $\Wcal$ that is untangled and a walkage $\Scal$ that is structured.
  If $\degen(\intersec(\Wcal \cup \Scal)) \ge \dCE(k, s)$, then \(D\) contains either
  \begin{enumerate}[label=\textnormal{(\alph*)}]
    \item\label{item:cyclic-extension-1} A bramble of size $k$ and congestion $2$; or
    \item\label{item:cyclic-extension-2} A subwalkage of \(\Wcal \) and a subwalkage of $\Scal$ of size $s$ that are cyclic, and a closed walkage of size $s$ such that each closed walk contains a distinct walk from $\Wcal$ and a distinct walk from $\Scal$.
  \end{enumerate}
\end{corollary}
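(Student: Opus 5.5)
The plan is to split on whether $\Scal$ is untangled or cyclic, which is exactly what "structured" means. In each case we apply the corresponding bowtie lemma with $s' = \wCW(k,s)$ in place of $s$, and then clean the resulting congestion-$2$ closed walks with \cref{lem:closed-walkage}. Since $\dCE(k,s) = \dCEC(\wCW(k,s))$, the degeneracy hypothesis $\degen(\intersec(\Wcal \cup \Scal)) \ge \dCE(k,s)$ is precisely what \cref{lem:untangled-cyclic-extension-congestion,lem:cyclic-extension-congestion} require with parameter $s'$.

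\emph{Case 1: $\Scal$ is untangled.} By \cref{lem:untangled-cyclic-extension-congestion} applied with $s'$, we obtain a family $\Zcal = \{C_1, \dots, C_{s'}\}$ of closed walks with congestion $2$, where each $C_i$ contains a distinct walk $W^{(i)} \in \Wcal$ and a distinct walk $S^{(i)} \in \Scal$.

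\emph{Case 2: $\Scal$ is cyclic, i.e.\ $1$-cyclic.} By \cref{lem:cyclic-extension-congestion} with $c = 1$ and parameter $s'$, we obtain the same kind of family with congestion at most $c+1 = 2$.

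In both cases, rather than invoking \cref{lem:closed-walkage} as a black box on a walkage, I will rerun its short argument directly on $\Zcal$, since what we need is a subfamily of the closed walks. If $\degen(\intersec(\Zcal)) \ge \dCM(k)$, then \cref{prop:dense-bramble} with $c = 2$ yields a bramble of size $k$ and congestion at most $2$, which is outcome~\ref{item:cyclic-extension-1}. Otherwise, we repeatedly pick a vertex of minimum degree in $\intersec(\Zcal)$ and delete it together with its neighbours. Each step removes at most $\dCM(k)$ vertices, and $|\Zcal| = s' = s \cdot \dCM(k)$, so this greedy procedure yields an independent set $\Zcal' \subseteq \Zcal$ of size at least $s$. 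The closed walks in $\Zcal'$ are pairwise vertex-disjoint, so they form a closed walkage of size $s$.

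It remains to read off outcome~\ref{item:cyclic-extension-2}. For each $C_i \in \Zcal'$, the walks $W^{(i)}$ and $S^{(i)}$ are subwalks of the closed walk $C_i$, and a closed walk passing through all of a subwalk can be rotated so that it starts at the end of that subwalk. So we may write $C_i$ (up to rotation) as $W^{(i)} \cdot R_i$, with $R_i$ running from $\enLink(W^{(i)})$ back to $\stLink(W^{(i)})$, and analogously as $S^{(i)} \cdot R'_i$. Since the $C_i$ in $\Zcal'$ are pairwise disjoint, the families $\{R_i\}$ and $\{R'_i\}$ are walkages, and $\{W^{(i)} \cdot R_i\}$ and $\{S^{(i)} \cdot R'_i\}$ are closed walkages. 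Hence $\{W^{(i)} : C_i \in \Zcal'\}$ and $\{S^{(i)} : C_i \in \Zcal'\}$ are cyclic subwalkages of size $s$, as required.

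The main obstacle is this last step: checking that a closed walk containing $W^{(i)}$ as a subwalk really gives a cyclic extension in the sense of \cref{def:cyclic-walkage}, where $W^{(i)} \cdot C_i$ must literally be a closed walk. This needs the closed walks produced by the bowtie lemmas to traverse each selected walk contiguously. In Case 1 this holds because $W_2^i$ lies inside $C_W^i$. In Case 2 it holds for $W_2^i$, and for $S^i$ because the return walk traverses the whole closed walk of $(\Scal, \Scal')$ containing $S^i$. The "distinct walk" clause in the two lemmas is therefore to be read as referring to these contiguously traversed walks.
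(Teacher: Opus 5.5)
Your proposal is correct and follows essentially the same route as the paper. Like the paper, you split on untangled versus cyclic, apply the corresponding bowtie lemma with parameter $\wCW(k,s)$, and then run the \cref{lem:closed-walkage} argument on the resulting congestion-$2$ family $\Zcal$.

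Your one departure is to inline that argument rather than cite the lemma and then rely on its proof producing closed walks from $\Zcal$. You also check that each chosen walk is traversed contiguously, so the rotation gives a genuine cyclic extension. Both are just more explicit versions of steps the paper leaves implicit.
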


\begin{proof}
  If $\Scal$ is untangled, then we get a subwalkage $\Wcal' \subseteq \Wcal$ and a subwalkage $\Scal' \subseteq \Scal$ of size $\wCW(k,s)$ that are $2$-cyclic by \cref{lem:untangled-cyclic-extension-congestion}.
  Similarly, if $\Scal$ is cyclic, then we get a subwalkage $\Wcal' \subseteq \Wcal$ and a subwalkage $\Scal' \subseteq \Scal$ of size $\wCW(k, s)$ that are $2$-cyclic by \cref{lem:cyclic-extension-congestion}.
  Moreover, in either case we get a family $\Zcal$ of $\wCW(k,s)$ closed walks with congestion $2$ such that each closed walk contains a distinct walk from $\Wcal$ and a distinct walk from $\Scal$.
  By \cref{lem:closed-walkage}, we either get a bramble of size $k$ and congestion $2$ (and we are done) or a subwalkage $\Wcal'' \subseteq \Wcal'$ of size $s$ that is cyclic.
  If the outcome of \cref{lem:closed-walkage} is~\ref{item:closed-walkage-2}, note that each walk $W''_i \in \Wcal''$ is contained in some closed walk $Z''_i \in \Zcal$, which also contains a distinct walk $S''_i \in \Scal'$.
  Hence, we have $s$ distinct walks from $\Scal$ that are vertex-disjoint, for if not then $\Wcal''$ would not be cyclic (it would be $c$-cyclic, for some $c \ge 2$), which give us a subwalkage of $\Scal$ of size $s$ that is cyclic.
  Furthermore, the closed walks $Z''_i$ obtained from each $W''_i \in \Wcal''$ give us a closed walkage of size $s$ such that each closed walk contains a distinct walk from $\Wcal$ and a distinct walk from $\Scal$.
\end{proof}

\cref{lem:structured-walkage-congestion} and \cref{cor:structured-walkage} are our main tools for building walkages that are either cyclic or untangled.
The former shows how to go from an appropriate well-linked walkage to a walkage that is either untangled or cyclic, possibly with some congestion, and the latter simply gets rid of the congestion.

\begin{lemma}\label{lem:structured-walkage-congestion}
  Given $s, u \in \N$, let $\wSWC(s, u) = (s - 1)(\wUW(s, u) - 1) +  1 = \Ocal(s^2  u)$.
  Let $D$ be a digraph containing a walkage \(\Wcal \) of size \(w\) such that \(\enLink(\Wcal) \cup \stLink(\Wcal)\) is well-linked.
  If \(w \ge \wSWC(s, u)\), then $D$ contains either
  \begin{enumerate}[label=\textnormal{(\alph*)}]
    \item A subwalkage of $\Wcal$ of size \(s\) that is $2$-cyclic; or
    \item A subwalkage of $\Wcal$ of size $u$ that is untangled.
  \end{enumerate}
\end{lemma}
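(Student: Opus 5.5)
Write $\Wcal = \{W_1, \dots, W_w\}$ with $a_i = \stLink(W_i)$ and $b_i = \enLink(W_i)$, and set $A = \stLink(\Wcal)$ and $B = \enLink(\Wcal)$. The plan is to use the well-linkedness of $A \cup B$ to build a back linkage that turns $\Wcal$ into either a $2$-cyclic or a threaded walkage, and in the threaded case finish with \cref{lem:untangled-walkage}.

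First, pick the disjoint sets $B$ and $A$, both of size $w$, inside the well-linked set $A \cup B$. This gives a linkage $\Lcal$ of $w$ pairwise vertex-disjoint paths from $B$ to $A$. Let $\pi$ be the permutation of $[w]$ such that the path of $\Lcal$ starting at $b_i$ ends at $a_{\pi(i)}$. Since $\Wcal$ and $\Lcal$ are each vertex-disjoint families, any vertex lies on at most one walk of $\Wcal$ and at most one path of $\Lcal$. Hence any collection of walks that uses each member of $\Wcal \cup \Lcal$ at most once has congestion at most $2$.

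Next, decompose $\pi$ into cycles. Following $W_i$, then the path $L_i$ from $b_i$ to $a_{\pi(i)}$, then $W_{\pi(i)}$, and so on, each cycle of $\pi$ yields a closed walk $W_{i_1} \cdot L_{i_1} \cdot W_{i_2} \cdots L_{i_r}$. There are two cases.
\begin{enumerate}[label=\textnormal{(\roman*)}]
  \item If $\pi$ has at least $s$ cycles, choose one walk $W_{i}$ from each of $s$ cycles. Its extension $C_i$ is the rest of its cycle's closed walk, from $b_i$ around back to $a_i$. These closed walks use disjoint members of $\Wcal \cup \Lcal$, so they have congestion $2$. This gives a subwalkage of size $s$ that is $2$-cyclic.
  \item Otherwise $\pi$ has at most $s-1$ cycles. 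By pigeonhole, some cycle has length at least $w/(s-1)$. Because $w \ge (s-1)(\wUW(s,u)-1)+1$, that length is at least $\wUW(s,u)$.
\end{enumerate}

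In case (ii), take $W_{i_1}, W_{i_2}, \dots, W_{i_m}$ in cycle order, with $m = \wUW(s,u)$ and $i_{j+1} = \pi(i_j)$. The threads are $T_j = L_{i_j}$, which run from $\enLink(W_{i_j})$ to $\stLink(W_{i_{j+1}})$. The threads are pairwise vertex-disjoint, so $\Tcal$ is a walkage, and $(\{W_{i_j}\}, \Tcal)$ is a threaded walkage of size $\wUW(s,u)$. Applying \cref{lem:untangled-walkage} to it yields either a subwalkage of size $s$ that is $2$-cyclic or one of size $u$ that is untangled. Both are subwalkages of $\Wcal$, which gives the two outcomes of the statement.

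The main care point is the definition of walkage extensions. An extension $\Scal$ must satisfy $\stLink(S_i) = \enLink(W_i)$ under the re-indexing of the chosen subwalkage. In the cyclic case, each $C_i$ is the remainder of a closed walk, so its endpoints and its congestion-$2$ guarantee need checking. One must also handle fixed points of $\pi$, where $C_i$ is the single path $L_i$. Neither seems hard; the key fact is simply that every member of $\Wcal \cup \Lcal$ is used once.
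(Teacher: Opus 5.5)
Your proof is correct and follows the paper's argument. The paper also takes a back walkage from $\enLink(\Wcal)$ to $\stLink(\Wcal)$ and decomposes the resulting permutation into cycles, phrased there as an auxiliary bipartite digraph with in- and out-degree one; it then either takes one closed walk from each of $s$ cycles, or treats a cycle of length at least $\wUW(s,u)$ as a threaded walkage and applies \cref{lem:untangled-walkage}, and your choice of the extension $C_i$ as the remainder of the whole cycle's closed walk is stated more carefully than the paper's wording.
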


\begin{proof}
  By well-linkedness, let $\Tcal$ be an $(\enLink(\Wcal), \stLink(\Wcal))$-walkage of size $w$.
  Let \(C\) be the auxiliary digraph with vertex set \(\stLink(\Wcal)\cup \enLink(\Wcal)\) and an arc $(\stLink(Q), \enLink(Q))$ for each walk $Q \in \Wcal \cup \Tcal$.
  Note that, even if walks from $\Wcal \cup \Tcal$ intersect in $D$, this is not reflected in $C$.
  Indeed, $C$ is a bipartite digraph with bipartition $(\stLink(\Wcal), \enLink(\Wcal))$ in which $\Wcal$ and $\Tcal$ are each represented by a perfect matching, oriented in opposite directions of the bipartition.
  Since every vertex in \(C\) has in-degree and out-degree exactly $1$, \(C\) decomposes into a collection of disjoint cycles.
  Define a \emph{cyclic walk} to be a walk in \(D\) obtained by the concatenation of the walks corresponding to the arcs of a cycle in \(C\).
  Note that the set of all cyclic walks has congestion $2$, as these walks are obtained from a partition of the walkages \(\Wcal \) and \(\Tcal \).

  Since $w \ge \wSWC(s, u)$, in $C$ we either have a strong component with at least \( \wUW(s, u) \) arcs corresponding to walks in $\Wcal$, or at least $s$ strong components.
  If we have a strong component in \(C\) with at least \(\wUW(s, u) \) arcs corresponding to walks in $\Wcal$, then the walks associated with these arcs induce a subwalkage \(\Wcal' \subseteq \Wcal \) of size at least \(\wUW(s, u)\) that is threaded.
  By \cref{lem:untangled-walkage}, we either get a subwalkage of $\Wcal'$ of size $s$ that is $2$-cyclic or a subwalkage of $\Wcal'$ of size \(u\) that is untangled.

  Otherwise, we have at least $s$ strong components and each component gives a distinct walk in $\Wcal$ and a distinct walk in $\Tcal$ such that their concatenation is a closed walk.
  Hence, we get a subwalkage of $\Wcal$ of size $s$ that is $2$-cyclic, as each vertex in $D$ can be in at most one walk from $\Wcal$ and one walk from $\Tcal$.
\end{proof}

\cref{lem:structured-walkage-congestion} should be compared directly to an application of~\cite[Lemma~3.5]{MasarikPilipczukRzazewskiSorge22} followed by an application of~\cite[Lemma~3.6]{MasarikPilipczukRzazewskiSorge22}, which essentially finds either an untangled threaded linkage of size $u$ or a family of $s$ closed walks with congestion $3$ starting from an appropriate linkage of size linear in $s  u$.
The main advantage of our approach is that the congestion never exceeds $2$, although this comes at a small extra cost in the size of the input walkage.

\begin{corollary}\label{cor:structured-walkage}
  Given $k, s, u \in \N$, let $\wSW(k, s, u) = \wSWC (\wCW(k, s), u)  = \Ocal(k^2  s^2  u \cdot \log k )$.
  Let $D$ be a digraph containing a walkage \(\Wcal \) of size \(w\) such that \(\enLink(\Wcal) \cup \stLink(\Wcal)\) is well-linked.
  If \(w \ge \wSW(k, s, u)\), then $D$ contains either
  \begin{enumerate}[label=\textnormal{(\alph*)}]
    \item\label{item:structured-walkage-1} A bramble of size $k$ and congestion $2$;
    \item\label{item:structured-walkage-2} A subwalkage of $\Wcal$ of size \(s\) that is cyclic; or
    \item\label{item:structured-walkage-3} A subwalkage of $\Wcal$ of size $u$ that is untangled.
  \end{enumerate}
\end{corollary}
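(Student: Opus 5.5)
The plan is to compose \cref{lem:structured-walkage-congestion} with \cref{lem:closed-walkage}, exactly as the definition $\wSW(k,s,u) = \wSWC(\wCW(k,s), u)$ suggests. Since $w \ge \wSW(k,s,u) = \wSWC(\wCW(k,s),u)$ and $\stLink(\Wcal) \cup \enLink(\Wcal)$ is well-linked, we apply \cref{lem:structured-walkage-congestion} with the parameters $s' = \wCW(k,s)$ and $u$. This yields one of two outcomes.

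\textbf{Untangled outcome.} If we obtain a subwalkage of $\Wcal$ of size $u$ that is untangled, this is exactly outcome~\ref{item:structured-walkage-3}, and we are done.

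\textbf{Congested cyclic outcome.} Otherwise, we obtain a subwalkage $\Wcal' \subseteq \Wcal$ of size $\wCW(k,s)$ that is $2$-cyclic. We then apply \cref{lem:closed-walkage} to $\Wcal'$ with $c = 2$. Two things can happen.
\begin{enumerate}[label=\textnormal{(\roman*)}]
  \item We find a bramble of size $k$ and congestion at most $2$. This is outcome~\ref{item:structured-walkage-1}, where we read ``congestion $2$'' as ``at most $2$'', consistently with \cref{cor:cyclic-extension}.
  \item We find a subwalkage of $\Wcal'$ of size $s$ that is cyclic. Because being a subwalkage is transitive, this is also a subwalkage of $\Wcal$, giving outcome~\ref{item:structured-walkage-2}.
\end{enumerate}

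\textbf{Asymptotic bound.} We have $\wCW(k,s) = \Ocal(ks\sqrt{\log k})$ and $\wSWC(s',u) = \Ocal(s'^2 u)$. Substituting gives $\Ocal(k^2 s^2 u \log k)$, as claimed.

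\textbf{Main obstacle.} There is essentially none; the only point to check is bookkeeping. \cref{lem:closed-walkage} needs the $2$-cyclic walkage to have size at least $\wCW(k,s)$, and this is guaranteed precisely by choosing $s' = \wCW(k,s)$ in the first step.
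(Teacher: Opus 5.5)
Your proposal is correct and follows the paper's proof exactly. You apply \cref{lem:structured-walkage-congestion} with size parameter $\wCW(k,s)$, and in the $2$-cyclic case you finish with \cref{lem:closed-walkage} with $c=2$; your bookkeeping and the asymptotic bound both check out.
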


\begin{proof}
  Using \cref{lem:structured-walkage-congestion} on $\Wcal$, we either get a subwalkage of $\Wcal$ of size $\wCW(k, s)$ that is $2$-cyclic or a subwalkage of $\Wcal$ of size $u$ that is untangled.
  If the subwalkage is $2$-cyclic, we apply \cref{lem:closed-walkage} to either get a bramble of size $k$ and congestion $2$ or a subwalkage of size $s$ that is cyclic.
\end{proof}

We are now ready to prove the main result of this section, which is our first step towards building the bramble with congestion $2$.
Its main objective is to show how to go from a well-linked walkage that is untangled and arrive at one that is cyclic and is still reasonably well-linked.
A walkage that is cyclic is much more useful because each of its elements is a connected subgraph of our digraph, meaning we can use it directly to build the bags of the bramble.

\begin{lemma}\label{lem:cyclic-walk-system}
  Given $k, \ell, w \in \N$, let
  \[
    \begin{array}{@{}r@{}l@{}l@{}}
      s                 & {} = \lISM(\dCM(k), w) & {} = \Ocal(wk \sqrt{\log k}), \\
      u                 & {} = \lISM(\dCE(k, s), w) & {} = \Ocal(w^2k^2\log k) \\
      s'                & {} = s + (2\dCE(k, 2w - 1) - 3)(2w - 2) & {} = \Ocal(w^2k \sqrt{\log k}) \\
      u'                & {} = u + (2\dCE(k, 2w - 1) - 3)(2w - 2) & {} = \Ocal(w^2k^2 \log k) \\
      \lCWS(k, \ell, w) & {} = \ell + 2\wSW(k, s', u') & {} = \Ocal(\ell + w^6 k^6 \cdot \log^3 k), \\[5pt]
      \wCWS(k, w)       & {} = u'(2w - 2) + 1 & {} = \Ocal(w^3 k^2 \cdot \log k).
    \end{array}
  \]
  Let $D$ be a digraph that contains an $\ell'$-linked walkage $\Wcal$ of size $w'$ that is untangled.
  If $\ell' \ge \lCWS(k, \ell, w)$ and $w' \ge \wCWS(k, w)$, then $D$ contains either
  \begin{enumerate}[label=\textnormal{(\alph*)}]
    \item\label{item:cyclic-walk-system-1} A bramble of size $k$ and congestion $2$; or
    \item\label{item:cyclic-walk-system-2} An $\ell$-linked walkage of size $w$ that is cyclic.
  \end{enumerate}
\end{lemma}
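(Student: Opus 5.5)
We follow the overview in \cref{fig:proof-overview-of-linkage-cyclic-walk-lemma}. Let $(\Wcal, \Tcal)$ be an untangled extension of $\Wcal = \{W_1, \dots, W_{w'}\}$, with well-linked sets $A_i \subseteq V(W_i)$ of size $\ell'$. Since $w' \ge u'(2w-2)+1$, we group consecutive walks of $\Wcal$ into $2w-1$ blocks of at least $u'$ consecutive walks each. The odd-indexed blocks give $w$ \emph{target} walks; we glue each target block into a single walk via \cref{lem:untangled-walkage-to-walkage}, and we discard the even blocks. This leaves $w$ target walks $W'_1, \dots, W'_w$, which still form a walkage that is untangled.

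For each target $W'_i$ we write the vertices of $A_i$ in their order along $W'_i$. We take the last $2\wSW(k,s',u')$ of them as sources and the first as sinks, leaving at least $\ell$ vertices in the middle. By well-linkedness there is a walkage $\Qcal_i$ of size $2\wSW(k,s',u')$ from the sources to the sinks. Its endpoints lie in $A_i$, so their union is well-linked. Applying \cref{cor:structured-walkage} to $\Qcal_i$, we either obtain a bramble (and stop) or a subwalkage $\Qcal'_i$ that is cyclic of size $s'$ or untangled of size $u'$. Any single $Q \in \Qcal'_i$ goes from a late vertex of $W'_i$ to an early one, so $W'_i$ concatenated with the subwalk from $\stLink(Q)$ along $W'_i$ and then through $Q$ yields a closed walk containing the $\ge \ell$ middle vertices of $A_i$.

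The remaining task is to choose one $Q_i \in \Qcal'_i$ for each $i \in [w]$ such that the closed walks $W'_i \cup Q_i$ are pairwise vertex-disjoint. The resulting walkage $\{W''_i\}$, where each $W''_i$ is the portion of $W'_i$ containing the middle vertices and $Q_i$ together with the tails of $W'_i$ forms the cyclic extension, is then $\ell$-linked and cyclic.

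To obtain this, we analyse intersection graphs. First, if some $\Qcal'_i$ has degeneracy at least $\dCE(k,2w-1)$ together with the untangled walkage $\Wcal$ (more precisely, with the blocks), \cref{cor:cyclic-extension} directly produces either a bramble or a cyclic walkage. Such a walkage consists of $2w-1$ closed walks, each containing a full walk of $\Wcal$ and hence enough of the well-linked set; after we adjust block sizes, this gives outcome~\ref{item:cyclic-walk-system-2}. Otherwise each $Q \in \Qcal'_i$ meets few walks of $\Wcal$. We discard from $\Qcal'_i$ the walks that hit targets $W'_j$ with $j \ne i$, or the even separator blocks near $i$. Degeneracy counting shows this loses at most $(2\dCE(k,2w-1)-3)(2w-2)$ walks, which is exactly the slack $s'-s$ and $u'-u$, so at least $s$ (resp.\ $u$) walks survive.

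Next, for $i \neq j$, consider $\intersec(\Qcal'_i \cup \Qcal'_j)$. If its degeneracy is at least $\dCE(k,s)$ with one side untangled, or at least $\dCM(k)$ when both sides are cyclic, we invoke \cref{cor:cyclic-extension} or \cref{prop:dense-bramble} respectively. This yields a bramble, or cyclic pieces that we can again turn into the target structure. Otherwise all pairs are sparse, and \cref{prop:independent-set-multipartite} with classes $\Qcal'_1, \dots, \Qcal'_w$ (sizes $s = \lISM(\dCM(k),w)$ or $u = \lISM(\dCE(k,s),w)$) gives pairwise disjoint $Q_1, \dots, Q_w$, as required.

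The main obstacle is the bookkeeping in the dense cases. We must check that the cyclic walkages produced by \cref{cor:cyclic-extension} are still $\ell$-linked, meaning each closed walk must retain $\ell$ well-linked vertices, and that disjointness from the other targets is preserved after the filtering step. I would handle the first point by always including a whole target or block of $\Wcal$ in every produced closed walk.
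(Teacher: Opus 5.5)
\textbf{Gap 1: filtering against glued targets.} Your filtering step breaks because the targets $W'_j$ are glued blocks. In the non-dense case you only know $\degen(\intersec(\Wcal\cup\Qcal'_i)) < \dCE(k,2w-1)$. This does not mean each $Q\in\Qcal'_i$ meets few walks of $\Wcal$, since a star has degeneracy $1$. Nor does it bound how many walks of $\Qcal'_i$ meet a glued target $W'_j$. A matching saturating $\Qcal'_i$ into the $\ge u'$ walks of block $j$ has degeneracy $1$, yet every $Q$ then hits $W'_j$. The threads inside $W'_j$ are invisible to the intersection graph altogether. So your discarding may empty $\Qcal'_i$, and the loss bound $(2\dCE(k,2w-1)-3)(2w-2)$ is unjustified.

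Sparsity controls only the other side. The set $\Tcal_i$ of walks of $\Wcal$ meeting at least $2(\dCE(k,2w-1)-1)$ walks of $\Qcal'_i$ has fewer than $|\Qcal'_i|$ members; otherwise take $|\Qcal'_i|$ of them and apply \cref{prop:average-min-degree}. This is why the paper builds $\Qcal_i$ for every $W_i\in\Wcal$, from the last $\wSW(k,s',u')$ to the first $\wSW(k,s',u')$ vertices of $A_i$. (Your $2\wSW(k,s',u')$ per side does not fit into $\ell'\ge\ell+2\wSW(k,s',u')$.) It then selects $2w-1$ \emph{single} walks $W'_1,\dots,W'_{2w-1}$ of $\Wcal$ lying outside the heavy sets $\Tcal_i$ of the selected walks. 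This selection, not a block partition, is what the hypothesis $w'\ge u'(2w-2)+1$ pays for.

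The selection must control both directions: a walk chosen early could lie in the heavy set of one chosen later. Taking an independent set in the graph that joins $W_a,W_b$ whenever one lies in the other's heavy set handles this, at a constant-factor cost in $w'$. Each $W'_j$ then meets at most $2\dCE(k,2w-1)-3$ walks of $\Qcal'_i$, which is exactly the slack $s'-s=u'-u$. Since the final closed walk is $Q_i$ plus a segment of the single walk $W'_i$, threads never enter.

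\textbf{Gap 2: mixed classes in the independent-set step.} Your final application of \cref{prop:independent-set-multipartite} is to a mixture of cyclic classes of size $s$ and untangled classes of size $u$. The proposition needs a single bound $d$ on all pairwise degeneracies, and every class of size at least $\lISM(d,w)$. Take a cyclic class of size $s=\lISM(\dCM(k),w)$ facing an untangled class with degeneracy at least $\dCM(k)$ but below $\dCE(k,s)$. This pair triggers none of your dense rules, yet the cyclic class is too small for $d=\dCE(k,s)$. Enlarging $s$ cannot help, because $\dCE(k,s)$ grows linearly in $s$.

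The paper removes mixed pairs by pigeonhole. After dense untangled--untangled pairs are converted via \cref{cor:cyclic-extension}, at least $w$ of the $2w-1$ indices share a type. Either they are all cyclic of size $s$, where degeneracy at least $\dCM(k)$ yields a bramble by \cref{prop:dense-bramble}. Or they are all untangled of size $u$ with pairwise degeneracy below $\dCE(k,s)$. The proposition is applied to those $w$ classes only. By taking $2w-1$ blocks and keeping only the $w$ odd ones, you discard exactly the slack this pigeonhole needs.
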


\begin{proof}
  Let $\Wcal = \{W_1, W_2, \dots, W_{w'}\}$.
  By well-linkedness, for each $i \in [w']$ there is a walkage $\Qcal_i$ of size $\wSW(k, s', u')$ from the last $\wSW(k, s', u')$ vertices of $A_i$ to the first $\wSW(k, s', \allowbreak u')$ vertices of $A_i$, where $A_i \subseteq V(W_i)$ is the subset of size $\lCWS(k, \ell, w)$ of a well-linked set.
  By \cref{cor:structured-walkage}, each $\Qcal_i$ gives us a bramble of size $k$ and congestion $2$ (and we are done) or a subwalkage $\Qcal'_i$ that is either cyclic and of size $s'$ or untangled and of size $u'$.

  If the outcome of \cref{cor:structured-walkage} is~\ref{item:structured-walkage-2} or~\ref{item:structured-walkage-3}, we first analyze the intersection of the walkages $\Qcal'_i$ and $\Wcal$ for all $i \in [w']$.
  If $\degen(\intersec(\Wcal \cup \Qcal'_i)) \ge \dCE(k, 2w - 1)$ for some $i \in [w']$, by \cref{cor:cyclic-extension} we get a bramble of size $k$ and congestion $2$ (and we are done) or an $\ell'$-linked subwalkage of $\Wcal$ of size $2w - 1 \ge w$ that is cyclic (and we are done, as $\ell' \ge \ell$).
  Otherwise, for each $i \in [w']$, define $\Tcal_i$ as the set of walks in $\Wcal \setminus \{W_i\}$ that intersect at least $2(\dCE(k, 2w - 1) - 1)$ walks in $\Qcal'_i$.
  Note that $|\Tcal_i| \le |\Qcal'_i| - 1$ for each $i \in [w']$, otherwise we could pick $\Tcal'_i \subseteq \Tcal_i$ of size exactly $|\Qcal'_i|$ and the average degree of $\intersec(\Tcal'_i \cup \Qcal'_i)$ would be at least
  \[
    \frac{2 \cdot 2(\dCE(k, 2w - 1) - 1)|\Qcal'_i|}{2|\Qcal'_i|} = 2(\dCE(k, 2w - 1) - 1),
  \]
  which implies $\degen(\intersec(\Tcal'_i \cup \Qcal'_i)) \ge \dCE(k, 2w - 1)$ by \cref{prop:average-min-degree}, a contradiction to the fact that $\degen(\intersec(\Wcal \cup \Qcal'_i)) < \dCE(k, 2w - 1)$, as $\Tcal_i \subseteq \Wcal$.

  Define a subwalkage $\Wcal' \subseteq \Wcal$ of size $2w - 1$ as follows.
  Choose a walk $W_i \in \Wcal$ arbitrarily and remove itself together with all the walks in $\Tcal_i$ from $\Wcal$.
  Note that at most $|\Qcal'_i|$ walks were removed from $\Wcal$.
  Let $W'_1 = W_i$, add it to $\Wcal'$, and repeat.
  As $w' \ge |\Qcal'_i|(2w - 2) + 1$, we can iterate this process $2w - 1$ times.
  By construction, each walk in $\Wcal' \setminus \{W'_i\}$ intersects at most $2(\dCE(k, 2w - 1) - 1) - 1$ walks in $\Qcal'_i$ for all $i \in [2w - 1]$, so we can remove $2\dCE(k, 2w - 1) - 3$ walks from $\Qcal'_i$ for each walk in $\Wcal' \setminus \{W'_i\}$ to obtain a subset $\Qcal''_i \subseteq \Qcal'_i$ of size $|\Qcal'_i| - (2\dCE(k, 2w - 1) - 3)(2w - 2)$ such that the walks in $\Qcal''_i$ only intersect $W'_i$.
  This means that if $\Qcal'_i$ has size $s'$ and is cyclic, then $\Qcal''_i$ has size at least $s$ and is cyclic, while if $\Qcal'_i$ has size $u'$ and is untangled, then $\Qcal''_i$ has size at least $u$ and is untangled.

  Now, for each distinct $i, j \in [2w - 1]$, if $\Qcal''_i$ and $\Qcal''_j$ are both untangled and $\degen(\intersec(\Qcal''_i \cup \Qcal''_j)) \ge \dCE(k, s)$, then by \cref{cor:cyclic-extension} we get a bramble of size $k$ and congestion $2$ (and we are done) or subwalkages $\Qcal^*_i \subseteq \Qcal''_i$ and $\Qcal^*_j \subseteq \Qcal''_j$ of size $s$ that are cyclic.
  If the outcome of \cref{cor:cyclic-extension} is~\ref{item:cyclic-extension-2} for all distinct $i, j \in [2w - 1]$, there must exist $Z \subseteq [2w - 1]$ of size at least $w$ such that either $\Qcal''_i$ contains a subwalkage $\Qcal^*_i$ of size $s$ that is cyclic for all $i \in Z$ or $\Qcal''_i$ is untangled and of size $u$ for all $i \in Z$, and $\degen(\intersec(\Qcal''_i \cup \Qcal''_j)) < \dCE(k, s)$ for all $i, j \in Z$.

  We now analyze the intersection between the walkages $\Qcal^*_i$ for each $i \in Z$ to either obtain the desired bramble or disjoint walks which will be used to build our $\ell$-linked walkage that is cyclic.
  Assume first that $\Qcal''_i$ has a subwalkage $\Qcal^*_i$ that is cyclic and of size $s$ for all $i \in Z$.
  If $\degen(\intersec(\Qcal^*_i \cup \Qcal^*_j)) \ge \dCM(k)$ for some $i, j \in Z$, then we get a bramble of size $k$ and congestion at most $2$ by \cref{prop:dense-bramble}.
  Otherwise, as $s \ge \lISM(\dCM(k), w)$, we can apply \cref{prop:independent-set-multipartite} to obtain a walk $Q_i$ from each $\Qcal^*_i$ such that no two walks intersect.
  Otherwise, we may assume $\Qcal''_i$ is untangled and of size $u$ for all $i \in Z$, and $\degen(\intersec(\Qcal''_i \cup \Qcal''_j)) < \dCE(k, s)$ for all $i, j \in Z$.
  Again by \cref{prop:independent-set-multipartite}, as $u \ge \lISM(\dCE(k, s), w)$, we obtain a walk $Q_i$ from each $\Qcal''_i$ such that no two walks intersect.

  Let $W''_i$ be the concatenation of $Q_i$ and the middle part of $W'_i$ that starts at $\enLink(Q_i)$ and ends at $\stLink(Q_i)$.
  Note that $W''_i$ is a closed walk, and it contains at least $\ell$ vertices from the original well-linked set, as we may remove at most $\wSW(k, s', u')$ vertices at the start of $W'_i$ and at most $\wSW(k, s', u')$ at the end.
  Hence, $\Wcal'' = \{W''_1, W''_2, \dots, W''_w\}$ is an $\ell$-linked walkage of size $w$ that is cyclic.
\end{proof}

See \cref{fig:proof-overview-of-linkage-cyclic-walk-lemma} for an overview of the proof of \cref{lem:cyclic-walk-system}.
We remark that the walkage in outcome~\ref{item:cyclic-walk-system-2} of \cref{lem:cyclic-walk-system} need not be a subwalkage of $\Wcal$, unlike in the previous results.
We allow this for simplicity, as we do not need this fact in later proofs.
However, if one wishes to use \cref{lem:cyclic-walk-system} and ensure that the output walkage is a subwalkage of the input one, simply create a new walkage $\Wcal^* = \{W^*_1, W^*_2, \dots, W^*_{w'/3}\}$ such that $W^*_i = W_{3i-2} \cdot W_{3i-1} \cdot W_{3i}$ and apply the lemma to $\Wcal^*$.

\section{Strong systems}\label{sec:walk-systems}

Path systems are a fundamental object in many structural proofs involving directed treewidth~\cite{KK22, MasarikPilipczukRzazewskiSorge22, HKMM24}.
They consist essentially of a linkage in which the union of the first few and last few vertices of each path is well-linked.

In undirected graphs, \textcite{RW12} use a structure that closely resembles a path system to find a large grid-like minor of order $\ell$ (which is also a bramble with congestion $2$) in a graph with treewidth $\Omega(\ell^4 \sqrt{\log \ell})$.

In~\cite{MasarikPilipczukRzazewskiSorge22}, the authors use a similar strategy to find a bramble of size $k$ and congestion $8$ in a digraph with directed treewidth $\Omega(k^{48} \cdot \log^{13} k)$.
For digraphs, however, using path systems as a building block for brambles is a little harder, as an (undirected) path is a connected graph but a directed path is not a strong digraph.
Thus, the first step is now to find out how to use the directed paths to build the strong subgraphs that compose the bags of the bramble, which constitutes the bulk of the main proof in~\cite{MasarikPilipczukRzazewskiSorge22}.

We simplify these arguments using walkages.
We achieve this by defining a new object called \emph{strong system}, which is a structure analogous to the one used in~\cite{RW12}, that allows us to translate the ideas that previously only worked for undirected graphs.

\begin{definition}[Strong systems]\label{def:strong-system}
  An \emph{$(\ell,w)$-strong system} $\Scal$ in a digraph $D$ consists of a pair $(\Lcal, \Wcal)$, where $\Wcal = \{W_1, W_2, \dots, W_w\}$ is a closed walkage of size $w$ and $\Lcal = \{\Lcal_{ij} \mid \{i, j\} \in \binom{[w]}{2}\}$ is a collection of $\binom{w}{2}$ closed walkages, each of size $\ell$, such that for each $W_i, W_j \in \Wcal$ every closed walk in $\Lcal_{ij}$ intersects both $W_i$ and $W_j$.
  We define the \emph{pair set} of $\Scal$ to be $\pairs(\Scal) = \binom{[w]}{2}$.
\end{definition}

\cref{fig:strong-system} shows an example of a strong system.
Strong systems are much simpler to work with than path systems because we avoid the notion of order within the walks, as they are closed, and this essentially puts us back in an undirected setting.

\begin{figure}[htbp]
  \centering
  \begin{tikzpicture}[scale=.75]
    \draw[thick, red] (0,3.7475952641916446) ellipse (50pt and 10pt);
    \draw[thick, red, -latex] (-0.1,4.0975952641916446) -- (0.1,4.0975952641916446);
    \node[below] at (0,-0.4) {$W_1$};
    \node[above, red] at (0,4.1) {$\Lcal_{23}$};

    \begin{scope}[xshift = 1.875cm, yshift = 3.2475952641916446cm, rotate = -60]
      \draw[thick, red] (0,-3.7475952641916446) ellipse (50pt and 10pt);
      \draw[thick, red, -latex] (-0.1,-4.0975952641916446) -- (0.1,-4.0975952641916446);
    \end{scope}

    \begin{scope}[xshift = -1.875cm, yshift = 3.2475952641916446cm, rotate = 60]
      \draw[thick, red] (0,-3.7475952641916446) ellipse (50pt and 10pt);
      \draw[thick, red, -latex] (-0.1,-4.0975952641916446) -- (0.1,-4.0975952641916446);
      \draw[thick] (0,0) ellipse (24pt and 12pt);
      \draw[thick, -latex] (-0.1,12pt) -- (0.1, 12pt);
      \node[above left] at (0,0.4) {$W_2$};
      \node[below right, red] at (0,-4) {$\Lcal_{13}$};
    \end{scope}

    \begin{scope}[xshift = 1.875cm, yshift = 3.2475952641916446cm, rotate = -60]
      \draw[thick] (0,0) ellipse (24pt and 12pt);
      \draw[thick, -latex] (-0.1,12pt) -- (0.1, 12pt);
      \node[above right] at (0,0.35) {$W_3$};
      \node[below left, red] at (0,-3.9) {$\Lcal_{12}$};
    \end{scope}

    \draw[thick] (0,0) ellipse (24pt and 12pt);
    \draw[thick, -latex] (-0.1,12pt) -- (0.1, 12pt);
  \end{tikzpicture}

  \caption{Example of a $(1,3)$-strong system $\Scal = (\Lcal, \Wcal)$, the walkage $\Wcal$ is in black and all walkages in $\Lcal$ are in red.}\label{fig:strong-system}
\end{figure}

For higher congestion, however, we avoid using strong systems as their construction requires the directed treewidth to be at least a polynomial with a large degree, whereas \textcite[Lemma~4.6]{KK22} find $(\ell, p)$-path systems on digraphs with directed treewidth only quadratic in $\ell  p$.
As a result, in \cref{sec:higher-congestion-with-masarik} we simply redefine path systems in terms of walks to avoid the blow up in the exponent given by our next result.

\begin{lemma}\label{lem:fully-cyclic-walk-system}
  Given $k, \ell \in \N$, let $\lFCWS(k, \ell) = \wSW(k, \ell, \lCWS(k, 0, \ell) \cdot \wCWS(k, \ell)) = \Ocal(k^{10}  \ell^{11} \cdot \log^5 k)$.
  Let $D$ be a digraph containing an $\ell'$-linked walkage $\Wcal$ of size $w$ that is cyclic.
  If $\ell' \ge \lFCWS(k, \ell)$, then $D$ contains either
  \begin{enumerate}[label=\textnormal{(\alph*)}]
    \item\label{item:fully-cyclic-walk-system-1} A bramble of size $k$ and congestion $2$; or
    \item\label{item:fully-cyclic-walk-system-2} An $(\ell, w)$-strong system.
  \end{enumerate}
\end{lemma}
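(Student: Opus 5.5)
We follow the roadmap of \cref{fig:proof-sketch-fully-cyclic-walk-system}. Write $\Wcal=\{W_1,\dots,W_w\}$ with its cyclic extension, so each $W_i$ lies in a closed walk $\widehat W_i$ and these closed walks are pairwise vertex-disjoint. They form the closed walkage of the strong system. Let $A_i\subseteq V(W_i)$, $|A_i|\ge \ell'$, be the parts of the well-linked set $A=\bigcup_i A_i$. Set
\[
N=\lCWS(k,0,\ell)\cdot\wCWS(k,\ell),
\qquad
m=\wSW(k,\ell,N)=\lFCWS(k,\ell).
\]
For each pair $\{i,j\}\in\binom{[w]}{2}$, choose $m$ vertices of $A_i$ and $m$ vertices of $A_j$; this is possible since $\ell'\ge m$. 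By well-linkedness there is an $(A_i,A_j)$-walkage $\Lcal_{ij}$ of size $m$, and its start and end sets lie in $A$, so their union is well-linked. Applying \cref{cor:structured-walkage} to $\Lcal_{ij}$ yields one of three outcomes: a bramble (and we stop), a subwalkage of size $\ell$ that is cyclic, or a subwalkage $\Lcal'_{ij}$ of size $N$ that is untangled.

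In the cyclic case we are done for this pair. Each walk $L\in\Lcal'_{ij}$ starts on $W_i$ and ends on $W_j$, so the closed walk $L\cdot C_L$ from the cyclic extension meets both $W_i$ and $W_j$, hence both $\widehat W_i$ and $\widehat W_j$. These closed walks are pairwise disjoint, so they give a closed walkage of size $\ell$ as required for the pair $\{i,j\}$.

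In the untangled case we reorganize. Order $\Lcal'_{ij}$ by appearance along its untangled concatenation and cut it into $\wCWS(k,\ell)$ consecutive blocks, each of $\lCWS(k,0,\ell)$ walks. Let $\Lcal''_{ij}$ consist of the walks $\Lcal'_{ij}(a,b)$, one per block. By \cref{lem:untangled-walkage-to-walkage} this is a walkage. Each of its walks contains $\lCWS(k,0,\ell)$ start vertices from $A$, so it is $\lCWS(k,0,\ell)$-linked. It is also untangled, with threads the original threads between blocks; we must check that untangledness is preserved, since these threads touch only the neighboring walks. We now apply \cref{lem:cyclic-walk-system} with parameters $(k,0,\ell)$. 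This yields either a bramble or a walkage of size $\ell$ that is cyclic, each of whose walks contains a vertex of $A_i$ and a vertex of $A_j$. The final closed walks then meet $W_i$ and $W_j$.

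The main obstacle is that last claim. \cref{lem:cyclic-walk-system} does not output a subwalkage; its closed walks are built from the back walkages $\Qcal$ together with middle parts of the input walks. We therefore need each final closed walk to still meet both $W_i$ and $W_j$. We plan to handle this with the remark after \cref{lem:cyclic-walk-system}: merge consecutive walks (triples) so that the retained middle part always contains a whole original walk of $\Lcal'_{ij}$, which runs from $W_i$ to $W_j$. If needed, we enlarge the block length by a constant factor, which the $\Ocal$ bound absorbs. Collecting one closed walkage of size $\ell$ per pair, together with the closed walks $\widehat W_i$, gives the $(\ell,w)$-strong system. The size bound follows by substituting $\lCWS(k,0,\ell)\wCWS(k,\ell)=\Ocal(\ell^9k^8\log^4k)$ into $\wSW$, which gives $\Ocal(k^{10}\ell^{11}\log^5k)$.
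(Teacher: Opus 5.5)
Your proposal is correct and follows the paper's proof essentially step for step. You build walkages $\Lcal_{ij}$ between the well-linked parts of $W_i$ and $W_j$, apply \cref{cor:structured-walkage}, regroup the untangled case into $\wCWS(k,\ell)$ consecutive blocks via \cref{lem:untangled-walkage-to-walkage}, and then apply \cref{lem:cyclic-walk-system} with parameters $(k,0,\ell)$. You are also right that \cref{lem:cyclic-walk-system} does not return a subwalkage; the paper glosses over this when it writes $\Lcal^*_{ij}\subseteq\Lcal''_{ij}$. Your explicit check that the regrouped walkage stays untangled (threads are pairwise disjoint and touch only adjacent walks) is something the paper leaves implicit.

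The only divergence is the final incidence check, and there your triple-merging fix costs more than it should. It needs $3\wCWS(k,\ell)$ blocks, i.e.\ applying \cref{cor:structured-walkage} with third argument $3\lCWS(k,0,\ell)\cdot\wCWS(k,\ell)$. So it only proves the lemma with that larger threshold in place of the stated $\lFCWS(k,\ell)$. This is harmless for the $\Ocal$ bound, but it is not the statement as written.

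The paper avoids this by counting both the start and the end vertices of the walks of $\Lcal'_{ij}$ as linked vertices of a block. Along the concatenation these alternate between $W_i$ and $W_j$. Any retained middle part runs between two of them, so it meets both.

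Even with your start-vertices-only choice, no merging is needed. Order the linked vertices of a block by first occurrence. The middle part then runs from $\stLink(L_a)$ to $\stLink(L_b)$ with $a<b$. By untangledness, $\stLink(L_a)$ can only be visited by $L_a$, $T_{a-1}$ and $T_a$, and similarly $\stLink(L_b)$ only by $L_b$, $T_{b-1}$ and $T_b$. Hence the middle part traverses all of $L_a$, which goes from $W_i$ to $W_j$.
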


\begin{proof}
  Let $Z = \binom{[w]}{2}$ and let $\Wcal'$ be the closed walkage obtained by $\Wcal$ and its cyclic extension.
  By well-linkedness, there is a walkage $\Lcal_{ij}$ of size $\ell'$ from $W_i$ to $W_j$, for each $\{i, j\} \in Z$.
  As $\ell' \ge \wSW(k, \ell, \lCWS(k, 0, \ell) \cdot \wCWS(k, \ell))$, we can apply \cref{cor:structured-walkage} to $\Lcal_{ij}$ for each $\{i, j\} \in Z$ to get a bramble of size $k$ and congestion $2$ or a subwalkage $\Lcal'_{ij} \subseteq \Lcal_{ij}$ that is either cyclic and of size $\ell$ or untangled and of size $\lCWS(k, 0, \ell) \cdot \wCWS(k, \ell)$.
  Let $Z' \subseteq Z$ be the set of pairs $\{i, j\}$ for which $\Lcal'_{ij}$ has size $\lCWS(k, 0, \ell) \cdot \wCWS(k, \ell)$ and is untangled.

  Let $L_{ij}$ be the concatenation of the untangled threaded walkage on $\Lcal'_{ij}$ for each $\{i, j\} \in Z'$.
  By \cref{lem:untangled-walkage-to-walkage}, we can define a new walkage
  \[
    \Lcal''_{ij} = \{ L_{ij}((m-1) \lCWS(k, 0, \ell) + 1, m \cdot \lCWS(k, 0, \ell)) \mid m \in [\wCWS(k, \ell)] \},
  \]
  that has size $\wCWS(k, \ell)$ and is $\lCWS(k, 0, \ell)$-linked, as each walk in $\Lcal_{ij}'$ contains $2$ elements from a common well-linked set.
  Then, for each $\{i, j\} \in Z'$ we can apply \cref{lem:cyclic-walk-system} to get a bramble of size $k$ and congestion $2$ (and we are done) or a subwalkage $\Lcal^*_{ij} \subseteq \Lcal''_{ij}$ of size $\ell$ that is cyclic.

  Note that all walks in $\Lcal^*_{ij}$ must contain at least one vertex from $W_i$ and at least one vertex from $W_j$, as the first and last vertices, say $u_{ij}$ and $v_{ij}$, of any walk $L^*_{ij} \in \Lcal^*_{ij}$ were part of the well-linked set associated with $\Lcal''_{ij}$, which was originated by vertices in $W_i$ and $W_j$, and if $u_{ij}$ and $v_{ij}$ both belong to $W_i$, then there is some vertex $z_{ij}$ in $W_j$ that comes after $u_{ij}$ and before $v_{ij}$ in $L^*_{ij}$, as the construction of $\Lcal''_{ij}$ picks vertices in $W_i$ and $W_j$ alternately.

  Let $\tilde{\Lcal}_{ij}$ be the closed walkage obtained either by $\Lcal^*_{ij}$ if $\Lcal'_{ij}$ is untangled or by $\Lcal'_{ij}$ itself if it is already cyclic, and its cyclic extension for each $\{i, j\} \in Z$.
  Note that each closed walk in $\tilde{\Lcal}_{ij}$ contains a vertex in $W_i$ and a vertex in $W_j$ by construction.
  Define now $\tilde{\Lcal}$ as the collection of closed walkages between each $\{i, j\} \in Z$.
  It follows that $(\tilde{\Lcal}, \Wcal')$ is an $(\ell, w)$-strong system.
\end{proof}

\cref{lem:fully-cyclic-walk-system} is a prime example of the utility of walkages.
The walkages $\Lcal''_{ij}$ are built by the concatenation of smaller walks, which is not trivial to achieve in a linkage setting because we cannot afford to discard vertices to transform walks back to paths, as this could throw away vertices from the well-linked set.

We can now proceed to find the desired bramble inside a strong system.
The next lemma essentially adapts the arguments of~\cite{RW12}, replacing grid-like minors in graphs by brambles with congestion $2$ in digraphs.

\begin{lemma}[{Adapted from~\cite[Theorem~1.2]{RW12}}]\label{lem:final-bramble}
  Given $k \in \N$, let $\lB(k) = \lISM(k^2-k, \dCM(k)) = \Ocal(k^3 \sqrt{\log k})$.
  Let $D$ be a digraph containing an $(\ell, k)$-strong system $\Scal = (\Lcal, \Wcal)$.
  If $\ell \ge \lB(k)$, then $D$ contains a bramble of size $k$ and congestion $2$.
\end{lemma}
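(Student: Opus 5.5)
The proof follows the sparse/dense dichotomy of~\cite{RW12}, applied to the $\binom{k}{2}$ closed walkages $\Lcal_{ij}$ of the strong system $\Scal = (\Lcal, \Wcal)$. Let $r = \binom{k}{2}$ and $d = \dCM(k)$. The intersection graph of the union of all walkages in $\Lcal$ has one color class per pair $\{i,j\} \in \pairs(\Scal)$, namely the closed walks of $\Lcal_{ij}$. Call this graph $H$. The argument splits into two cases according to whether some pair of color classes induces a dense subgraph of $H$.

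\emph{Dense case.} Suppose there are distinct pairs $\{i,j\} \neq \{i',j'\}$ with $\degen(\intersec(\Lcal_{ij} \cup \Lcal_{i'j'})) \ge \dCM(k)$. Each of $\Lcal_{ij}$ and $\Lcal_{i'j'}$ is a closed walkage, so its walks are pairwise vertex-disjoint. Hence $\Lcal_{ij} \cup \Lcal_{i'j'}$ is a family of closed walks with congestion at most $2$. \cref{prop:dense-bramble} then yields a bramble of size $k$ and congestion $2$. If we want the bramble to be strong, as remarked after \cref{thm:main-proof-short}, we argue through the clique minor of \cref{prop:clique-minor}: take each bag to be the union of the closed walks in one branch set. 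Each bag is strong, because it is a union of closed walks that is connected in the intersection graph. Any two bags contain intersecting walks, so they pairwise intersect. Every vertex lies on at most one walk of each walkage, so the congestion is at most $2$.

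\emph{Sparse case.} Now suppose every pair of color classes satisfies $\degen(H[V_{ij} \cup V_{i'j'}]) < d$. Each class has size $\ell \ge \lB(k)$, which is at least $\lISM(d, r)$ after matching the arguments of $\lISM$ to the definition of $\lB$ (the order of arguments there must be checked against the constant $(4+2\varepsilon)(r-1)d$). So \cref{prop:independent-set-multipartite} gives an independent set $\{Z_{ij}\}$ with one closed walk $Z_{ij} \in \Lcal_{ij}$ per pair, and these walks are pairwise vertex-disjoint. Set $\Zcal = \{Z_{ij}\}$. By \cref{def:strong-system}, each $Z_{ij}$ meets both $W_i$ and $W_j$. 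The walks of $\Wcal$ are also pairwise disjoint.

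For each $i \in [k]$, define the bag
\[
B_i = W_i \cup \bigcup_{j < i} Z_{ji},
\]
which attaches each connector $Z_{ji}$ to the later of its two endpoints' bags. Each $B_i$ is a union of closed walks all meeting the closed walk $W_i$, so it is strong. For $j < i$, the bags $B_i$ and $B_j$ intersect because $Z_{ji} \subseteq B_i$ meets $W_j \subseteq B_j$. For the congestion bound, $\Wcal$ is a walkage and $\Zcal$ is a walkage, and each walk lies in exactly one bag. So every vertex lies in at most one walk of $\Wcal$ and at most one walk of $\Zcal$, hence in at most $2$ bags. This is exactly the $1$-subdivided clique in $\intersec(\Wcal \cup \Zcal)$, contracted by assigning each subdivision vertex to one side.

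\emph{Main obstacle.} The step I expect to need the most care is the sparse-case bookkeeping: matching the parameters of \cref{prop:independent-set-multipartite} to the degeneracy threshold of the dense case, and verifying that every bag is strong and that the congestion is exactly $2$ rather than larger, since walks of $\Wcal$ may intersect walks of $\Zcal$ arbitrarily. The assignment argument above handles the congestion, because each walk belongs to a single bag.
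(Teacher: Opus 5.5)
Your proposal is correct and is essentially the paper's proof: the dense case goes through \cref{prop:dense-bramble}, and the sparse case uses \cref{prop:independent-set-multipartite} to pick pairwise disjoint $Z_{ij}\in\Lcal_{ij}$ and then contracts the $1$-subdivided $K_k$ in $\intersec(\Wcal\cup\Zcal)$; your explicit bags $B_i = W_i \cup \bigcup_{j<i} Z_{ji}$ are just one concrete such contraction. The parameter check you flagged does go through despite the swapped argument order in the definition of $\lB$: with degeneracy at most $\dCM(k)-1$ and $\binom{k}{2}$ classes you need $(4+2\varepsilon)\bigl(\binom{k}{2}-1\bigr)\bigl(\dCM(k)-1\bigr)$, which is at most $\lB(k) = (4+2\varepsilon)(k^2-k)\bigl(\dCM(k)-1\bigr)$.
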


\begin{proof}
  If $\degen(\intersec(\Lcal_{ij} \cup \Lcal_{i'j'})) \ge \dCM(k)$ for some $\{i,j\}, \{i',j'\} \in \pairs(\Scal)$, then $D$ contains a bramble of size $k$ and congestion $2$ by \cref{prop:dense-bramble}.
  Otherwise, as $\ell \ge \lISM(k^2-k, \dCM(k))$, by \cref{prop:independent-set-multipartite} we can find a family $\Zcal$ of disjoint closed walks, one for each $\Lcal_{ij}$.
  Note that $\intersec(\Wcal \cup \Zcal)$ contains a $1$-subdivision of the complete graph $K_k$, because each pair of vertices associated with walks in $\Wcal$ have a distinct common neighbor associated with a walk in $\Zcal$, as in \cref{fig:clique-subdivision}.
  Moreover, a $1$-subdivision of $K_k$ contains $K_k$ as a minor.

  \begin{figure}[htbp]
    \centering
    \begin{tikzpicture}
      \node[draw, fill = black, circle, inner sep = 0, minimum size = 3pt] (00) at (0,2) {};
      \node[draw, fill = black, circle, inner sep = 0, minimum size = 3pt] (01) at (0,1.33333) {};
      \node[draw, fill = black, circle, inner sep = 0, minimum size = 3pt] (02) at (0,0.66667) {};
      \node[draw, fill = black, circle, inner sep = 0, minimum size = 3pt] (03) at (0,0) {};

      \node[draw, fill = black, circle, inner sep = 0, minimum size = 3pt] (10) at (2.5,3) {};
      \node[draw, fill = black, circle, inner sep = 0, minimum size = 3pt] (11) at (2.5,2.2) {};
      \node[draw, fill = black, circle, inner sep = 0, minimum size = 3pt] (12) at (2.5,1.4) {};
      \node[draw, fill = black, circle, inner sep = 0, minimum size = 3pt] (13) at (2.5,0.6) {};
      \node[draw, fill = black, circle, inner sep = 0, minimum size = 3pt] (14) at (2.5,-0.2) {};
      \node[draw, fill = black, circle, inner sep = 0, minimum size = 3pt] (15) at (2.5,-1) {};

      \draw[thick] (00) -- (10) -- (01);
      \draw[thick] (00) -- (11) -- (02);
      \draw[thick] (00) -- (12) -- (03);
      \draw[thick] (01) -- (13) -- (02);
      \draw[thick] (01) -- (14) -- (03);
      \draw[thick] (02) -- (15) -- (03);

      \node[left] at (00) {$W_1$};
      \node[left] at (01) {$W_2$};
      \node[left] at (02) {$W_3$};
      \node[left] at (03) {$W_4$};

      \node[right] at (10) {$Z_1 \in \Lcal_{12}$};
      \node[right] at (11) {$Z_2 \in \Lcal_{13}$};
      \node[right] at (12) {$Z_3 \in \Lcal_{14}$};
      \node[right] at (13) {$Z_4 \in \Lcal_{23}$};
      \node[right] at (14) {$Z_5 \in \Lcal_{24}$};
      \node[right] at (15) {$Z_6 \in \Lcal_{34}$};
    \end{tikzpicture}

    \caption{The closed walkage $\Zcal$ consists of one closed walk of $\Lcal_{ij}$ for each $\{i, j\} \in \binom{[k]}{2}$.}\label{fig:clique-subdivision}
  \end{figure}

  Now, for each vertex $v$ in this minor, define a closed walk that corresponds to the concatenation of all closed walks that were contracted into $v$.
  Any two of these closed walks intersect and the congestion of the family of all closed walks is at most $2$, as each vertex of $D$ can be in at most one walk in $\Wcal$ and at most one walk in $\Zcal$, so it defines a bramble of size at least $k$ and congestion $2$.
\end{proof}

\section{Congestion two and applications}\label{sec:main-proof}

This section contains our main result and some applications to the \textsc{Directed Disjoint Paths with Congestion $2$} problem and to half-integral packings of subgraphs with large directed treewidth.

\subsection{Finding the bramble}\label{sec:finding-the-bramble}

We first show how to build a large bramble with congestion $2$.
\cref{thm:main-proof} is simply a restatement of \cref{thm:main-proof-short} in more detail.
Its proof uses all lemmata provided so far to reach our goal; many intermediate steps either produce a large bramble with congestion $2$ or a more well-behaved object.
We begin with a sufficiently well-linked walkage that is cyclic obtained via an application of \cref{lem:cyclic-walk-system} on the output of \cref{prop:linked-path}, assuming that the digraph has large directed treewidth.
Assuming no bramble with congestion $2$ is found in intermediate steps, we go from the well-linked walkage to a strong system by \cref{lem:fully-cyclic-walk-system}, and then we find the desired bramble in it using \cref{lem:final-bramble}.

\begin{theorem}\label{thm:main-proof}
  Given $k \in \N$, let $\dB(k) = \wLP(\wCWS(k, k)\lCWS(k, \lFCWS(k, \lB(k)), k)) + 1 = \Ocal(k^{96} \cdot \log^{23} k)$.
  Any digraph $D$ with $\dtw(D) \ge \dB(k)$ contains a bramble of size $k$ and congestion $2$.
\end{theorem}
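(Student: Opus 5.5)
The proof is a pipeline through the earlier results, exactly in the order of \cref{fig:proof-overview}. Set $\ell_3 = \lB(k)$, $\ell_2 = \lFCWS(k, \ell_3)$, $\ell_1 = \lCWS(k, \ell_2, k)$ and $w_1 = \wCWS(k,k)$. Since $\dtw(D) \ge \dB(k) > \wLP(w_1 \ell_1)$, \cref{prop:linked-path} with parameter $w_1\ell_1$ cannot output an arboreal decomposition of width at most $\wLP(w_1\ell_1)$. Hence it outputs a path $P$ and a well-linked set $A \subseteq V(P)$ with $|A| = w_1 \ell_1$.

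The first step is to turn $(P,A)$ into an $\ell_1$-linked untangled walkage of size $w_1$. Order $A$ along $P$ and cut $P$ into $w_1$ consecutive vertex-disjoint subpaths $W_1, \dots, W_{w_1}$, each containing $\ell_1$ vertices of $A$. Let the threads $T_i$ be the (possibly trivial) subpaths of $P$ between $\enLink(W_i)$ and $\stLink(W_{i+1})$. This requires a small adjustment to match \cref{def:threaded-walkage}: either let consecutive pieces share the connecting arc, or let $W_i$ end one vertex before the next piece, so that each thread is a single arc path. Because everything lies on one path, the threads are pairwise disjoint and meet the rest of $\Wcal$ only at their endpoints in $W_i$ and $W_{i+1}$. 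So $(\Wcal, \Tcal)$ is an untangled threaded walkage, and by construction it is $\ell_1$-linked with the sets $A_i = A \cap V(W_i)$. The one point to get right here is the bookkeeping at the boundaries of the pieces.

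Next, apply \cref{lem:cyclic-walk-system} with parameters $(k, \ell_2, k)$. Its hypotheses $\ell_1 \ge \lCWS(k,\ell_2,k)$ and $w_1 \ge \wCWS(k,k)$ hold by the choice of $\ell_1$ and $w_1$. It yields either a bramble of size $k$ and congestion $2$, in which case we stop, or an $\ell_2$-linked cyclic walkage of size $k$. In the latter case, apply \cref{lem:fully-cyclic-walk-system} with parameters $(k, \ell_3)$; this is valid since $\ell_2 = \lFCWS(k,\ell_3)$. It yields either the bramble or a $(\ell_3, k)$-strong system. Finally, \cref{lem:final-bramble} applies because $\ell_3 = \lB(k)$, and it gives a bramble of size $k$ and congestion $2$.

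The remaining work, and the main obstacle, is verifying the asymptotic bound $\Ocal(k^{96}\log^{23} k)$. The bounds to combine are:
\begin{itemize}
\item $\lB(k) = \Ocal(k^3\sqrt{\log k})$;
\item $\lFCWS(k,\ell) = \Ocal(k^{10}\ell^{11}\log^5 k)$, so $\ell_2 = \Ocal(k^{43}\log^{10.5} k)$;
\item $\lCWS(k,\ell_2,k) = \Ocal(\ell_2 + k^{12}\log^3 k) = \Ocal(\ell_2)$;
\item $w_1 = \Ocal(k^5 \log k)$.
\end{itemize}
This gives $w_1\ell_1 = \Ocal(k^{48}\log^{11.5} k)$. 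Since $\wLP$ is quadratic, squaring gives $\Ocal(k^{96}\log^{23} k)$.

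When writing this up, I would recheck each exponent against the stated $\Ocal$-forms of the earlier lemmas, in particular the $\log$ exponents coming from $\dCM$ and $\dCE$.
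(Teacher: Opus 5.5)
Your proposal is correct and matches the paper's proof. The paper uses the same pipeline: \cref{prop:linked-path} yields $P$ and $A$, cutting $P$ into consecutive pieces with the connecting subpaths of $P$ as threads gives the $\ell_1$-linked untangled walkage, and then \cref{lem:cyclic-walk-system}, \cref{lem:fully-cyclic-walk-system} and \cref{lem:final-bramble} are applied with exactly your parameters. Your exponent count also reproduces the stated $\Ocal(k^{96}\log^{23}k)$ bound, and the boundary adjustment you worry about is unnecessary: vertex-disjoint pieces already force every thread to be a nontrivial subpath of $P$.
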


\begin{proof}
  Let $w = \wCWS(k, k) \cdot \lCWS(k, \lFCWS(k, \lB(k)), k)$.
  As $\dtw(D) \ge \wLP(w) + 1$, we get a path $P$ and a well-linked set $A \subseteq V(P)$ of size $w$ by \cref{prop:linked-path}.
  Order the vertices of $A$ according to their appearance in $P$.
  For each $i \in [\wCWS(k, k)]$, let $W_i$ be the walk that starts at the $((i-1)\lCWS(k, \lFCWS(k, \lB(k)), k) + 1)$-th vertex of $A$ in $P$ and ends at the $(i\lCWS(k, \lFCWS(k, \lB(k)), k))$-th vertex of $A$ in $P$.
  As $w \ge \wCWS(k, k) \cdot \lCWS(k, \lFCWS(k, \lB(k)), k)$ there are always $\wCWS(k, k)$ such walks.
  Let $\tilde{w} = \wCWS(k, k)$, $\tilde{\ell} = \lCWS(k, \lFCWS(k, \lB(k)), k)$ and define $\Wcal = \{W_1, W_2, \dots, \allowbreak W_{\tilde{w}}\}$.
  Note that $\Wcal$ is an $\tilde{\ell}$-linked walkage of size $\wCWS(k, k)$ that is untangled, as there is a walk (a path, in fact) from the last vertex of $W_i$ to the first vertex of $W_{i+1}$ for all $i \in [\tilde{w} - 1]$ by construction.

  We now apply \cref{lem:cyclic-walk-system} on $\Wcal$ to get a bramble of size $k$ and congestion $2$ (and we are done) or an $\lFCWS(k, \lB(k))$-linked walkage $\Wcal'$ of size $k$ that is cyclic.
  We can then apply \cref{lem:fully-cyclic-walk-system} on $\Wcal'$ to get a bramble of size $k$ and congestion $2$ (and we are done) or an $(\lB(k), k)$-strong system $\Scal$.
  Finally, \cref{lem:final-bramble} on $\Scal$ gives us a bramble of size $k$ and congestion $2$.
\end{proof}

We remark that with small modifications in the results used so far, we can show that a digraph with directed treewidth $\Omega(k^{90} \cdot \polylog k)$ or $\Omega(k^{78} \cdot \polylog k)$ contains a bramble of size $k$ and congestion $4$ or $6$, respectively.
We omit the technical details, as in the next section we show alternative proofs for these results using smaller directed treewidth by modifying the arguments in~\cite{MasarikPilipczukRzazewskiSorge22}.

The justification on how the construction of \cref{thm:main-proof} can be done in polynomial time, assuming the path $P$ and the well-linked set $A \subseteq V(P)$ originating from \cref{prop:linked-path} are given, is similar to the one done by \textcite{LS26} in order to produce an algorithmic version of the main proof in~\cite{MasarikPilipczukRzazewskiSorge22}.
In short, the bulk of both constructions consists in building intersection graphs, testing for degeneracy, and iteratively following walks from walkages (in our case) or paths/walks from a path system (in the case of~\cite{MasarikPilipczukRzazewskiSorge22}), picking pieces satisfying prescribed properties and moving to the next step of the construction.
\textcite{MatulaBeck83} showed that the degeneracy of a graph can be found in $\mathcal{O}(n+m)$ time, where $m$ is the number of edges in the graph, and the number of graphs for which this test needs to happen in our construction depends on the size of the $\ell$-linked walkage built in the first step of the proof of \cref{thm:main-proof}.

As is the case of~\cite[Lemma 3.7]{MasarikPilipczukRzazewskiSorge22}, the running time of our bowtie-like lemmas, namely~\cref{lem:untangled-cyclic-extension-congestion,lem:cyclic-extension-congestion}, depend on \cref{prop:partition}.
The proof of the latter consists of counting degrees of sequences of vertices of a bipartite graph to find subgraphs with a guaranteed minimum number of edges, and proceeding recursively on these subgraphs.
This process naturally yields a polynomial-time algorithm depending linearly on the number of parts one wishes to find.

\cref{prop:dense-bramble}, which is one of the winning cases of~\cref{sec:higher-congestion-with-masarik}, resorts to finding clique minors in graphs of large degeneracy, and the original statement uses \cref{prop:clique-minor-non-algo}.
As discussed in~\cite{LS26}, it can easily be made algorithmic by applying \cref{prop:clique-minor} instead, as stated in~\cite[Lemma 24]{LS26}.

As mentioned in \cref{sec:intro}, and as is the case in~\cite{LS26}, the main bottleneck on the running time is on the application of an algorithmic version of the Lovász Local Lemma~\cite{LE75}.
\textcite{LS26} elect to apply the version by \textcite{CGH2013} which, when compared to the version by \textcite{Harris2022}, has the advantage of giving a precise bound on the running time, but with a slightly worse dependence on the required parameters needed to apply the result.
The authors in~\cite{LS26} apply the result in~\cite{CGH2013} to prove their own version of~\cref{prop:independent-set-multipartite}, and then proceed through a careful analysis on how the higher requirements impact on the requested directed treewidth of the given digraph in order to guarantee the existence of a bramble of size $k$ and congestion 8.
For simplicity, we take the other alternative and directly apply the algorithmic version from~\cite{Harris2022} through~\cref{prop:independent-set-multipartite}.
Although this allows us to nullify any change to the statement of \cref{thm:main-proof} due to the Lovász Local Lemma, we remark that a precise running time on \cref{prop:independent-set-multipartite} is not known, and thus we can only claim that with \cref{thm:main-proof} one can build the bramble in polynomial time.

\subsection{Directed Disjoint Paths with Congestion}\label{sec:ddp-congestion}
In this section, we discuss the application of \cref{thm:main-proof} to the \textsc{Directed Disjoint Paths with Congestion $c$} problem.
Formally, in this problem we are given a digraph $D$, positive integers $k, c$, sets $S, T \subseteq V(D)$ with $S = \{s_i \mid i \in [k]\}$ and $T = \{t_i \mid i \in [k]\}$, and the goal is to decide if $D$ contains a collection $\{P_i \mid i \in [k]\}$ such that each $P_i$ is a path from $s_i$ to $t_i$ in $D$ and $\{V(P_i) \mid i \in [k]\}$ has congestion at most $c$.
We adopt the following notation by \textcite{LS26}.
\begin{definition}[$(k,c)$-linked digraphs]
  For integers $k,c$, we say that a digraph $D$ is \emph{$(k,c)$-linked} if, for every pair $S,T \subseteq V(D)$ with $S = \{s_i \mid i \in [k]\}$ and $T = \{t_i \mid i \in [k]\}$, there is a collection $\{P_i \mid i \in [k]\}$ such that each $P_i$ is a path from $s_i$ to $t_i$ in $D$ such that $\{V(P_i) \mid i \in [k]\}$ has congestion at most $c$.
\end{definition}
In~\cite{CamposSLS25}, the authors showed that the existence of a bramble of size roughly $2c k^2$ and congestion $2$ suffices to guarantee that a $k$-strong digraph is $(k,c)$-linked.

\begin{proposition}[{\cite[Theorem 4.2]{CamposSLS25}}]\label{prop:k-c-linked-digraphs}
  Given $k, c \in \N$, let $\fref{prop:k-c-linked-digraphs}(k,c) = 2k(c  k - c +2) + c(k-1) = \Ocal(c  k^2)$.
  If $D$ is a $k$-strong digraph containing a bramble of size at least $\fref{prop:k-c-linked-digraphs}(k,c)$ and congestion $c \geq 2$, then for every choice of $S$ and $T$ the instance $(D, k, c, S, T)$ of {\sc Directed Disjoint Paths with Congestion} $c$ is positive, and a solution can be found in time $\mathcal{O}(k^4 \cdot n^2)$.
\end{proposition}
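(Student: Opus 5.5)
The plan is to route each pair $(s_i,t_i)$ through two bags of the bramble reserved for it. Let $\Bcal$ be the bramble, of size at least $\fref{prop:k-c-linked-digraphs}(k,c)$ and congestion $c$, and let $U=\bigcup_{B\in\Bcal}V(B)$. The route for $s_i$ will be: a path to a vertex $x_i$ of a bag $B_i$, a walk inside $B_i\cup B'_i$ to a vertex $y_i$ of a bag $B'_i$, and a path from $y_i$ to $t_i$. Every walk obtained this way is then shortcut to a path, which can only lower congestion. The two transitions work because of the bramble property. Indeed, $B_i$ and $B'_i$ either intersect or have arcs between them in both directions. Since both are strong, $B_i\cup B'_i$ plus one arc from $V(B_i)$ to $V(B'_i)$ (if they are disjoint) contains a walk from $x_i$ to $y_i$. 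The key counting point is that if the $2k$ bags $B_1,\dots,B_k,B'_1,\dots,B'_k$ are pairwise distinct, then a vertex $v\in U$ lies in at most $c$ of them, since $\Bcal$ has congestion $c$.

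The linkage parts come from $k$-strong connectivity. By Menger's theorem, a $k$-strong digraph has $k$ pairwise vertex-disjoint paths from any $k$-set to any other $k$-set. I would first fix a $k$-set $X=\{x_1,\dots,x_k\}$ and take a linkage $\Lcal_1$ from $S$ to $X$. Then I would fix a $k$-set $Y$ and take a linkage $\Lcal_2$ from $Y$ to $T$. Finally I would reindex so that the path of $\Lcal_1$ starting at $s_i$ ends at $x_i$, and choose $y_i\in Y$ to match $t_i$ via $\Lcal_2$. A vertex outside $U$ lies on at most one path of $\Lcal_1$ and one of $\Lcal_2$, giving congestion at most $2\le c$.

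The difficulty is vertices of $U$ that lie on linkage paths and also in chosen bags, which a priori gives congestion $c+2$. To handle this I would choose $X$ and $Y$ carefully and discard contaminated bags. Each of the $2k$ linkage paths is made to enter and leave $U$ only where needed: take the first (resp.\ last) vertex of $U$ on it and truncate there, so each endpoint $x_i$ (resp.\ $y_i$) lies in some bag. The truncated linkage paths are then internally disjoint from $U$. The remaining requirement is to assign the endpoints distinct bags $B_i\ni x_i$ and $B'_i\ni y_i$.

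The step I expect to be the main obstacle is this assignment. Truncation may leave several endpoints inside a single bag, and then no distinct choice exists. To handle it I would use the size bound: each vertex is in at most $c$ bags, so the $2k$ endpoints together meet at most $2kc$ bags. My plan is to iterate. If a distinct assignment fails (check Hall's condition), delete the offending bags from consideration and recompute the linkages toward the union of the remaining bags. Each round removes $\Ocal(ck)$ bags, and I expect to need at most about $k$ rounds, which matches the quadratic size $2k(ck-c+2)+c(k-1)$. The $c(k-1)$ term would absorb the bags lost at the last step. Each round costs a max-flow computation, giving the stated $\Ocal(k^4\cdot n^2)$ running time.
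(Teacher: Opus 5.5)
\textbf{There is a genuine gap, and it is in the assignment step.} The configuration you aim for can fail to exist for every working family $\mathcal F$ of bags. That configuration is: linkage paths internally disjoint from $U_{\mathcal F}=\bigcup_{B\in\mathcal F}V(B)$, with the $2k$ endpoints matched to distinct bags of $\mathcal F$. So the deletion loop has no reason to stop. You assert ``about $k$ rounds'', but no invariant shows that a round makes permanent progress: a round only shrinks $\mathcal F$, and the new first hits can cluster again.

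Here is a concrete instance. Take a cylindrical grid with concentric directed cycles $C_0,\dots,C_M$ and $2(M-1)$ radial columns, the odd ones oriented outward and the even ones inward. For $i\in[M-1]$, let $B_i$ be $C_i$ together with the parts of columns $2i-1$ and $2i$ on or outside $C_i$, closed up by the arc of $C_M$ joining them. These bags are strong. $B_i$ meets $B_j$ for $i<j$, because column $2i-1$ crosses $C_j$. Every vertex lies in at most two bags. Now replace each vertex by a complete digraph on $k$ vertices, with all arcs between blobs of adjacent vertices. This makes the digraph $k$-strong and preserves all of the above, for any number of bags. Let $S$ be the blob of a vertex of $C_0$.

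Take any nonempty $\mathcal F$ and let $j_0$ be its smallest index. Every bag of $\mathcal F$ lies on or outside $C_{j_0}$, which separates it from $S$, and $C_{j_0}\subseteq B_{j_0}$. Hence the first vertex of $U_{\mathcal F}$ on every path from $S$ lies on $C_{j_0}$. Such a vertex lies in no other bag of $\mathcal F$, since the other bags through it have index $<j_0$. So for $k\ge 2$ all $k$ truncated endpoints see only $B_{j_0}$, and Hall's condition fails for every $\mathcal F$. Your loop then deletes $B_{j_0},B_{j_0+1},\dots$ until the bramble is exhausted. The running-time claim rests on the same loop.

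\textbf{The weaker property your count needs also fails.} Your counting argument really only needs each $P_i$ and $Q_i$ to be disjoint from the bags chosen for the other pairs. Even this is unattainable in the example. Choose any $2k$ distinct bags, and suppose the innermost one is assigned to pair $p$. Then for every $i\ne p$ heading to a different bag, the path $P_i$ must cross that innermost cycle, which belongs to pair $p$. So with your routing scheme, linkage paths necessarily traverse other pairs' bags.

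\textbf{What a correct proof must do instead.} It has to control the load vertex by vertex, guaranteeing
\[
\bigl|\{i : v\in V(P_i)\cup V(Q_i)\cup V(B_i)\cup V(B'_i)\}\bigr|\le c \quad\text{for every } v.
\]
In other words, a linkage path of pair $i$ may only use the capacity left at $v$ by the chosen bags of the other pairs. Obtaining such a configuration from $k$-strong connectivity and roughly $2ck^2$ bags calls for a Menger-type min--max argument for routing the terminals into distinct bags, not truncation plus Hall. The present paper only quotes the statement; to my understanding the cited source derives it from min--max relations of this kind.

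\textbf{What remains correct.} Menger between $k$-sets in a $k$-strong digraph, the walk inside $B_i\cup B'_i$ via the bramble property, and shortcutting walks to paths are all fine.
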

\textcite{LS26} used \cref{prop:k-c-linked-digraphs} and an algorithmic version of the bramble construction in~\cite{MasarikPilipczukRzazewskiSorge22} to show that there is a polynomial function $f(k)$ such that every $k$-strong digraph with $\dtw(D) \geq f(k)$ is $(k,8)$-linked.
Applying \cref{thm:main-proof} together with \cref{prop:k-c-linked-digraphs}, we maintain a polynomial bound on the directed treewidth while reducing the congestion to 2.

\begin{theorem}\label{thm:k-2-linked-digraphs}
  Given $k \in \N$, let $\fref{thm:k-2-linked-digraphs}(k) = \dB(\fref{prop:k-c-linked-digraphs}(k,2)) = \Ocal(k^{192} \cdot \log^{23} k)$.
  Every $k$-strong digraph with $\dtw(D) \geq \fref{thm:k-2-linked-digraphs}(k)$ is $(k,2)$-linked and a solution can be found in polynomial time.
\end{theorem}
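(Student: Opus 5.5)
The plan is to combine \cref{thm:main-proof} with \cref{prop:k-c-linked-digraphs}, taking $c = 2$. Set $K = \fref{prop:k-c-linked-digraphs}(k,2) = 2k(2k+0) + 2(k-1) = \Ocal(k^2)$.

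Let $D$ be a $k$-strong digraph with $\dtw(D) \ge \fref{thm:k-2-linked-digraphs}(k) = \dB(K)$. By \cref{thm:main-proof} applied with parameter $K$, the digraph $D$ contains a bramble $\Bcal$ of size $K$ and congestion $2$. Since $D$ is $k$-strong and $\Bcal$ has size at least $\fref{prop:k-c-linked-digraphs}(k,2)$ with congestion $c = 2$, \cref{prop:k-c-linked-digraphs} applies. Hence for every choice of $S = \{s_i\}$ and $T = \{t_i\}$ the instance $(D,k,2,S,T)$ is positive, which is exactly the statement that $D$ is $(k,2)$-linked. Substituting $K = \Ocal(k^2)$ into $\dB(K) = \Ocal(K^{96}\log^{23}K)$ gives $\Ocal(k^{192}\log^{23}k)$, since $\log K = \Ocal(\log k)$.

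The main obstacle is the algorithmic claim: $\Bcal$ has to be produced in polynomial time, while \cref{prop:linked-path} runs in \FPT{} time in $K$ in general. I would handle this as the introduction suggests. In a $k$-strong digraph with large directed treewidth, the path $P$ together with a well-linked set $A\subseteq V(P)$ of the required size (equivalently, a path system) can be found in polynomial time by~\cite[Proposition~10]{LS26}. Given $P$ and $A$, the construction in \cref{thm:main-proof} runs in polynomial time, as argued in \cref{sec:finding-the-bramble}:
\begin{itemize}
\item degeneracy is computed in linear time by~\cite{MatulaBeck83};
\item clique minors are found via \cref{prop:clique-minor};
\item \cref{prop:partition} is constructive;
\item \cref{prop:independent-set-multipartite} is polynomial by~\cite{Harris2022}.
\end{itemize}
Finally, \cref{prop:k-c-linked-digraphs} turns the bramble into a solution in $\Ocal(k^4 n^2)$ time. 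The point I would check most carefully is that the size of the well-linked set needed by~\cite[Proposition~10]{LS26} is polynomial in $K$ and matches the hypothesis $\dtw(D) \ge \dB(K)$, so that no extra blow-up appears in the bound.
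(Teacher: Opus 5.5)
Your proposal is correct and is essentially the paper's proof: apply \cref{thm:main-proof} with parameter $4k^2+2k-2$ to obtain a bramble of congestion $2$, then invoke \cref{prop:k-c-linked-digraphs} with $c=2$, and the bound $\Ocal(k^{192}\log^{23}k)$ follows exactly as you computed. For the algorithmic part, the paper's own proof is a one-line combination of these two results. It relies implicitly on the same justification you give: obtain $P$ and $A$ via \cite[Proposition~10]{LS26}, then run the constructive steps discussed in \cref{sec:finding-the-bramble}.
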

\begin{proof}
  Follows directly by applying \cref{thm:main-proof} to find a bramble of size $\fref{prop:k-c-linked-digraphs}(k,2)$ and congestion $2$, then applying \cref{prop:k-c-linked-digraphs}.
\end{proof}

\cref{thm:k-2-linked-digraphs} together with the \XP{} algorithm for {\sc Directed Disjoint Paths} parameterized by $k + \dtw(D)$ of~\cite{Johnson2001} immediately imply that {\sc Directed Disjoint Paths with Congestion} $2$ can be solved in $\mathcal{O}(n^{\poly(k)})$ time.
We remark that the algorithm by the authors of~\cite{Johnson2001} can be easily adapted to a constant congestion case by making copies of vertices of the input digraph and falling back to the disjoint case.
This fact is formally stated and proved in~\cite{AKM2019}.
From this discussion, we obtain the following.

\begin{theorem}\label{thm:XP-algo-DDP-k-strong}
  There is a polynomial function $\fref{thm:XP-algo-DDP-k-strong}(k)$ such that the {\sc Directed Disjoint Paths with Congestion} $2$ is solvable in $\mathcal{O}(n^{f(k)})$ time in $k$-strong digraphs.
\end{theorem}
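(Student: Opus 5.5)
The plan is to reduce the problem on $k$-strong digraphs to two regimes according to the directed treewidth. Let $g(k) = \fref{thm:k-2-linked-digraphs}(k)$, which is polynomial by \cref{thm:k-2-linked-digraphs}. Given an instance $(D,k,2,S,T)$ with $D$ $k$-strong, we first run the algorithm of \cref{prop:linked-path} with parameter $k'$ chosen so that $\wLP(k') + 1 \ge g(k)$. Equivalently, we run any \XP{} approximation of directed treewidth, e.g.\ the algorithm of~\cite{Johnson2001}, which either certifies $\dtw(D) \le 3k'-2$ or returns a bramble of order $k'$. The \FPT{} running time $2^{\Ocal(k'^2\log k')}n^{\Ocal(1)}$ is in particular bounded by $n^{\poly(k)}$ once $k' = \poly(k)$ and $n \ge 2$.

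If the outcome certifies large directed treewidth, meaning a path $P$ with a well-linked set $A\subseteq V(P)$ of size $k'$, we proceed exactly as in the proof of \cref{thm:main-proof}. That construction uses only $P$ and $A$, so it algorithmically yields a bramble of size $\fref{prop:k-c-linked-digraphs}(k,2)$ and congestion $2$. Feeding this bramble to \cref{prop:k-c-linked-digraphs} solves the instance positively in polynomial time. The algorithmic claims for this branch were already discussed after \cref{thm:main-proof}: degeneracy tests, \cref{prop:partition}, \cref{prop:clique-minor}, and the algorithmic LLL via \cref{prop:independent-set-multipartite}.

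Otherwise we obtain an arboreal decomposition of width at most $\wLP(k') = \poly(k)$. We then apply the \XP{} algorithm for \textsc{Directed Disjoint Paths} parameterized by $k+\dtw(D)$ of~\cite{Johnson2001}, which runs in time $n^{\Ocal(k+w)}$ given a decomposition of width $w$. To handle congestion $2$, we use the reduction formalized in~\cite{AKM2019}: duplicate each vertex into two copies with the appropriate arcs. This yields a disjoint-paths instance on a digraph whose directed treewidth is at most a constant factor larger, and whose decomposition is obtained directly from the given one by replacing each vertex with its copies. The total running time is $n^{\Ocal(\poly(k))}$, giving the polynomial function $f$.

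The main point to verify carefully is this second branch. We must check that the congestion-to-disjointness reduction preserves (up to constants) the width of the decomposition we already hold, and that the path solutions in the blown-up digraph project back to paths in $D$ of congestion at most $2$ rather than mere walks. The first branch needs no new argument beyond \cref{thm:k-2-linked-digraphs}.
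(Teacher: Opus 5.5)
Your proposal is correct and follows the paper's route: when the directed treewidth is large, \cref{thm:k-2-linked-digraphs} makes the instance positive and gives a solution in polynomial time; otherwise the \textsf{XP} algorithm of~\cite{Johnson2001}, adapted to congestion $2$ by vertex duplication as in~\cite{AKM2019}, runs in $n^{\mathrm{poly}(k)}$ time. Using \cref{prop:linked-path} to decide which regime applies and to supply the arboreal decomposition makes explicit a step the paper leaves implicit, and the two points you flag are routine and are exactly what the paper delegates to~\cite{AKM2019}: the blown-up decomposition keeps its guards because walks project to walks, and projected walks can be shortcut to paths without raising congestion.
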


\subsection{Half-integral packings of subgraphs with high directed treewidth}

\textcite{ChekuriC13} show how to partition an undirected graph $G$ with $\tw(G) = k$ into $h$ subgraphs of treewidth at least $r$ each if $k/\polylog k = \Omega(\min \{h  r^2, h^3  r\})$ in randomized $\poly(k, |V(G)|)$ time.
The authors discuss applications of this result on polylogarithmic approximation algorithms for routing with constant congestion, Erd\H{o}s-Pósa type results and bidimensionality for FPT algorithms.

A \emph{quarter-integral} (resp.\ \emph{half-integral}) \emph{packing} of subgraphs of a digraph $D$ is a collection of subgraphs of $D$ such that any vertex in $D$ is in at most $4$ (resp.\ $2$) subgraphs of the collection.
In~\cite{MasarikMuziPilipczukRzazewskiSorge21}, the authors obtain polynomial bounds for an Erd\H{o}s-Pósa type of problem in digraphs: quarter- and half-integral packings of directed cycles vs.\ feedback vertex sets.
The authors also prove a more general result concerning quarter-integral packings of subgraphs in a similar spirit to the result of \textcite{ChekuriC13}, namely that every digraph $D$ with $\dtw(D) = \Omega(a^6  b^8 \cdot \polylog (a  b))$ contains a quarter-integral packing of $a$ subgraphs each having directed treewidth at least $b$.
Our next result improves this to a half-integral packing, at the cost of some extra directed treewidth, using \cref{thm:main-proof}.

\begin{theorem}\label{thm:half-integral-packing}
  There exists an absolute constant $C > 0$ such that the following holds.
  Given $k, d \in \N$, let $\dHIP(d, k) = \dB(C  d  k) = \Ocal((d  k){}^{96} \cdot \log^{23} (d  k))$.
  Any digraph $D$ with $\dtw(D) \ge \dHIP(d, k)$ contains a half-integral packing of $k$ subgraphs each having directed treewidth at least $d$.
\end{theorem}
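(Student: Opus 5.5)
We apply \cref{thm:main-proof} with size parameter $N = Cdk$ to obtain a strong bramble $\Bcal$ of size $N$ and congestion $2$ in $D$; recall that the bramble produced for congestion $2$ is strong, so its bags pairwise intersect. We then partition $\Bcal$ into $k$ groups $\Bcal_1, \dots, \Bcal_k$, each of size $Cd$, and set $H_i = \bigcup_{B \in \Bcal_i} B$ for each $i \in [k]$. Every vertex of $D$ lies in at most two bags of $\Bcal$, and each bag belongs to exactly one group. Hence every vertex lies in at most two of the subgraphs $H_i$, and $\{H_1, \dots, H_k\}$ is a half-integral packing.

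It remains to show that $\dtw(H_i) \ge d$ for each $i$. The bags in $\Bcal_i$ are strong subgraphs of $H_i$ and they pairwise intersect, so $\Bcal_i$ is a bramble in $H_i$. This holds even under the permissive definition in \cref{def:brambles}, and it holds because intersection is witnessed inside $H_i$ itself, with no need for arcs of $D$. Since $\Bcal_i$ has congestion at most $2$, any set hitting all its bags must have size at least $|\Bcal_i|/2 = Cd/2$. So the order of $\Bcal_i$ is at least $Cd/2$, which gives $\bn(H_i) \ge Cd/2$.

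By \cref{prop:bramble-dtw}, $\dtw(H_i) = \Theta(\bn(H_i))$, so there is an absolute constant $c_0 > 0$ with $\dtw(H) \ge c_0 \bn(H)$ for every digraph $H$. Choosing $C \ge 2/c_0$ gives $\dtw(H_i) \ge c_0 Cd/2 \ge d$. Then $\dHIP(d,k) = \dB(Cdk)$, which is $\Ocal((dk)^{96}\log^{23}(dk))$.

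The only delicate point is checking that $\Bcal_i$ is a bramble in the subgraph $H_i$ rather than only in $D$. If the bramble were merely touching, the arcs between two bags might leave $H_i$. Strongness removes this concern, and otherwise one could add those arcs to $H_i$, since adding arcs does not affect vertex congestion. One should also confirm that the lower-bound direction of \cref{prop:bramble-dtw} is a linear inequality with an absolute constant, which is the standard form of that statement.
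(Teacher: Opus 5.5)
Your proposal is correct and follows the paper's proof: you split the congestion-$2$ bramble of size $Cdk$ given by \cref{thm:main-proof} into $k$ groups of size $Cd$, bound each group's order below by $Cd/2$, and convert to directed treewidth via \cref{prop:bramble-dtw} with $C$ equal to twice the inverse of its constant. The only cosmetic difference is that the paper takes $D_i$ to be the subgraph induced by the vertices of the bags of $\Bcal_i$, which keeps any arcs between bags automatically, whereas you take the union of the bags and rely on the bramble being strong; either choice works.
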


\begin{proof}
  \cref{prop:bramble-dtw} implies that there exists a constant $C' > 0$ such that $\bn(D) \le C' \cdot \dtw(D)$, so we define $C = 2C'$.
  As $\dtw(D) \ge \dB(C  d  k)$, $D$ contains a bramble $\Bcal$ of size $C  d  k$ and congestion $2$ by \cref{thm:main-proof}.
  Let $\Bcal = \{B_1, B_2, \dots, B_{Cdk}\}$ and define the brambles $\Bcal_i = \{ B_{(i-1)Cd+1},\allowbreak B_{(i-1)Cd+2},\allowbreak \dots,\allowbreak B_{iCd} \}$ for each $i \in [k]$.
  Note that $\Bcal_i$ has size $C  d$ and congestion $2$, which implies that the order of $\Bcal_i$ is at least $\lceil C \cdot d/2 \rceil$ for each $i \in [k]$.
  Letting $D_i$ be the subgraph of $D$ induced by the vertices in the bramble $\Bcal_i$, it follows that $\bn(D_i) \ge \lceil C \cdot d/2 \rceil \ge C' \cdot d$ and, hence, $\dtw(D_i) \ge d$ for each $i \in [k]$.
  Finally, as each vertex in $D$ is in at most $2$ bags of $\Bcal$, we have that $D_1, D_2, \dots, D_k$ is a half-integral packing of subgraphs of $D$.
\end{proof}

We remark that \cref{thm:half-integral-packing} actually proves something stronger: Each subgraph in the half-integral packing is strong and each pair of subgraphs must intersect.
Because of this, we believe the upper bound for $\dHIP(d, k)$ is far from tight and an improvement would be of interest.
Another way to improve \cref{thm:half-integral-packing} is to replace the half-integral packing to simply a packing, although this seems to be a much harder problem, as it would imply an upper bound for Younger's Conjecture (see e.g.~\cite{RRST1996, HKMM24}).

\section{Higher congestion}\label{sec:higher-congestion-with-masarik}

In this section we prove results analogous to those in \cref{sec:main-proof}.
In particular, we adapt and improve the main proof in~\cite{MasarikPilipczukRzazewskiSorge22} for brambles with congestion $4$ or $6$.
This adaptation consists in introducing walkages, adapting the results to work in this setting and using them to get more control over the final congestion.
We also simplify many steps and present a cleaner proof.

We first show how to find large brambles with congestion $4$ or $6$.
The next definition adapts path systems, which constitute the starting point of the proof in~\cite{MasarikPilipczukRzazewskiSorge22}, to the context of walkages.

\begin{definition}[Walk systems]\label{def:walk-system}
  Let $D$ be a digraph, $\ell, w \in \N$ and \(Z = \{(i,j) \in [w] \times [w] \mid i \ne j\} \).
  An \emph{$(\ell, w)$-walk system $\Scal$} in $D$ consists of a
  pair $(\Acal, \Wcal)$, where $\Wcal = \{W_1, W_2, \dots, W_w\}$ is a walkage in $D$ and $\Acal = \{A_i^{\inOp} \cup A_i^{\outOp} \mid i \in [w] \}$ is a
  family of disjoint subsets of \(V(D)\) such that for all $i \in [w]$, $A_i^{\inOp}, A_i^{\outOp} \subseteq V(W_i)$, $|A_i^{\inOp}| = |A_i^{\outOp}| = \ell$,
  each vertex
  in $A_i^{\inOp}$ appears
  in $W_i$ before all
  vertices in $A_i^{\outOp}$ do,
  and $\bigcup_{1 \leq i \leq w} (A_i^{\inOp} \cup A_i^{\outOp})$ is
  well-linked.
  The \emph{pair set of \(\Scal \)} is \( \pairs(\Scal) =  Z \).
\end{definition}

\cref{fig:walk-system} provides an example of a walk system.

\begin{figure}[htbp]
  \centering
  \begin{tikzpicture}
    \begin{scope}[line cap = round, transparency group, opacity = 0.65, decoration = {markings, mark = at position 0.55 with {\arrow{stealth[red]}}}]
      \draw[-latex, Red2, line width = 5pt] (2,0) to[bend right] (4,0);
    \end{scope}
    \begin{scope}[line cap = round, transparency group, opacity = 0.65, decoration = {markings, mark = at position 0.55 with {\arrow{stealth[red]}}}]
      \draw[-latex, Firebrick2, line width = 5pt] (6,0) to[bend right] (8,0);
    \end{scope}
    \begin{scope}[line cap = round, transparency group, opacity = 0.65, decoration = {markings, mark = at position 0.55 with {\arrow{stealth[red]}}}]
      \draw[-latex, IndianRed2, line width = 5pt] (2,0) to[bend left] (8,0);
    \end{scope}
    \begin{scope}[line cap = round, transparency group, opacity = 0.65, decoration = {markings, mark = at position 0.55 with {\arrow{stealth[red]}}}]
      \draw[-latex, OrangeRed2, line width = 5pt] (6,0) to[bend right] (0,0);
    \end{scope}
    \begin{scope}[line cap = round, transparency group, opacity = 0.65, decoration = {markings, mark = at position 0.55 with {\arrow{stealth[red]}}}]
      \draw[-latex, Brown2, line width = 5pt] (10,0) to[bend left] (0,0);
    \end{scope}
    \begin{scope}[line cap = round, transparency group, opacity = 0.65, decoration = {markings, mark = at position 0.55 with {\arrow{stealth[red]}}}]
      \draw[-latex, Tomato2, line width = 5pt] (10,0) to[bend right] (4,0);
    \end{scope}

    \draw[red, decoration = {brace, mirror}, decorate] (0,-0.1) -- (0.3,-0.1);
    \draw[red, decoration = {brace, mirror}, decorate] (1.7,-0.1) -- (2,-0.1);
    \draw[red, decoration = {brace, mirror}, decorate] (4,-0.1) -- (4.3,-0.1);
    \draw[red, decoration = {brace, mirror}, decorate] (5.7,-0.1) -- (6,-0.1);
    \draw[red, decoration = {brace, mirror}, decorate] (8,-0.1) -- (8.3,-0.1);
    \draw[red, decoration = {brace, mirror}, decorate] (9.7,-0.1) -- (10,-0.1);

    \node[below = 0.15em, red] at (0.23,0) {\footnotesize$A_1^{\inOp}$};
    \node[below = 0.15em, red] at (2,0) {\footnotesize$A_1^{\outOp}$};
    \node[below = 0.15em, red] at (4.23,0) {\footnotesize$A_2^{\inOp}$};
    \node[below = 0.15em, red] at (6,0) {\footnotesize$A_2^{\outOp}$};
    \node[below = 0.15em, red] at (8.23,0) {\footnotesize$A_3^{\inOp}$};
    \node[below = 0.15em, red] at (10,0) {\footnotesize$A_3^{\outOp}$};

    \draw[walkage] (0,0) -- (2,0);
    \draw[walkage] (4,0) -- (6,0);
    \draw[walkage] (8,0) -- (10,0);

    \node[below] at (1,0) {$W_1$};
    \node[below] at (5,0) {$W_2$};
    \node[below] at (9,0) {$W_3$};
  \end{tikzpicture}

  \caption{A walk system, the walkage $\Wcal$ is in black, the thick colored arrows represent walkages of size $\ell$ between walks in $\Wcal$, and the sets $A_i^{\inOp}$ and $A_i^{\outOp}$ of size $\ell$ are in red.}\label{fig:walk-system}
\end{figure}

We now show that~\cite[Lemma~3.2]{MasarikPilipczukRzazewskiSorge22}, which composes the sparse case, also works with walk systems.
We provide a proof that follows loosely the original one for completeness.
The sparse case in walk systems occurs when there are many disjoint walks that connect a large portion of $\Wcal$, which enables us to define many closed walks and pick some that are guaranteed to intersect.

\begin{lemma}[{Adapted from~\cite[Lemma~3.2]{MasarikPilipczukRzazewskiSorge22}}]\label{lem:sparse-bramble}
  Given $k \in \N$, let $\wSB(k) = 1 + 5\dCM(k(k-1)/2) = \Ocal(k^2 \sqrt{\log k})$.
  Let $D$ be a digraph containing an $(\ell, w)$-walk system $\Scal = (\Acal, \Wcal)$ and a set $J \subseteq \pairs(\Scal)$ of size at least $0.6(w^2 - w)$ such that there exists a walkage $\Tcal$ that contains at least one walk from an $(A_i^{\outOp}, A_j^{\inOp})$-walkage for all $(i, j) \in J$.
  If $w \ge \wSB(k)$ and $\ell \ge 1$, then $D$ contains a bramble of size $k$ and congestion at most $4$.
\end{lemma}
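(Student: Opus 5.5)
From $J$ and the walkage $\Tcal$ I will build a family of closed walks with congestion at most $4$ whose intersection graph has large degeneracy, and then apply \cref{prop:dense-bramble}. For each $(i,j) \in J$ fix a walk $T_{ij} \in \Tcal$ going from $A_i^{\outOp}$ to $A_j^{\inOp}$. These walks are pairwise vertex-disjoint, since $\Tcal$ is a walkage.

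\textbf{Step 1: closed walks from symmetric pairs.} Since $|J| \ge 0.6(w^2-w)$ and $\pairs(\Scal)$ consists of the $w^2-w$ ordered pairs, at least $0.2(w^2-w) = 0.4\binom{w}{2}$ unordered pairs $\{i,j\}$ have both $(i,j)$ and $(j,i)$ in $J$. This is the averaging from~\cite{MasarikPilipczukRzazewskiSorge22}. Call this set $J^*$. For $\{i,j\} \in J^*$ define
\[
C_{ij} = T_{ij} \cdot W_j[\enLink(T_{ij}), \stLink(T_{ji})] \cdot T_{ji} \cdot W_i[\enLink(T_{ji}), \stLink(T_{ij})].
\]
The segments of $W_j$ and $W_i$ go forward, because every vertex of $A^{\inOp}$ precedes every vertex of $A^{\outOp}$ on the same walk. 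Hence $C_{ij}$ is a closed walk. A vertex lies in at most one walk of $\Tcal$, so it lies in at most one $C_{ij}$ through threads. It lies in at most one $W_m$, but that $W_m$ segment may be shared by many $C_{ij}$ with $m\in\{i,j\}$. So congestion is not yet bounded, and I handle this next.

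\textbf{Step 2: controlling congestion.} I use the auxiliary graph $H$ on $[w]$ with edge set $J^*$. It has at least $0.4\binom{w}{2}$ edges, so it has average degree at least about $0.4(w-1)$. From $H$ I extract a structure in which each vertex $m$ is used by at most one chosen closed walk through its $W_m$ segment. A matching in $H$ gives closed walks of congestion $2$, but disjoint walks then have too few intersections. The better choice, following~\cite{MasarikPilipczukRzazewskiSorge22}, is to group: take a clique minor of $H$, which exists by \cref{prop:clique-minor} because $H$ has average degree at least $\dCM(k(k-1)/2)$ when $w \ge 1+5\dCM(k(k-1)/2)$ (the constant $5$ absorbs the factor $0.4$). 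Alternatively, apply \cref{prop:dense-bramble} to closed walks indexed by edges of $H$, with the intersection graph inheriting density from $H$. Since the target size is $\binom{k}{2}$, I would take $K_{k(k-1)/2}$ as a minor of $H$ with branch sets $V_1,\dots,V_{\binom{k}{2}}$.

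\textbf{Step 3: building the bags.} For each branch set $V_a$, glue together along $V_a$ the closed walks $W_i$-segment$\cdot T_{ij}\cdot W_j$-segment$\cdot T_{ji}$ for the spanning-tree edges inside $V_a$, obtaining one strong subgraph $B_a$. For each pair $a,b$, pick an edge $\{i,j\}$ of $H$ between $V_a$ and $V_b$ and form $C_{ij}$. The final bags are indexed by pairs of $k$ "super-vertices". I follow the scheme of \cref{lem:final-bramble}: the $k$ bags $B'_1,\dots,B'_k$ are formed by grouping the $\binom{k}{2}$ branch sets through a second $K_k$ structure, so that each pair of bags shares a closed walk $C_{ij}$ or a common $W_m$.

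\textbf{Congestion count.} A vertex can lie in at most one $W_m$, which belongs to one branch set. It can also lie in at most one $T$ walk, which is used either inside a branch set or in one connecting $C_{ij}$. Each such use is charged to at most two bags, which gives the bound $4$.

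\textbf{Main obstacle.} The main obstacle is exactly this bookkeeping: ensuring a single $W_m$ and a single $T_{ij}$ are each used by at most two of the final bags. This forces the branch sets to be disjoint and each connecting $T$-walk to be used once. I expect to spend most effort on choosing the minor so that these properties hold, with the constants $0.6$ and $5$ in $\wSB(k)$ tuned to guarantee the density needed.
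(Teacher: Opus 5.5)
This is essentially the paper's proof: the graph $H$ of symmetric pairs, a $K_{k(k-1)/2}$ minor with branch sets labeled by $\binom{[k]}{2}$, and bag $i$ built from the closed walks $C_{ab}$ over the branch sets whose label contains $i$. These walks glue into strong, pairwise-intersecting bags because every $C_{ab}$ contains the segment of $W_a$ from the last vertex of $A_a^{\inOp}$ to the first vertex of $A_a^{\outOp}$, and congestion is $2+2$; no special choice of minor is needed, since each edge of $H$ lies inside one branch set or between two. The only slip is the count in Step 1: only $0.1(w^2-w)$ unordered pairs are guaranteed to have both orientations in $J$, so $H$ has average degree $0.2(w-1)$, which is exactly what $w \ge \wSB(k)$ provides.
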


\begin{proof}
  Let $G$ be a graph with a vertex for each walk in $\Wcal$ and edges between walks $W_i$ and $W_j$ if there are two walks $L_{ij}, L_{ji} \in \Tcal$ such that $\stLink(L_{ij}) \in A_i^{\outOp}$, $\enLink(L_{ij}) \in A_j^{\inOp}$, $\stLink(L_{ji}) \in A_j^{\outOp}$, and $\enLink(L_{ji}) \in A_i^{\inOp}$ (i.e.,\ $L_{ij}$ goes from $W_i$ to $W_j$ and $L_{ji}$ goes from $W_j$ to $W_i$).
  Note that $|J| \ge 0.6(w^2 - w)$ implies that $G$ has at least $0.1w(w-1)$ edges.
  As the average degree of $G$ is at least $2 \cdot 0.1w(w-1)/w = 0.2(w-1) \ge \dCM(k(k - 1)/2)$, $G$ contains a clique minor of order $k(k-1)/2$ by \cref{prop:clique-minor}.
  Label the vertices of this minor as the elements of the set $\binom{[k]}{2}$.
  For each $i \in [k]$, let $\Bcal_i$ be the union of the branch sets of the vertices with label $\{i, j\}$ for some $j \in [k] \setminus \{i\}$, and let $\Wcal_i$ be the set of walks from $\Wcal$ such that their associated vertices in $G$ are in $\Bcal_i$.

  We build a closed walk $C_{ij}$ for each (unordered) pair of walks $W_i, W_j \in \Wcal$ which induce an edge in $G$ as follows.
  Let $L_{ij}, L_{ji} \in \Tcal$ be two walks going from $W_i$ to $W_j$ and vice versa, which always exist if $W_i$ and $W_j$ induce an edge in $G$.
  Start at the first vertex of $L_{ij}$ and follow it until its final vertex, which is in $W_j$, then follow $W_j$ until the first vertex of $L_{ji}$, then follow $L_{ji}$ until its final vertex, which is in $W_i$, then follow $W_i$ until the first vertex of $L_{ij}$ again, as in \cref{fig:closed-walk-cij}.
  Note that $C_{ij}$ contains the last vertex of $A_i^{\inOp}$ and the last vertex of $A_j^{\inOp}$, as well as the first vertex of $A_i^{\outOp}$ and the first vertex of $A_j^{\outOp}$, so $C_{ij}$ always intersects $C_{i\ell}$ and $C_{mj}$ for any $\ell \ne j$ and $m \ne i$.

  \begin{figure}[htbp]
    \centering
    \begin{tikzpicture}
      \begin{scope}[line cap = round]
        \draw[decoration = {coil, aspect = 1.2, post length = 2mm}, decorate, red!30, line width = 4pt] (2.7,0) -- (0.15,2.08);
        \draw[decoration = {coil, aspect = 1.2, post length = 2mm}, decorate, red!30, line width = 4pt] (2.7,2) -- (0.172,0.086);
      \end{scope}
      \begin{scope}[line cap = round]
        \clip (0.15,1.7) rectangle (2.7,2.2);
        \draw[decoration = {coil, aspect = 1.2, post length = 2mm}, decorate, red!30, line width = 4pt] (0,2) -- (3,2);
      \end{scope}
      \begin{scope}[line cap = round]
        \clip (0.15,-0.7) rectangle (2.7,0.2);
        \draw[decoration = {coil, aspect = 1.2, post length = 2mm}, decorate, red!30, line width = 4pt] (0,0) -- (3,0);
      \end{scope}

      \draw[walkage, decoration = {post length = 2mm}, red] (2.7,0) to node[above right, pos = 0.15] {$L_{ji}$} (0.15,2.08);
      \draw[walkage, decoration = {post length = 2mm}, red] (2.7,2) to node[below right, pos = 0.15] {$L_{ij}$} (0.172,0.086);

      \draw[walkage, decoration = {post length = 2mm}] (0,2) -- (3,2);
      \draw[walkage, decoration = {post length = 2mm}] (0,0) -- (3,0);

      \node[right] at (3,2) {$W_i$};
      \node[right] at (3,0) {$W_j$};
    \end{tikzpicture}

    \caption{Sketch of the construction of the closed walk $C_{ij}$ (in light red).}\label{fig:closed-walk-cij}
  \end{figure}

  We now define, for each $i \in [k]$, the closed walk $C_i$ as the concatenation of $C_{ij}$ for all $W_j \in \Wcal_i$.
  Note that for distinct $i, j \in [k]$, $C_i$ and $C_j$ intersect precisely at $C_{ij}$.
  Moreover, each vertex $v \in V(D)$ is in at most one walk $W \in \Wcal$ and at most one walk $L \in \Tcal$, so $v$ is in at most two closed walks $C_{ij}$ and, thus, in at most four closed walks $C_i$.
  Hence, the family $(C_i){}_{i \in [k]}$ is a bramble of size $k$ and congestion at most $4$.
\end{proof}

We find \cref{lem:sparse-bramble} to be the biggest bottleneck in~\cite{MasarikPilipczukRzazewskiSorge22}, in the sense that lowering the congestion from $8$ to $4$ is only a matter of adapting a few results to maintain better control over the closed walks and taking extra care when applying the LLL\@.
Lowering the congestion to $2$, however, is much harder.
In fact, we were not able to find any way of achieving this without passing through strong systems, the strategy of \cref{thm:main-proof} being the cleanest version of our proof.

An interesting remark is that \cref{lem:sparse-bramble} can be improved to find brambles with congestion $2$ if the input walk system is \emph{clean}.
A walk system $\Scal = (\Acal, \Wcal)$ is said to be clean if for all $(i,j) \in \pairs(\Scal)$ there exists an $(A_i^{\outOp}, A_j^{\inOp})$-walkage that does not intersect any walk in $\Wcal \setminus \{W_i, W_j\}$.
The congestion in \cref{lem:sparse-bramble} is strengthened from at most $4$ to at most $2$ in this case simply because each vertex in $D$ would then be in at most one walk $W \in \Wcal$ and at most one walk $L \in \Tcal$, but $W$ must be in the same bag as all walks $L \in \Tcal$ that intersect it.
This means that each vertex in $D$ is in at most one closed walk $C_{ij}$ and, thus, at most two closed walks $C_i$.

Clean walk systems were first defined in~\cite[Definition~4.5]{KK22} (technically, they define clean \emph{path} systems, but any path system is also a walk system), but the authors provide no way of forcing their existence in digraphs of large directed treewidth, only as a possible outcome among others.
With some work we can actually show that any digraph with $\poly(\ell  w)$ directed treewidth contains either a clean $(\ell, w)$-walk system or an $(\ell, w)$-strong system, but we omit the technical details as the process of finding a bramble with congestion $2$ using this result demands a much larger directed treewidth than the method of \cref{thm:main-proof}.

The next two results find brambles with congestion $4$ or $6$ in large enough walk systems.
They essentially find brambles in a dense configuration using \cref{cor:cyclic-extension,cor:structured-walkage} and in a sparse one using \cref{lem:sparse-bramble}.
The arguments presented here both follow loosely the main proof in~\cite{MasarikPilipczukRzazewskiSorge22}, but they simplify many steps and strengthen the final result.

\begin{lemma}\label{lem:final-bramble-clean-4}
  Given $k, w \in \N$, let
  \[
    \begin{array}{@{}r@{}l@{}l@{}}
      s'           & {} = \lISM(\dCM(k), 0.6(w^2 - w))               & {} = \Ocal(w^2 k \sqrt{\log k}), \\
      s            & {} = \lISM(\dCEC(s'), \allowbreak 0.6(w^2 - w)) & {} = \Ocal(w^4 k \sqrt{\log k}), \\
      u            & {} = \lISM(\dCE(k, s), 0.6(w^2 - w))            & {} = \Ocal(w^6 k^2 \log k), \\
      \lFBCC(k, w) & {} = \wSW(k, s, u)                              & {} = \Ocal(w^{14} k^6 \cdot \log^3 k).
    \end{array}
  \]
  Let $D$ be a digraph containing an $(\ell,w)$-walk system $\Scal = (\Acal, \Wcal)$.
  If $\ell \ge \lFBCC(k, w)$ and $w \ge \wSB(k)$, then $D$ contains a bramble of size $k$ and congestion at most $4$.
\end{lemma}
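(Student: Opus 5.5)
For each ordered pair $(i,j) \in \pairs(\Scal)$, well-linkedness of $\bigcup_i (A_i^{\inOp}\cup A_i^{\outOp})$ gives an $(A_i^{\outOp}, A_j^{\inOp})$-walkage $\Lcal_{ij}$ of size $\ell \ge \wSW(k,s,u)$. Its start and end sets together lie inside the well-linked set, so \cref{cor:structured-walkage} applies to each $\Lcal_{ij}$. Each application either gives a bramble of size $k$ and congestion $2$ (and we are done), or a subwalkage $\Lcal'_{ij}$ that is cyclic of size $s$, or one that is untangled of size $u$. Let $J_c$ and $J_u$ be the pairs with cyclic and untangled outcomes, respectively. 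Since $J_c \cup J_u = \pairs(\Scal)$ has size $w^2-w$, the plan is to aim for a set $J$ of size at least $0.6(w^2-w)$ together with one walk from each $\Lcal'_{ij}$, $(i,j)\in J$, all pairwise disjoint. These walks form the walkage $\Tcal$ required by \cref{lem:sparse-bramble}, which then gives congestion at most $4$ because $w \ge \wSB(k)$.

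I would handle this as a dense-versus-sparse dichotomy on each class separately. There is a mismatch: $0.6$ is more than half, so picking the larger of $J_c$, $J_u$ is not enough. The plan therefore has to combine both classes into a single independent-set application, with colour classes indexed by all pairs in a set $J$ of size at least $0.6(w^2-w)$ (I simply take $J = \pairs(\Scal)$).

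\textbf{Dense cases.} If two untangled walkages $\Lcal'_{ij}, \Lcal'_{i'j'}$ have $\degen(\intersec(\Lcal'_{ij}\cup\Lcal'_{i'j'})) \ge \dCE(k,s)$, then \cref{cor:cyclic-extension} gives a bramble or cyclic subwalkages of size $s$. In the latter case I replace $\Lcal'_{ij}$ by the cyclic subwalkage of size $s$, moving the pair to $J_c$, and repeat. This terminates since $J_u$ shrinks. Next, if an untangled $\Lcal'_{ij}$ and a cyclic $\Lcal'_{i'j'}$ (both of size at least $s$) have intersection degeneracy at least $\dCEC(s')$, then \cref{lem:cyclic-extension-congestion} gives $s'$ closed walks of congestion at most $2$. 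This leaves a $2$-cyclic family, to which I would apply \cref{prop:dense-bramble} directly if dense, and otherwise thin it to a cyclic subwalkage of size $s'$ via \cref{lem:closed-walkage}; the size $s'$ is matched to $\lISM(\dCM(k),\cdot)$. Finally, if two cyclic walkages intersect with degeneracy at least $\dCM(k)$, then their closed walks have congestion $2$, and \cref{prop:dense-bramble} gives a bramble of congestion $2$.

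\textbf{Sparse case.} When all pairwise intersection graphs are sparse, the thresholds are chosen so that a single application of \cref{prop:independent-set-multipartite} with $r \le 0.6(w^2-w)$ colour classes succeeds. The nested sizes $u \ge \lISM(\dCE(k,s),\cdot)$, $s \ge \lISM(\dCEC(s'),\cdot)$ and $s' \ge \lISM(\dCM(k),\cdot)$ reflect the three pair types. I would restrict all classes to a common size and use the largest degeneracy bound present, after possibly shrinking to $0.6(w^2-w)$ pairs. This yields pairwise disjoint walks, one per pair in $J$, which is exactly the hypothesis of \cref{lem:sparse-bramble}.

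\textbf{Main obstacle.} The hardest part is the bookkeeping of sizes and degeneracy thresholds when mixing cyclic and untangled classes. Each upgrade (untangled to cyclic of size $s$, then of size $s'$) must leave classes still large enough for the final LLL step with mixed degeneracy bounds. The shrinkage must be monotone: when a class is replaced by a smaller cyclic one, earlier sparse comparisons stay valid, since degeneracy of subgraphs does not increase. Only the new comparisons need rechecking, and termination holds because each pair is upgraded at most twice. I would first try the ordering above: untangled–untangled, then untangled–cyclic, then cyclic–cyclic.
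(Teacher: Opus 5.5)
\textbf{There is a genuine gap in your sparse step.} A single combined application of \cref{prop:independent-set-multipartite} cannot work with the prescribed sizes. That proposition takes one uniform degeneracy bound $d$ and needs every colour class to have size $\lISM(d,r)$.

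In your final configuration the untangled--untangled pairs are only known to be $\dCE(k,s)$-sparse. A mixed application would therefore need every class to have size $\lISM(\dCE(k,s),r)=u$, but the cyclic classes have size only $s$ (or $s'$). Restricting all classes to a common size does not improve the known degeneracy bounds, so ``use the largest degeneracy bound present'' fails. The same mismatch hits your cyclic classes of size $s'$: they are matched to $\dCM(k)$, yet they sit next to untangled classes that are only $\dCEC(s')$-sparse, and $\lISM(\dCEC(s'),r)=s>s'$.

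Your thinning step is also quantitatively wrong. At threshold $\dCEC(s')$, \cref{lem:cyclic-extension-congestion} yields only $s'$ closed walks. \cref{lem:closed-walkage} then gives a cyclic subwalkage of size about $s'/\dCM(k)$, not $s'$. Getting size $s'$ would need $\wCW(k,s')$ walks, i.e.\ threshold $\dCE(k,s')$. But then the surviving untangled pairs would be too weakly sparse against the cyclic classes of size $s$.

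\textbf{The missing idea is not to mix at all.} Your inference that $0.6>1/2$ forces one combined application is where it goes wrong. The paper uses three overlapping homogeneous candidate sets instead of a two-part partition.

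Let $Z_1$ be the cyclic pairs. Let $Z_2$ be the untangled pairs that are $\dCEC(s')$-dense with some pair of either type. For each pair in $Z_2$, \cref{lem:untangled-cyclic-extension-congestion,lem:cyclic-extension-congestion} give a $2$-cyclic $\Lcal^2_{ij}$ of size $s'$, which is kept $2$-cyclic rather than thinned.

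Since $Z_1\cap Z_2=\emptyset$, every pair is counted at least three times in $2|Z\setminus Z_1|+2|Z_1\cup Z_2|+|Z\setminus Z_2|$. Hence one of the three sets has size at least $0.6|Z|$, and each is homogeneous:
\begin{itemize}
\item $Z\setminus Z_1$ is all untangled, with sizes $u$ and bound $\dCE(k,s)$.
\item $Z_1\cup Z_2$ consists of closed families of congestion at most $2$ and size at least $s'$. In the dense case, \cref{prop:dense-bramble} gives congestion at most $4$, which is acceptable here. Otherwise the bound is $\dCM(k)$, which matches $s'$.
\item On $Z\setminus Z_2$, after handling dense cyclic--cyclic pairs via \cref{prop:dense-bramble}, every pair is $\dCEC(s')$-sparse and every class has size at least $s$.
\end{itemize}
Each case then needs only one application of \cref{prop:independent-set-multipartite}, whose output feeds directly into \cref{lem:sparse-bramble}.
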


\begin{proof}
  Let $Z = \pairs(\Scal)$ and let $\Lcal'_{ij}$ be an $(A_i^{\outOp}, A_j^{\inOp})$-walkage for all $(i, j) \in Z$, which exists by well-linkedness.
  As $\ell \ge \wSW(k, s, u)$, we apply \cref{cor:structured-walkage} on $\Lcal'_{ij}$ for each $(i, j) \in Z$ to get a bramble of size $k$ and congestion $2$, and we are done, or a subwalkage $\Lcal_{ij} \subseteq \Lcal'_{ij}$ that is either cyclic and has size $s$ or untangled and has size $u$.
  We may assume that every $(i, j) \in Z$ which is untangled satisfies $\degen(\intersec(\Lcal_{ij} \cup \Lcal_{i'j'})) < \dCE(k, s)$ for all $(i', j') \in Z$.
  Indeed, if this is not the case, we may apply \cref{cor:cyclic-extension} to get a bramble of size $k$ and congestion $2$, and we are done, or a walkage $\Lcal''_{ij} \subseteq \Lcal_{ij}$ of size $s$ that is cyclic which we may use as $\Lcal_{ij}$ instead.

  We say that a pair $(i, j) \in Z$ is \emph{cyclic} or \emph{untangled} if $\Lcal_{ij}$ is cyclic or untangled, respectively.
  Let $Z_1 \subseteq Z$ be the set of cyclic pairs.
  Let $Z_2 \subseteq Z$ be the set of untangled pairs $(i,j)$ such that there exists $(i', j') \in Z$ for which $\degen(\intersec(\Lcal_{ij} \cup \Lcal_{i'j'})) \ge \dCEC(s')$.
  For each $(i, j) \in Z_2$, let $\Lcal^2_{ij} \subseteq \Lcal_{ij}$ be a walkage of size $s'$ that is $2$-cyclic, which exists by \cref{lem:untangled-cyclic-extension-congestion,lem:cyclic-extension-congestion}.

  We now show that one of the sets $Z \setminus Z_1$, $Z_1 \cup Z_2$ or $Z \setminus Z_2$ has size at least $0.6|Z|$.
  To do so, note that if none of these three cases occurs, then $2|Z \setminus Z_1| + 2|Z_1 \cup Z_2| + |Z \setminus Z_2| < 3|Z|$, so it suffices to show that $2|Z \setminus Z_1| + 2|Z_1 \cup Z_2| + |Z \setminus Z_2| \ge 3|Z|$.
  For the following analysis, note that $Z_1$ and $Z_2$ are disjoint by definition.
  We consider how each of the terms $2|Z \setminus Z_1|$, $2|Z_1 \cup Z_2|$ and $|Z \setminus Z_2|$ count the pairs in $Z$.
  Each pair in $Z_1$ is counted three times: twice in $2|Z_1 \cup Z_2|$ and once in $|Z \setminus Z_2|$.
  Each pair in $Z_2$ is counted four times: twice in $2|Z \setminus Z_1|$ and twice in $2|Z_1 \cup Z_2|$.
  Finally, each pair in $Z \setminus (Z_1 \cup Z_2)$ is counted three times: twice in $2|Z \setminus Z_1|$ and once in $|Z \setminus Z_2|$.
  Since each pair in $Z$ is counted at least $3$ times, we have $2|Z \setminus Z_1| + 2|Z_1 \cup Z_2| + |Z \setminus Z_2| \ge 3|Z|$.

  We find the bramble in each of the three cases.
  First, consider $|Z \setminus Z_1| \ge 0.6|Z|$.
  Since $Z_1$ consists of the set of cyclic pairs, every pair $(i, j) \in Z \setminus Z_1$ is untangled.
  Furthermore, $\degen(\intersec(\Lcal_{ij} \cup \Lcal_{i'j'})) < \dCE(k, s)$, for every $(i, j), (i', j') \in Z \setminus Z_1$, by construction.
  Therefore, as $u \ge \lISM(\dCE(k, s), 0.6(w^2 - w))$, we can use \cref{prop:independent-set-multipartite} to get a walkage $\Tcal$ that contains one walk from $\Lcal_{ij}$ for each $(i, j) \in Z \setminus Z_1$.
  As $w \ge \wSB(k)$, we can apply \cref{lem:sparse-bramble} on $\Scal$ and $\Tcal$ to get a bramble of size $k$ and congestion at most $4$.

  We now consider $|Z_1 \cup Z_2| \ge 0.6|Z|$.
  By considering $\Lcal^2_{ij} = \Lcal_{ij}$ when $(i,j) \in Z_1$, it follows that $\Lcal^2_{ij}$ has size at least $s'$ and is either cyclic or $2$-cyclic.
  Hence, $\Lcal^2_{ij}$ originates a closed walkage $\Zcal_{ij}$ of congestion at most $2$.
  If $\degen(\intersec(\Zcal_{ij} \cup \Zcal_{i'j'})) \ge \dCM(k)$ for some $(i, j), (i', j') \in Z_1 \cup Z_2$, we get a bramble of size $k$ and congestion at most $4$ by \cref{prop:dense-bramble}.
  Otherwise, as $s' \ge \lISM(\dCM(k), 0.6(w^2 - w)) \ge \dCM(k)$, there exists a walkage $\Tcal$ that contains one walk from $\Lcal^2_{ij}$ for each $(i, j) \in Z_1 \cup Z_2$ by \cref{prop:independent-set-multipartite}.
  As $w \ge \wSB(k)$, we can apply again \cref{lem:sparse-bramble} on $\Scal$ and $\Tcal$ to get a bramble of size $k$ and congestion at most $4$.

  Finally, we consider $|Z \setminus Z_2| \ge 0.6|Z|$.
  If $\degen(\intersec(\Lcal_{ij} \cup \Lcal_{i'j'})) \ge \dCM(k)$ for any $(i, j), (i', j') \in Z_1$, we get a bramble of size $k$ and congestion $2$, by \cref{prop:dense-bramble}.
  Thus, we may assume $\degen(\intersec(\Lcal_{ij} \cup \Lcal_{i'j'})) < \dCM(k) < \dCEC(s')$ for every $(i, j), (i', j') \in Z \setminus Z_2$.
  Indeed, this is true when either pair is untangled, as they are not in $Z_2$.
  As $s, u \ge \lISM(\dCEC(s'), 0.6(w^2 - w))$, there exists a walkage $\Tcal$ that contains one walk from $\Lcal_{ij}$ for each $(i, j) \in Z \setminus Z_2$ by \cref{prop:independent-set-multipartite}.
  As $w \ge \wSB(k)$, we can apply \cref{lem:sparse-bramble} on $\Scal$ and $\Tcal$ to get a bramble of size $k$ and congestion at most $4$.
\end{proof}

The proof of \cref{lem:final-bramble-clean-6} is very similar to that of \cref{lem:final-bramble-clean-4}, so we skip some technical details already present in the proof above for simplicity but repeat the core ideas for completeness.

\begin{lemma}\label{lem:final-bramble-clean-6}
  Given $k, w \in \N$, let
  \[
    \begin{array}{@{}r@{}l@{}l@{}}
      s'          & {} = \lISM(\dCM(k), 0.6(w^2 - w))               & {} = \Ocal(w^2 k \sqrt{\log k}), \\
      s           & {} = \lISM(\dCEC(s'), \allowbreak 0.6(w^2 - w)) & {} = \Ocal(w^4 k \sqrt{\log k}), \\
      u           & {} = \lISM(\dCEC(s), 0.6(w^2 - w))              & {} = \Ocal(w^6 k \sqrt{\log k}), \\
      \lFBC(k, w) & {} = \wSWC(s, u)                                & {} = \Ocal(w^{14} k^3 \log^{3/2} k).
    \end{array}
  \]
  Let $D$ be a digraph containing an $(\ell,w)$-walk system $\Scal = (\Acal, \Wcal)$.
  If $\ell \ge \lFBC(k, w)$ and $w \ge \wSB(k)$, then $D$ contains a bramble of size $k$ and congestion at most $6$.
\end{lemma}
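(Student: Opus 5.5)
Following the proof of \cref{lem:final-bramble-clean-4}, we set $Z = \pairs(\Scal)$ and, for every $(i,j) \in Z$, take an $(A_i^{\outOp}, A_j^{\inOp})$-walkage $\Lcal'_{ij}$ of size $\ell$, which exists by well-linkedness. The difference is that we apply \cref{lem:structured-walkage-congestion} instead of \cref{cor:structured-walkage}. Since $\ell \ge \wSWC(s,u)$, this gives a subwalkage $\Lcal_{ij}$ that is either $2$-cyclic of size $s$ or untangled of size $u$. We never clean the $2$-cyclic walkages, and we use \cref{lem:untangled-cyclic-extension-congestion,lem:cyclic-extension-congestion} with $\dCEC$ in place of \cref{cor:cyclic-extension}. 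This is why $u$ only needs $\dCEC(s)$ and the $k$-dependence drops, at the price of congestion $6$.

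\textbf{Step 1: absorbing dense untangled pairs.} For an untangled pair $(i,j)$, suppose $\degen(\intersec(\Lcal_{ij}\cup\Lcal_{i'j'})) \ge \dCEC(s)$ for some $(i',j') \in Z$, with $\Lcal_{i'j'}$ untangled or $2$-cyclic. Then \cref{lem:untangled-cyclic-extension-congestion} or \cref{lem:cyclic-extension-congestion} (with $c=2$) yields a subwalkage of $\Lcal_{ij}$ of size $s$ that is $3$-cyclic.

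\textbf{Step 2: the three classes.} Let $Z_1$ be the pairs whose walkage is now $c$-cyclic of size $s$ with $c \le 3$. Let $Z_2$ be the untangled pairs that are still dense with some other pair at threshold $\dCEC(s')$; by \cref{lem:untangled-cyclic-extension-congestion,lem:cyclic-extension-congestion} each gets a subwalkage of size $s'$ that is $c$-cyclic with $c\le 4$. The counting argument of \cref{lem:final-bramble-clean-4} shows that one of $Z\setminus Z_1$, $Z_1\cup Z_2$, $Z\setminus Z_2$ has size at least $0.6|Z|$.

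\textbf{Step 3: each case.} In each case the conclusion comes from \cref{prop:independent-set-multipartite}, which produces a walkage $\Tcal$ with one walk per pair. Feeding $\Tcal$ into \cref{lem:sparse-bramble} (using $w\ge\wSB(k)$) gives congestion at most $4$.
\begin{itemize}
\item[(i)] On $Z\setminus Z_1$ all pairs are untangled with pairwise degeneracy below $\dCEC(s)$, and $u = \lISM(\dCEC(s),\cdot)$ suffices for \cref{prop:independent-set-multipartite}.
\item[(ii)] On $Z_1\cup Z_2$ we look at the closed walk families.
\item[(iii)] On $Z\setminus Z_2$ any pair involving an untangled walkage has degeneracy below $\dCEC(s')$. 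For two cyclic pairs we again look at the closed walk families.
\end{itemize}
The only other way a bramble appears is when some closed walk families $\Zcal_{ij}\cup\Zcal_{i'j'}$ have degeneracy at least $\dCM(k)$. Then \cref{prop:dense-bramble} gives a bramble whose congestion is the sum of the two congestions. If the families are sparse, we select disjoint walks using $s,s' \ge \lISM(\cdot)$.

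\textbf{Main obstacle: the congestion count.} The bramble from \cref{prop:dense-bramble} has congestion equal to that of $\Zcal_{ij}\cup\Zcal_{i'j'}$. We must arrange that this never exceeds $6$, so a $4$-cyclic family may not be combined with another $4$-cyclic family. The plan is to sharpen the definition of $Z_2$. A pair enters $Z_2$ only through an untangled--untangled or an untangled--$2$-cyclic dense pair; by \cref{lem:untangled-cyclic-extension-congestion,lem:cyclic-extension-congestion} this gives congestion at most $3$. Pairs in $Z_1$ are also at most $3$-cyclic. Any two combined families then have congestion at most $3+3=6$. Checking that every case (ii) and (iii) combination respects this bound, and that the constants $s'$, $s$, $u$ are consistent with the order in which the lemmas are invoked, is where I expect the real work. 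If it fails, the fallback is to first run \cref{lem:closed-walkage} on the $3$-cyclic families before taking unions. Its only cost is an extra $\dCM(k)$ factor, which is absorbed since $s \ge \lISM(\dCEC(s'),\cdot)$ already dominates.
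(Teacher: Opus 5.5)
Your proposal is correct and is essentially the paper's proof: \cref{lem:structured-walkage-congestion}, absorbing dense untangled pairs via the bowtie lemmas at threshold $\dCEC(s)$, the classes $Z_1, Z_2$ with the same $0.6|Z|$ counting, and in each case either \cref{prop:dense-bramble} on two closed families of congestion at most $3$ or \cref{prop:independent-set-multipartite} followed by \cref{lem:sparse-bramble}. Your congestion accounting is in fact more accurate than the paper's, which calls the Step~1 outputs $2$-cyclic although \cref{lem:cyclic-extension-congestion} against a $2$-cyclic partner only yields $3$-cyclic. Your sharpened $Z_2$ also works without the fallback, which would not be free anyway, since \cref{lem:closed-walkage} multiplies the required sizes by $\dCM(k)$. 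The reason is that $\intersec(\Lcal_{ij}\cup\Lcal_{i'j'})$ is an induced subgraph of the intersection graph taken with the original untangled or $2$-cyclic walkage of $(i',j')$, so you may always apply the bowtie lemmas to that original and obtain at most $3$-cyclic walkages, making every union in cases (ii) and (iii) at most $3+3=6$.
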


\begin{proof}
  Let $Z = \pairs(\Scal)$ and let $\Lcal'_{ij}$ be an $(A_i^{\outOp}, A_j^{\inOp})$-walkage for all $(i, j) \in Z$, which exists by well-linkedness.
  As $\ell \ge \wSWC(s, u)$, we apply \cref{lem:structured-walkage-congestion} on $\Lcal'_{ij}$ for each $(i, j) \in Z$ to get a subwalkage $\Lcal_{ij} \subseteq \Lcal'_{ij}$ that is either $2$-cyclic and has size $s$ or untangled and has size $u$.
  We may assume that every $(i, j) \in Z$ which is untangled satisfies $\degen(\intersec(\Lcal_{ij} \cup \Lcal_{i'j'})) < \dCEC(s)$ for all $(i', j') \in Z$.
  Indeed, if this is not the case, we may apply \cref{lem:cyclic-extension-congestion} to get a walkage $\Lcal''_{ij} \subseteq \Lcal_{ij}$ of size $s$ that is $2$-cyclic which we may use as $\Lcal_{ij}$ instead.

  We say that a pair $(i, j) \in Z$ is \emph{cyclic} or \emph{untangled} if $\Lcal_{ij}$ is $2$-cyclic or untangled, respectively.
  Let $Z_1 \subseteq Z$ be the set of cyclic pairs.
  Let $Z_2 \subseteq Z$ be the set of untangled pairs $(i,j)$ such that there exists $(i', j') \in Z$ for which $\degen(\intersec(\Lcal_{ij} \cup \Lcal_{i'j'})) \ge \dCEC(s')$.
  For each $(i, j) \in Z_2$, let $\Lcal^2_{ij} \subseteq \Lcal_{ij}$ be a walkage of size $s'$ that is either $2$- or $3$-cyclic, which exists by \cref{lem:untangled-cyclic-extension-congestion,lem:cyclic-extension-congestion}.

  Again, one of the sets $Z \setminus Z_1$, $Z_1 \cup Z_2$ or $Z \setminus Z_2$ has size at least $0.6|Z|$.
  We find the bramble in each of the three cases.
  First, consider $|Z \setminus Z_1| \ge 0.6|Z|$.
  Since $Z_1$ consists of the set of cyclic pairs, every pair $(i, j) \in Z \setminus Z_1$ is untangled.
  Furthermore, $\degen(\intersec(\Lcal_{ij} \cup \Lcal_{i'j'})) < \dCEC(s)$, for every $(i, j), (i', j') \in Z \setminus Z_1$, by construction.
  Therefore, as $u \ge \lISM(\dCEC(s), 0.6(w^2 - w))$, we can use \cref{prop:independent-set-multipartite} to get a walkage $\Tcal$ that contains one walk from $\Lcal_{ij}$ for each $(i, j) \in Z \setminus Z_1$.
  As $w \ge \wSB(k)$, we can apply \cref{lem:sparse-bramble} on $\Scal$ and $\Tcal$ to get a bramble of size $k$ and congestion at most~$4$.

  We now consider $|Z_1 \cup Z_2| \ge 0.6|Z|$.
  By considering $\Lcal^2_{ij} = \Lcal_{ij}$ when $(i,j) \in Z_1$, it follows that $\Lcal^2_{ij}$ has size at least $s'$ and is either $2$- or $3$-cyclic.
  Hence, $\Lcal^2_{ij}$ originates a closed walkage $\Zcal_{ij}$ of congestion at most $3$.
  If $\degen(\intersec(\Zcal_{ij} \cup \Zcal_{i'j'})) \ge \dCM(k)$ for some $(i, j), (i', j') \in Z_1 \cup Z_2$, we get a bramble of size $k$ and congestion at most $6$ by \cref{prop:dense-bramble}.
  Otherwise, as $s' \ge \lISM(\dCM(k), 0.6(w^2 - w)) \ge \dCM(k)$, there exists a walkage $\Tcal$ that contains one walk from $\Lcal^2_{ij}$ for each $(i, j) \in Z_1 \cup Z_2$ by \cref{prop:independent-set-multipartite}.
  As $w \ge \wSB(k)$, we can apply again \cref{lem:sparse-bramble} on $\Scal$ and $\Tcal$ to get a bramble of size $k$ and congestion at most $4$.

  Finally, we consider $|Z \setminus Z_2| \ge 0.6|Z|$.
  If $\degen(\intersec(\Lcal_{ij} \cup \Lcal_{i'j'})) \ge \dCM(k)$ for any $(i, j), (i', j') \in Z_1$, we get a bramble of size $k$ and congestion $2$, by \cref{prop:dense-bramble}.
  Thus, we may assume $\degen(\intersec(\Lcal_{ij} \cup \Lcal_{i'j'})) < \dCM(k) < \dCEC(s')$ for every $(i, j), (i', j') \in Z\setminus Z_2$.
  Indeed, this is true when either pair is untangled, as they are not in $Z_2$.
  As $s, u \ge \lISM(\dCEC(s'), 0.6(w^2 - w))$, there exists a walkage $\Tcal$ that contains one walk from $\Lcal_{ij}$ for each $(i, j) \in Z \setminus Z_2$ by \cref{prop:independent-set-multipartite}.
  As $w \ge \wSB(k)$, we can apply \cref{lem:sparse-bramble} on $\Scal$ and $\Tcal$ to get a bramble of size $k$ and congestion at most $4$.
\end{proof}

The following two theorems are analogous to \cref{thm:main-proof} but for higher congestion.
Their proofs are again very similar.

\begin{theorem}\label{thm:main-proof-congestion-4-masarik}
  Given $k \in \N$, let $\dBCCM(k) = \wLP(2\wSB(k)\lFBCC(k, \wSB(k))) + 1 = \Ocal(k^{72} \cdot \log^{21} k)$.
  Any digraph $D$ with $\dtw(D) \ge \dBCCM(k)$ contains a bramble of size $k$ and congestion at most $4$.
\end{theorem}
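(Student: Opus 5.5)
We would mirror the proof of \cref{thm:main-proof}, building a walk system and then invoking \cref{lem:final-bramble-clean-4}. Set $w = \wSB(k)$ and $\ell = \lFBCC(k, w)$. Since $\dtw(D) \ge \wLP(2w\ell) + 1$, the first outcome of \cref{prop:linked-path} (an arboreal decomposition of width at most $\wLP(2w\ell)$) is impossible. Hence \cref{prop:linked-path} yields a path $P$ and a well-linked set $A \subseteq V(P)$ of size $2w\ell$.

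Order the vertices of $A$ by their appearance along $P$ and split this sequence into $2w$ consecutive blocks of $\ell$ vertices each. For $i \in [w]$, let $A_i^{\inOp}$ be block $2i-1$ and $A_i^{\outOp}$ be block $2i$. Let $W_i$ be the subpath of $P$ from the first vertex of $A_i^{\inOp}$ to the last vertex of $A_i^{\outOp}$.

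We then check the conditions of \cref{def:walk-system}.
\begin{itemize}
\item The subpaths $W_1, \dots, W_w$ occupy disjoint consecutive segments of the path $P$, so they are pairwise vertex-disjoint paths, hence walks, and $\Wcal = \{W_1, \dots, W_w\}$ is a walkage.
\item The sets $A_i^{\inOp} \cup A_i^{\outOp}$ are pairwise disjoint, each $A_i^{\inOp}$ and $A_i^{\outOp}$ has size $\ell$, and both lie in $V(W_i)$.
\item Every vertex of $A_i^{\inOp}$ precedes every vertex of $A_i^{\outOp}$ along $W_i$.
\item The union $\bigcup_i (A_i^{\inOp} \cup A_i^{\outOp}) = A$ is well-linked.
\end{itemize}
So $(\Acal, \Wcal)$ is an $(\ell, w)$-walk system. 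Applying \cref{lem:final-bramble-clean-4} with $\ell \ge \lFBCC(k, w)$ and $w \ge \wSB(k)$ gives a bramble of size $k$ and congestion at most $4$.

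The main remaining work is the asymptotic bound. We have $\wSB(k) = \Ocal(k^2\sqrt{\log k})$, and $\lFBCC(k, w) = \Ocal(w^{14} k^6 \log^3 k)$. Substituting, $\ell = \Ocal(k^{34}\log^{10} k)$, so $2w\ell = \Ocal(k^{36}\log^{10.5} k)$. Since $\wLP(x) = \Ocal(x^2)$, this gives $\Ocal(k^{72}\log^{21} k)$, as claimed. The only subtle point is that the fractional powers of $\log k$ coming from $\dCM$ must be tracked consistently with the constants in the earlier lemmas. Structurally nothing is hard, since all the difficulty sits inside \cref{lem:final-bramble-clean-4}.
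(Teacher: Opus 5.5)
Your proposal is correct and follows the paper's proof essentially step for step. The paper likewise takes a well-linked set $A \subseteq V(P)$ of size $2\wSB(k)\lFBCC(k,\wSB(k))$ from \cref{prop:linked-path} and cuts it along $P$ into $\wSB(k)$ consecutive groups of $2\lFBCC(k,\wSB(k))$ vertices, whose first and last halves are $A_i^{\inOp}$ and $A_i^{\outOp}$. It takes $W_i$ as the corresponding subpath of $P$ and applies \cref{lem:final-bramble-clean-4} to the resulting walk system, and your asymptotic computation yielding $\Ocal(k^{72}\log^{21}k)$ is also correct.
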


\begin{proof}
  Let $w = \wSB(k)\cdot 2\lFBCC(k, \wSB(k))$.
  As $\dtw(D) \ge \wLP(w) + 1$, we get a path $P$ and a well-linked set $A \subseteq V(P)$ of size $w$ by \cref{prop:linked-path}.
  Order the vertices of $A$ according to their appearance in $P$.
  For each $i \in [\wSB(k)]$, let $W_i$ be the walk that starts at the $((i-1) \cdot 2\lFBCC(k, \wSB(k)) + 1)$-th vertex of $A$ in $P$ and ends at the $(i \cdot 2\lFBCC(k, \wSB(k)))$-th vertex of $A$ in $P$.
  As $w \ge \wSB(k) \cdot 2\lFBCC(k, \wSB(k))$, there are always $\wSB(k)$ such walks.
  Define $\Wcal = \{W_1, W_2, \dots, \allowbreak W_{\wSB(k)}\}$ and note that $\Wcal$ is a $2\lFBCC(k, \wSB(k))$-linked walkage of size $\wSB(k)$.

  By well-linkedness, there is an $\lFBCC(k, \wSB(k))$-walkage from the last $\lFBCC(k, \wSB(k))$ vertices of $A$ in $W_i$ to the first $\lFBCC(k, \wSB(k))$ vertices of $A$ in $W_j$ for any distinct $i, j \in [\wSB(k)]$.
  Let $\Acal = \{A_i^{\inOp} \cup A_i^{\outOp} \mid i \in [\wSB(k)]\}$ where $A_i^{\inOp}$ and $A_i^{\outOp}$ are the sets of the first and last $\lFBCC(k, \wSB(k))$ vertices of $A$ in $W_i$.
  Define an $(\lFBCC(k, \wSB(k)), \wSB(k))$-walk system $\Scal = (\Acal, \Wcal)$.
  Finally, apply \cref{lem:final-bramble-clean-4} to $\Scal$ and obtain a bramble of size $k$ and congestion at most $4$.
\end{proof}

\begin{theorem}\label{thm:main-proof-congestion-6-masarik}
  Given $k \in \N$, let $\dBCM(k) = \wLP(2\wSB(k)\lFBC(k, \wSB(k))) + 1 = \Ocal(k^{66} \cdot \log^{18} k)$.
  Any digraph $D$ with $\dtw(D) \ge \dBCM(k)$ contains a bramble of size $k$ and congestion at most $6$.
\end{theorem}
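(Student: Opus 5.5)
The argument mirrors the proof of \cref{thm:main-proof-congestion-4-masarik} almost line by line; only the final lemma changes, with \cref{lem:final-bramble-clean-6} replacing \cref{lem:final-bramble-clean-4}.

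First, set $w = \wSB(k) \cdot 2\lFBC(k, \wSB(k))$. Since $\dtw(D) \ge \wLP(w) + 1$, outcome (a) of \cref{prop:linked-path} cannot hold, so we obtain a path $P$ and a well-linked set $A \subseteq V(P)$ of size $w$. Order $A$ along $P$ and cut $P$ into $\wSB(k)$ consecutive subpaths $W_1, \dots, W_{\wSB(k)}$. The subpath $W_i$ runs from the $((i-1)\cdot 2\lFBC(k,\wSB(k))+1)$-th vertex of $A$ to the $(i\cdot 2\lFBC(k,\wSB(k)))$-th vertex of $A$. These subpaths are pairwise vertex-disjoint, so $\Wcal = \{W_1,\dots,W_{\wSB(k)}\}$ is a walkage. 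Each $W_i$ contains exactly $2\lFBC(k,\wSB(k))$ vertices of $A$.

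Next, build the walk system. Let $A_i^{\inOp}$ be the first $\lFBC(k,\wSB(k))$ vertices of $A$ on $W_i$, and $A_i^{\outOp}$ the last $\lFBC(k,\wSB(k))$. Every vertex of $A_i^{\inOp}$ precedes every vertex of $A_i^{\outOp}$ on $W_i$. The sets are disjoint across different $i$ because the $W_i$ are disjoint. Their union is a subset of $A$, and a subset of a well-linked set is well-linked. Hence $\Scal = (\Acal, \Wcal)$ with $\Acal = \{A_i^{\inOp} \cup A_i^{\outOp}\}$ satisfies \cref{def:walk-system}, and it is an $(\lFBC(k,\wSB(k)), \wSB(k))$-walk system.

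Finally, apply \cref{lem:final-bramble-clean-6} with $w = \wSB(k)$ and $\ell = \lFBC(k,\wSB(k))$. Both hypotheses $\ell \ge \lFBC(k,w)$ and $w \ge \wSB(k)$ hold with equality, so $D$ contains a bramble of size $k$ and congestion at most $6$.

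The only point needing care is the asymptotic bound. We have $\wSB(k) = \Ocal(k^2\sqrt{\log k})$ and $\lFBC(k,w) = \Ocal(w^{14}k^3\log^{3/2}k)$. Substituting, $\lFBC(k,\wSB(k)) = \Ocal(k^{31}\log^{17/2} k)$. The product with $\wSB(k)$ is therefore $\Ocal(k^{33}\log^{9}k)$. Since $\wLP$ is quadratic, squaring gives $\Ocal(k^{66}\log^{18}k)$, as claimed. The combinatorial content is routine given the earlier lemmas; the exponent bookkeeping is the only real obstacle.
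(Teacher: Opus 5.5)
Your proposal is correct and follows the paper's proof essentially verbatim. It uses the same choice of $w$, the same cutting of $P$ into $\wSB(k)$ subpaths each carrying $2\lFBC(k,\wSB(k))$ vertices of $A$, the same walk system, and a final application of \cref{lem:final-bramble-clean-6}. Your checks of the walk-system axioms and of the exponents ($\Ocal(k^{33}\log^{9}k)$ before applying the quadratic $\wLP$) are accurate details the paper leaves implicit.
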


\begin{proof}
  Let $w = \wSB(k)\cdot 2\lFBC(k, \wSB(k))$.
  As $\dtw(D) \ge \wLP(w) + 1$, we get a path $P$ and a well-linked set $A \subseteq V(P)$ of size $w$ by \cref{prop:linked-path}.
  Order the vertices of $A$ according to their appearance in $P$.
  For each $i \in [\wSB(k)]$, let $W_i$ be the walk that starts at the $((i-1) \cdot 2\lFBC(k, \wSB(k)) + 1)$-th vertex of $A$ in $P$ and ends at the $(i \cdot 2\lFBC(k, \wSB(k)))$-th vertex of $A$ in $P$.
  As $w \ge \wSB(k) \cdot 2\lFBC(k, \wSB(k))$, there are always $\wSB(k)$ such walks.
  Define $\Wcal = \{W_1, W_2, \dots, \allowbreak W_{\wSB(k)}\}$ and note that $\Wcal$ is a $2\lFBC(k, \wSB(k))$-linked walkage that has size $\wSB(k)$.

  By well-linkedness, there is an $\lFBC(k, \wSB(k))$-walkage from the last $\lFBC(k, \wSB(k))$ vertices of $A$ in $W_i$ to the first $\lFBC(k, \wSB(k))$ vertices of $A$ in $W_j$ for any distinct $i, j \in [\wSB(k)]$.
  Let $\Acal = \{A_i^{\inOp} \cup A_i^{\outOp} \mid i \in [\wSB(k)]\}$ where $A_i^{\inOp}$ and $A_i^{\outOp}$ are the sets of the first and last $\lFBC(k, \wSB(k))$ vertices of $A$ in $W_i$.
  Define an $(\lFBC(k, \wSB(k)), \wSB(k))$-walk system $\Scal = (\Acal, \Wcal)$.
  Finally, we apply \cref{lem:final-bramble-clean-6} to $\Scal$ and obtain a bramble of size $k$ and congestion at most $6$.
\end{proof}

We remark that the discussion after \cref{thm:main-proof} concerning the construction of the bramble with congestion $2$ also applies for \cref{thm:main-proof-congestion-4-masarik,thm:main-proof-congestion-6-masarik}.
Moreover, \cref{thm:main-proof-congestion-4-masarik,thm:main-proof-congestion-6-masarik} can be used to prove analogous versions of \cref{thm:k-2-linked-digraphs,thm:XP-algo-DDP-k-strong} that give better bounds on $\fref{thm:k-2-linked-digraphs}(k)$ and $\fref{thm:XP-algo-DDP-k-strong}(k)$ but for \textsc{Directed Disjoint Paths with Congestion} $c$ for $c = 4$ and $c = 6$, respectively.

\clearpage

\printbibliography{}

\end{document}